\documentclass[onecolumn,journal,draftclsnofoot,romanappendices]{IEEEtran}

\usepackage{epsfig}
\usepackage{amsmath}
\usepackage{amssymb}
\usepackage{amsfonts}
\usepackage{graphics}
\usepackage{subfigure}

\newtheorem{theorem}{Theorem}
\newtheorem{lemma}{Lemma}

\newtheorem{corollary}{Corollary}
\newtheorem{definition}{Definition}

\DeclareMathOperator*{\argmin}{arg\,min}
\DeclareMathOperator*{\argmax}{arg\,max}


\begin{document}
\title{Relay Selection with Partial Information in Wireless Sensor Networks}
\author{
K. P. Naveen,~\IEEEmembership{Student Member,~IEEE} and
Anurag Kumar,~\IEEEmembership{Fellow,~IEEE}
\thanks{Both the authors are with the {Dept. of Electrical Communication Engineering,}
   	{Indian Institute of Science, Bangalore 560 012, India.}
   	Email:\{naveenkp, anurag\}@ece.iisc.ernet.in}
\thanks{This research was supported in part by a project on Wireless Sensor
Networks for Intrusion Detection, funded by DRDO,
Government of India, and in part by
IFCPAR (Indo-French Center for the Promotion of Advanced Research) (Project 4000-IT-1).}
}
\maketitle

\begin{abstract}
Our work is motivated by geographical forwarding of sporadic alarm
packets to a base station in a wireless sensor network (WSN), where 
the nodes are sleep-wake cycling periodically and asynchronously.
When a node (referred to as the \emph{source}) gets a packet to forward, either by detecting an event 
or from an upstream node, it has to wait for its
neighbors in a \emph{forwarding set} (referred to as \emph{relays}) to wake-up.
Each of the relays is associated with a  
random  reward (e.g., the progress made towards the sink) that is independent and 
identically distributed (iid). To begin with, the source is uncertain about the number of relays, their wake-up times
and the reward values, but knows their distributions. 
At each relay wake-up instant, when a relay reveals its reward value,
the source's problem is to forward the packet
or to wait for further relays to wake-up. In this setting, we seek to minimize the expected
waiting time at the source subject to a lower bound on the average reward. 
In terms of the operations research literature, our work
can be considered as a variant of the \emph{asset selling problem}.
We formulate the relay selection problem as a partially observable Markov decision process
(POMDP), where the unknown state is the number of relays.  
We begin by considering the case where the source knows the number of relays.
For the general case, where the source only knows 
a probability mass function (pmf) on the number of relays,
it has to  maintain a posterior pmf on the number of
relays and forward the packet iff the pmf is in an
\emph{optimum stopping set}.  We show that the optimum stopping set
is convex and obtain an \emph{inner bound} to this set.  We prove a monotonicity
result which yields an \emph{outer bound}. The computational complexity of the 
above policies motivates us to formulate an alternative \emph{simplified} model,
the optimal policy for which is a \emph{simple threshold rule}.
We provide simulation results to compare the performance of the inner and
outer bound policies against the simple policy, and against the optimal
policy when the source knows the exact number of relays.
Observing the simplicity and the good performance of the simple policy, 
we heuristically employ it for \emph{end-to-end packet forwarding} at each hop in a multihop 
WSN of sleep-wake cycling nodes. 
\end{abstract}
\begin{keywords}
Relay selection, wireless sensor networks, sleep-wake cycling, 
partially observable Markov decision process (POMDP), asset selling problem.
\end{keywords}

\section{Introduction}
\label{introduction}
We are interested in the problem of packet forwarding in a class
of wireless sensor networks (WSNs) in which local inferences based on
sensor measurements could result in the generation of occasional
``alarm'' packets that need to be routed to a base-station, where some
sort of action could be taken \cite{kim-etal09optimal-anycast,cao-etal05optimal-sleep-scheduling, 
premkumar-etal09distributed-detection-localization}. Such a situation could arise, for
example, in a WSN for human intrusion detection or fire detection in a
large region.  Such WSNs often need to run on batteries or on harvested
energy and, hence, must be energy conscious in all their operations.
The nodes of such a WSN would be sleep-wake cycling, waking up
periodically to perform their tasks. One approach for the forwarding
problem is to use a distributed algorithm to schedule the sleep-wake
cycles of the nodes such that the delay of a packet from its source to
the sink on a multihop path is minimized \cite{cao-etal05optimal-sleep-scheduling,lu-etal05delay-efficient}.
An organizational phase is required for such algorithms, which increases the protocol overhead and
moreover the scheduling algorithm has to be rerun periodically since the 
clocks at different nodes drift at different rates (so that the previously computed
schedule would have become stale after long operation time). 
For a survey of routing
techniques in wireless sensor and ad hoc networks and their
classification, see \cite{akkaya-younis05routing-protocols-survey,mauve-etal01survey-position-routing}.

In this paper we are concerned with the sleep-wake cycling approach
that permits the nodes to wake-up independently of each other even though each node
is waking up periodically, i.e.,
asynchronous periodic sleep-wake cycling \cite{naveen-kumar10geographical-forwarding,kim-etal09optimal-anycast}. 
In fact, given the need for a long network life-time, nodes are more likely to be sleeping than awake. 
In such a situation, when a node has a packet to forward, it has to wait
for its neighbors to wake up.  When a neighbor node wakes up, the
forwarding node can evaluate it for its use as a relay, e.g., in terms
of the progress it makes towards the destination node, the quality of
the channel to the relay, the energy level of the relay, etc.,
(see \cite{chang-tassiulas04maximum-lifetime-routing, jinghao-etal06routing-metrics}
for different routing metrics based on the above mentioned quantities).
We think of this as a \emph{reward} offered by the potential relay. 
The end-to-end network objective is to minimize the average total 
delay subject to a lower bound on some measure of total reward along the end-to-end path.
In this paper we address this end-to-end objective by considering optimal
strategies at each hop.
When a node gets a packet to forward, it has to make decisions based only
on the activities in its neighborhood.
Waiting for all potential relays to wake-up and choosing the one with the best
reward maximizes the reward at each hop, but increases the forwarding delay. On the other hand,
forwarding to the first relay to wake-up may result in the loss of the
opportunity of choosing a node with a better reward. Hence, at each hop,
there is a trade-off between the one-hop delay and the one-hop reward. 
By solving the one-hop problem of minimizing the average delay subject to a constraint on 
the average reward, we expect to capture the trade-off between the end-to-end metrics.
For instance, suppose the end-to-end objective is to minimize the expected 
end-to-end delivery delay subject to an upper bound on the expected number of hops in the path, the 
motivation for this constraint being that more hops traversed entails a greater 
expenditure of energy in the network. In our approach, we would heuristically address this problem by considering  at 
each hop the problem of minimizing the mean forwarding delay subject to a lower bound
on the progress made towards the sink. Greater progress at each hop 
entails greater delay per hop, while reducing the number of hops
it takes a packet to reach the sink. 

The local problem setting is the following.
Somewhere in the network a node has just received a packet to forward;
for the \emph{local problem} we refer to this forwarding node as the \emph{source} and think 
of the time at which it gets the packet as $0$. There is an \emph{unknown number of
relays} in the forwarding set of the source. 
In the geographical forwarding context, this lack of information on the number of relays
could model the fact that the neighborhood of a forwarding node could 
vary over time due, for example, to node failures, variation in channel conditions, 
or (in a mobile network) the entry or exit of mobile relays.	
However, we assume that the number of
relays is bounded by a known number $K$, and the source has
an initial probability mass function (pmf), over $(1, \cdots, K)$,
on the number of potential relays.  The source desires to forward the
packet within the interval $[0,T]$, while knowing that the 
relays wake-up independently and uniformly over $[0,T]$ and the
rewards they offer are independently and identically distributed
(iid). 
We will formally introduce our model in Section~\ref{system_model}.
Next we discuss related work and highlight our contributions.
\subsection{Related Work}
\label{related_work}
Here we provide a summary of related literature in the context of 
geographical forwarding and channel selection. Since our problem 
also belongs to the class of asset selling problems
studied in operations research literature,
we survey related work from there as well.\\

\noindent
\emph{\textbf{Geographical forwarding problems:}}
In our prior work \cite{naveen-kumar10geographical-forwarding} we have considered
a simple model where the number of relays is a constant which is known to the source.
There the reward  is simply the progress made by a relay node towards the sink. 
In the current work we have generalized our earlier model by allowing the 
number of relays to be \emph{not known}
to the source. Also, here we allow a general reward structure.

There has been other work in the context of geographical forwarding 
and anycast routing, where the problem of choosing one among several 
neighboring nodes arises. Zorzi and Rao \cite{Zorzi-rao03geographicrandom} consider a scenario
of geographical forwarding in a wireless mesh network in
which the nodes know their locations, and are sleep-wake cycling. They
propose GeRaF (Geographical Random Forwarding), a distributed relaying
algorithm, whose objective is to carry a packet to its destination in
as few hops as possible, by making as large progress as possible at
each relaying stage. For their algorithm, the authors
obtain the average number of hops (for given source-sink distance) as
a function of the node density. These authors do not consider the trade-off between the relay selection
delay and the reward gained by selecting a relay, which is a major contribution
of our work.

Liu et al.\ \cite{liu-etal07CMAC} propose a relay selection approach
as a part of CMAC, a protocol for geographical packet forwarding. With
respect to the fixed sink, a node $i$ has a forwarding set consisting
of all nodes that make progress greater than $r_0$ (an algorithm
parameter). If $Y$ represent the delay until the first wake-up instant
of a node in the forwarding set, and $X$ is the corresponding progress
made, then, under CMAC, node $i$ chooses an $r_0$ that minimizes the
expected normalized latency $\mathbb{E}[\frac{Y}{X}]$. The Random
Asynchronous Wakeup (RAW) protocol \cite{paruchuri-etal04RAW} also
considers transmitting to the first node to wake-up that makes a
progress of greater than a threshold.  Interestingly, this is  the structure
of the optimal policy for our simplified model in \cite{naveen-kumar10geographical-forwarding}.
For the sake of completeness we have described the simplified 
model in this paper as well (see Section~\ref{section:simplified_model}). 
Thus we have provided analytical support
for using such a threshold policy.

Kim et al.\ \cite{kim-etal09optimal-anycast} consider a dense WSN.
Just like the motivation for our model, an
occasional alarm packet needs to be sent, from wherever in the network
it is generated, to the sink. The authors develop an optimal anycast scheme to minimize
average end-to-end delay from any node $i$ to the sink when each
node $i$ wakes up asynchronously with rate $r_i$. They show that
periodic wake-up patterns obtain minimum delay among all sleep-wake
patterns with the same rate. They propose an algorithm
called LOCAL-OPT \cite{kim-etal08tech-report} which yields, for each node $i$, a threshold $h_j^{(i)}$ for 
each of its neighbor $j$. If the time at which neighbor $j$
wakes up is less than $h_j^{(i)}$, then $i$ will transmit to $j$. Otherwise 
$j$ will go back to sleep and $i$ will continue waiting for further neighbors.
A key drawback  is that a  
\emph{configuration phase} is required to run the LOCAL-OPT algorithm.

Rossi et al. \cite{rossi-etal08SARA}, consider the problem where
a node $i$, with a packet to forward and 
which is $n$ hops away from the sink, has to 
choose between two of its shortlisted
neighbors. The first shortlisted neighbor is the one with the least cost 
among all others with hop count $n-1$ (one less than node $i$).
The second one is the least cost node among all its neighbors 
with hop count $n$ (same as that of node $i$). 
Though the first node is on the shortest path, sometimes when its cost
is high, it may not be the best option. 
 It turns out that it is optimal to choose one node over
the other by comparing the cost difference with a threshold. The threshold
depends on the cost distribution of the nodes which are two hops away from node $i$.
Here there is no notion of sleep-wake cycling so that all 
the neighbor costs are known when node $i$
gets a packet to forward. The problem is that of one shot decision making.
In our problem a neighbor's cost will become available only after it wakes up,
at which instant node $i$ has to take decision regarding forwarding.
Hence, ours is a sequential decision problem.\\

\noindent
\emph{\textbf{Channel selection problems:}}
Akin to the relay selection problem is the problem of channel
selection. The authors in \cite{chaporkar-proutiere08joint-probing,
chang-liu07channel-probing} consider a model where there are several
channels available to choose from. The transmitter has to probe the
channels to learn their quality. Probing many channels yields one with
a good gain but reduces the effective time for transmission within the
channel coherence period. The problem is to obtain optimal strategies
to decide when to stop probing and to transmit. 
Here the number of
channels is known and all the channels are 
available at the very beginning of the decision process.  In our
problem the number of relays is not known, and the  relays
become available at random times.\\

\noindent
\emph{\textbf{Asset selling problems:}}
The basic asset selling problem
\cite{Sakaguchi61sequential-sampling,karlin62selling-asset}, comprises
$N$ offers that arrive sequentially over discrete time slots.  The
offers are iid. As the offers arrive, the seller has to decide whether
to take an offer or wait for future offers.  The seller has to pay a
cost to observe the next offer.  Previous offers cannot be recalled.
The decision process ends with the seller choosing an offer.  Over the
years, several variants of the basic problem have been studied, both
with and without recalling the previous offers.  Recently Kang
\cite{Kang05optimal-stopping} has considered a model where a cost has
to be paid to recall the previous best offer.  Further, the previous
best offer can be lost at the next time instant with some probability.
See \cite{Kang05optimal-stopping} for further references to literature on models
with uncertain recall. In \cite{david-levi04optimal-selection}, the
authors consider a model in which the offers arrive at the points of a
renewal process. Additional literature on such work can be found in
\cite{david-levi04optimal-selection}. In these models, either the
number of potential offers is known or is infinite.  In
\cite{ee09asset-selling}, a variant is studied in which the asset
selling process can reach a deadline in the next slot with some fixed
probability, provided that the process has proceeded upto the present
slot.

In our work the number of offers (i.e., relays) is not known. Also the
successive instants at which the offers arrive are the order
statistics of an unknown number of iid uniform random variables over
an interval $[0,T]$. After observing a relay, the probability that
there are no more relays to go (which is the probability that the
present stage is the last one) is not fixed.  This probability has to
be updated depending on the previous such probabilities and the inter
wake-up times between the sucessive relays.  Although our problem
falls in the class of asset selling problems, to the best of our
knowledge the particular setting we have considered in this paper has
not been studied before.

\subsection{Our Contributions}
With the number of relays being unknown, the natural approach is to
formulate the problem as a partially observed Markov decision process
(POMDP). A POMDP is a generalization of an MDP, where at each stage
the actual internal state of the system is not available to the
controller. Instead, the controller can observe a value from an
observation space.  The observation probabilistically depends on the
current actual state and the previous action. In some cases, a POMDP
can be converted to an equivalent MDP by regarding a belief (i.e., a
probability distribution) on the state space as the state of the
equivalent MDP. For a survey of POMDPs see \cite{monahan82survey}.
It is clear that, even if the actual state space is finite, the belief space is
uncountable. There are several algorithms available to obtain the
optimal policy when the actual state space is finite
\cite{lovejoy91algorithmic-methods}, starting from the seminal work by
Smallwood and Sondik \cite{smallwood-sondik73optimal-control}.  When
the number of states is large, these algorithms are computationally
intensive. In general, it is not easy to obtain an optimal policy for a
POMDP. 
In the current work, we have characterized the optimal policy in terms of an 
\emph{optimum stopping set}.
We have made use of the convexity results in
\cite{porta-etal06point-based} and some properties specific to our
problem to obtain an \emph{inner bound} on the optimum stopping set. We prove 
a simple monotonicity result to obtain an \emph{outer bound}. In summary, the following
are the main contributions of our work:
\begin{itemize}
\item We formulate the problem of relay selection with partial
  information as a finite horizon partially observable Markov decision
  process (POMDP), with the unknown state being the actual number of
  relays (Section~\ref{pomdp_formulation}). 
  The posterior pmf on the number of relays is shown to be 
  a sufficient decision statistic.
\item We first consider the completely observable MDP (COMDP) version
  of the problem where the source knows the number of relays with 
  probability one (wp1) (Section~\ref{COMDP_section}). The optimal policy is characterized by a sequence  
  of threshold functions.
\item For the POMDP, at each stage the optimum stopping set is the 
  set of all pmfs on the number of relays where it is optimal to stop (Section~\ref{bounds}).  
  We prove that this set is convex (Section~\ref{convex_section}), 
  and provide an \emph{inner bound} (\emph{subset}) for it (Section~\ref{inner_bound}). 
  We prove a monotonicity result and obtain an \emph{outer bound} (\emph{superset}, Section~\ref{outer_bound}).
  The threshold functions obtained in COMDP version are used in the design of the bounds. 
  These threshold functions need to be obtained recursively
  which is in general, computationally intensive. 
\item The complexity of the above policies motivates us to consider a 
 \emph{simplified} model (Section~\ref{section:simplified_model}). 
  We prove that the optimal policy for this simplified model is a simple threshold rule.
\item Through simulations (Section~\ref{one_hop_performance}) 
  we study the performance comparision of various policies
  with the optimal COMDP policy. The inner bound policy
  performs slighty better than the outer bound policy. The simple policy 
  obtained from the simplified model performs very close to the inner bound.
  Also, we show the poor performance of a naive policy, 
  that assumes the actual number of relays to be simply
  the expected number. 
\item Finally as a heuristic for the end-to-end problem in the 
  geographical forwarding context, we apply the simple policy at 
  each hop and study the end-to-end performance by simulation (Section~\ref{end_to_end_section}).
  We find that it is possible to tradeoff between the expected 
  end-to-end delay and expected number of hops by tuning a parameter.
\end{itemize}
For the ease of presentation, in the main sections we only provide an outline of
the proof for most of the lemmas, followed by a brief description.
Formal proofs are available in Appendices~\ref{COMDP_section_proofs_appendix}, 
\ref{bounds_section_proofs_appendix} and \ref{simplified_model_proofs_appendix}. 
Appendix~\ref{simulation_cases_appendix} contains additional simulation results.



\section{System Model}
\label{system_model}
We consider the one stage problem in which a node in the network receives 
a packet to forward. We call this node the ``source''
 and the nodes that it could potentially  forward the packet to are called
``relays''. The \emph{local problem} is taken to start at time $0$. Thus at time $0$,
the source node has a packet to forward to a sink but needs
a relay node to accomplish this task.  There is a \emph{nonempty} set of $N$
relay nodes, labeled by the indices $1,2,\cdots,N$.  $N$ is a random
variable bounded above by $K$, a system parameter that is known to the source
node, i.e., the support of $N$ is
$\{1,2,\cdots,K\}$. The source does not know $N$, but knows the bound $K$,
and a pmf $p_0$ on $\{1,2,\cdots,K\}$, which is the initial pmf of $N$.  A
relay node $i$, $1\le i\le N$, becomes available to the source at the instant $T_i$.
The source knows that the instants $\{T_i\}$ are iid uniformly
distributed on $(0,T)$. Observe that this would be the case
if the wake-up instants of all the nodes in the network are periodic with period $T$,
if these (periodic) renewal processes are stationary and independent, and if 
the forwarding node's decision instants 
are stopping times w.r.t. these wake-up time processes \cite{wolff89stochastic-modeling}.

We call $T_i$ the \emph{wake-up instant} of
relay $i$.  If the source forwards the packet to the relay $i$, then a
\emph{reward} of $R_i$ is accrued.  The rewards $R_i, i=1,2,\cdots,N$, are
iid random variables with pdf $f_R$.
The support of $f_R$ is $[0,\overline{R}]$.  The source knows this
statistical characterisation of the rewards, and also that the
$\{R_i\}$ are independent of the wake-up instants $\{T_i\}$.  When a
relay wakes up at $T_i$ and reveals its reward $R_i$, the source has to
decide whether to transmit to relay $i$ or to wait for further relays.
\emph{If the source decides to wait, then it instructs the relay with the
best reward to stay awake, while letting the rest go back to sleep}. This way the
source can always forward to a relay with the best reward among those 
that have woken up so far.

Given that $N=n$
(throughout this discussion we will focus on the event $(N=n)$), let $W_1,W_2,\cdots,W_n$ represent the order statistics of
$T_1,T_2,\cdots,T_n$, i.e., the $\{W_k\}$ sequence is the $\{T_i\}$
sequence sorted in the increasing order. 
The pdf of the
$k$ th ($k\le n$) order statistic \cite[Chapter 2]{orderstatistics} is, 
for $0< u< T$, 
\begin{eqnarray}
\label{pdfWk_equation}
 f_{{W_k}|N}(u|n)&=&\frac{n!u^{k-1}(T-u)^{n-k}}{(k-1)!(n-k)!T^n}\mbox{.}
\end{eqnarray}
Also the joint pdf of the  $k$ th and the $\ell$ th 
order statistic (for $k<\ell\le n$) is,
for $0<u\le v<T$,
\begin{eqnarray}
\label{pdfWkWl_equation}
 f_{W_k,W_\ell|N}(u,v|n)=\frac{n!u^{k-1}(v-u)^{\ell-k-1}(T-v)^{n-\ell}}{(k-1)!(\ell-k-1)!(n-\ell)!T^n}\mbox{.}
\end{eqnarray}
Using the above expressions, we can write down the 
conditional pdf $f_{W_{k+\ell}|W_k,N}$ (for $1<\ell\le n-k$) 
as, for $0<w<T$ and $0\le u<T-w$,
\begin{eqnarray}
\label{pdfWlcondWk_equation}
f_{W_{k+\ell}|W_k,N}(w+u|w,n)=\frac{f_{W_k,W_{k+\ell}|N}(w,w+u|n)}{f_{{W_k}|N}(w|n)}\nonumber\\
 =\frac{(n-k)!u^{\ell-1}((T-w)-u)^{(n-k)-\ell}}{(\ell-1)!((n-k)-\ell)!(T-w)^{(n-k)}}\mbox{.}
\end{eqnarray}
Comparing (\ref{pdfWlcondWk_equation}) with (\ref{pdfWk_equation}), as
expected, we observe that, given $N=n$, the pdf of the wake-up instant of the
$(k+\ell)$ th node, conditioned on the wake-up instant of the $k$ th node,
is the $\ell$ th order statistic of $(n-k)$ iid random variables that
are uniform on the remaining time $(T-w)$.  Let $W_0=0$ and define
$U_k=W_k-W_{k-1}$ for $k=1,2,\cdots,n$.  $U_k$ are the \emph{inter-wake-up}
time instants between the consecutive nodes (see Fig.~\ref{wake-up_rewards}).  Later we will be
interested in the conditional pdf $f_{U_{k+1}|W_k,N}$ for
$k=0,1,\cdots,n-1$ which is given by, for $0< w< T$ and $0\le u< T-w$,
\begin{eqnarray}
	\label{condorderstat_equation}
	f_{U_{k+1}|W_k,N}(u|w,n) &=& f_{W_{k+1}|W_k,N}(w+u|w,n)\nonumber\\
	&=& \frac{(n-k)(T-w-u)^{n-k-1}}{(T-w)^{n-k}}\mbox{.}
\end{eqnarray}
 The conditional expectation is given by,
\begin{eqnarray}
 \label{condexpectation_equn}
\mathbb{E}[U_{k+1}|W_k=w,N=n] &=& \frac{T-w}{n-k+1}\mbox{,}
\end{eqnarray}
which is simply the expected value of the minimum of $n-k$ random variables ($n-k$ is the remaining number of relays), 
each of which are iid uniform on the interval $[0,T-w)$ ($T-w$ is the remaining time).
\noindent
\begin{definition}
\label{Defn:expectation}
For notational simplicity we define,
\begin{eqnarray*}
f_{k}(u|w,n) &:=& f_{U_{k+1}|W_k,N}(u|w,n) \\
\mathbb{E}_{k}[\cdot|w,n]&:=&  \mathbb{E}[\cdot|W_k=w,N=n] 
\end{eqnarray*}
Note that $f_{k}(\cdot|w,n)$ depends on $n$ and $k$ 
through the difference $n-k$ and depends on $w$ through $T-w$. 
\hfill $\blacksquare$
\end{definition}

Since the reward sequence $R_1,R_2,\cdots,R_n$
is iid and independent of the wake-up instants $T_1,T_2,\cdots,T_n$,
we write $(W_k,R_k)$ as the pairs of ordered wake-up instants and
the corresponding rewards.
Evidently,
$f_{R_{k+1}|W_k,N}(r|w,n)=f_R(r)$ for $k=0,1,\cdots,n-1$.
Further we define (when $N=n$) $W_{n+1}:=T$, $U_{n+1}:=(T-W_n)$ and $R_{n+1}:=0$. 
Also ${E}_{n}[U_{n+1}|w,n]$ $:=T-w$.
All these variables are depicted in Fig.~\ref{wake-up_rewards}. 
We end this section by listing out, in Table~\ref{glossary_table}, 
most of the symbols that appear
in the paper with a brief description for each.

\begin{figure}[h!]
\centering
\includegraphics[scale=0.36]{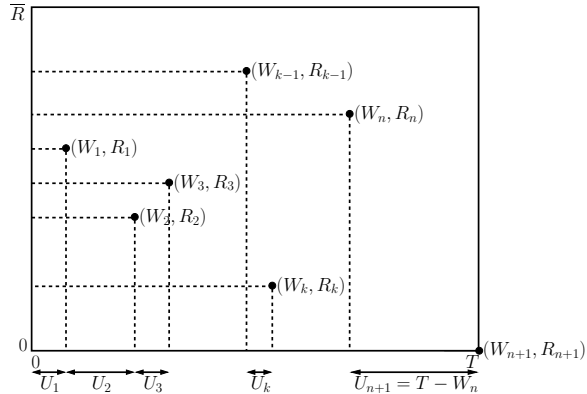}
\caption{There are $N=n$ relays. 
$(W_k,R_k)$ represents the wake-up instant and reward repectively, of the 
$k$th relay. These are shown as points in $[0,T]\times[0,\overline{R}]$. $U_k$ 
are the inter-wake-up times. Note that $W_{n+1}=T$, $R_{n+1}=0$ and $U_{n+1}=T-W_n$.
\label{wake-up_rewards}}
\end{figure}

\begin{table}[h]
\centering 
\begin{tabular}{|p{1.8cm}|p{13.5cm}|}
\hline
\textbf{Symbol} & \textbf{Description}\\
\hline
$\left<a,b\right>$& Inner product of vectors $a$ and $b$\\
\hline
$a_{k}^{\ell}(w,b)$ \hspace{2mm} $b_{k}^{\ell}(w,b)$   & Thresholds lying on the line joining $p_k^{(k)}$ and $p_k^{(k+\ell)}$ 
of the simplex $\mathcal{P}_k$; Used in the construction of the  inner and outer bounds, respectively  \\
\hline
$B_k$ & Best reward so far, i.e., $B_k=\max\{R1,\cdots,R_k\}$ \\
\hline
$c_k(p,w,b)$ & Average cost of continuing at stage $k$ when the  state is $(p,w,b)$\\
\hline
$\mathcal{C}_{k}(w,b)$ & Optimum stopping set at stage $k$ when $(W_k,B_k)=(w,b)$\\
\hline
$\underline{\mathcal{C}}_k(w,b)$ & Inner bound for the stopping set $\mathcal{C}_k(w,b)$\\
\hline
$\overline{\mathcal{C}}_k(w,b)$ & Outer bound for the stopping set $\mathcal{C}_k(w,b)$\\
\hline
$\mathcal{C}_{1step}$& One-step-stopping set for the simplified model\\
\hline
$\mathbb{E}_k[\cdot|w,n]$ & Expectation conditioned on $(W_k,N)=(w,n)$\\ 
\hline
$f_k(\cdot|w,n)$ & pdf of $U_{k+1}$ conditioned on $(W_k,N)=(w,n)$\\
\hline
$f_R(\cdot)$ & pdf of the iid rewards $\{R_k\}$ \\
\hline
$J_k(p,w,b)$ & Optimal cost-to-go function at stage $k$ when the  state is $(p,w,b)$\\
\hline
$K$ & Bound on the number of relays \\
\hline
$N$ & Number of relays; random variable taking values  from $\{1,2,\cdots,K\}$ \\
\hline
$\tilde{N}$ & Number of relays in the simplified model; a constant \\
\hline
$\mathbb{P}(A)$& Probability of an event $A$\\
\hline
$\mathcal{P}_k$ & Set of all pmfs on the set $\{k,k+1,\cdots,K\}$\\
\hline
$(p,w,b)$ & Represents a typical state at stage $k$ where $p\in\mathcal{P}_k$  is the belief state and
		$(W_k,B_k)=(w,b)$\\
\hline
$p_k^{(n)}$ & A corner point in $\mathcal{P}_k$, i.e., $p_k^{(n)}(n)=1$\\
\hline
$R_k$ & Reward of the $k$ th relay \\
\hline
$U_{k+1}$ & Inter wake-up time between the $k+1$ and $k$ th  relay, i.e., $U_{k+1}=W_{k+1}-W_{k}$\\
\hline
$W_k$ & Wake-up instant of the $k$ th relay \\
\hline
$\tilde{W}_k,\tilde{R}_k,\tilde{U}_{k+1}$ & Quantities, analogous to the ones in  the exact model, for the simplified model \\
\hline
$\alpha$ & Threshold obtained from the simplified model\\
\hline
$\gamma$ & Reward constraint for the  problem  in (\ref{main_problem})  \\
\hline
$\delta_{n-k}(w,b)$ & When $p\in\mathcal{P}_k$ is such that $p(k)+p(n)=1$ then it is 
optimal to stop iff $p(n)\le\delta_{n-k}(w,b)$ \\
\hline
$\eta$ &  Lagrange multiplier, see (\ref{unconstrained_problem})  \\
\hline
$-\eta b$ & Average cost of stopping at stage $k$ when $B_k=b$\\
\hline
$\tau_{k+1}(p,w,u)$ & Belief transition function; $\tau_{k+1}(p,w,u)$ is a pmf in  $\mathcal{P}_{k+1}$
	 for a given $p\in\mathcal{P}_k$, $W_k=w$ and $U_{k+1}=u$\\
\hline
$\phi_{n-k}(w,b)$ & Threshold obtained from the COMDP version of the problem; If the source knows wp1 that $N=n$,  then
at some stage $k\le n$ with $(W_k,B_k)=(w,b)$  it is optimal to stop iff $b\ge\phi_{n-k}(w,b)$\\
\hline
\end{tabular}
\caption{List of mathematical notation.}
\label{glossary_table}
\vspace*{-10mm}
\end{table}

\section{The Sequential Decision Problem}
\label{pomdp_formulation}
For the model set up in Section~\ref{system_model}, we now consider the 
following sequential decision problem. At each instant that a relay
wakes up, i.e., $W_1,W_2,\cdots$, the source has to make the decision to
forward the packet, or to hold the packet until the next wake-up instant.
Since the number of available relays, $N$, is unknown, we have a decision 
problem with partial information. We will show how the problem can be set
up in the framework of a partially observable Markov decision process (POMDP)
\cite{monahan82survey} \cite[Chapter 5]{optimalcontrol}.
\subsection{Actions, State Space, and State Transition}
\noindent
\emph{Actions:}
We assume that 
the time instants at which the relays wake-up, i.e., $W_1,W_2,\cdots$, constitute the 
\emph{decision instants or stages}
\footnote{A better choice for the decision instants may be to allow
the source to take decision at any time $t\in(0,T]$. When $N$ is known to the source
it can be argued that it is optimal to take decisions only at relay wake-up instances.
However this may not hold for our case where $N$ is unknown. In this paper we proceed 
with our restriction on the decision instants and consider the general case as a topic for future work.}.
At each decision instant, there 
are two actions possible at the source, denoted $0$ and $1$, where
\begin{itemize}
\item $0$ represents the action to \emph{continue} waiting for more relays to wake-up, and
\item $1$ represents the action to \emph{stop} and forward the packet 
to the relay that provides the best reward
among those that have woken up to the current decision epoch.
\end{itemize}
Since there can be at most $K$ relays, the total number of decision instants is $K$.
The decision process technically ends at the first instant $W_k$, at which the source
chooses action $1$, in which case we assume that all the subsequent decision instants, 
$k+1,\cdots,K$, occur at $W_k$.
In cases where the source ends up waiting until time $T$ 
(referring to Fig.~\ref{wake-up_rewards}, this is possible if, even at $W_n$ the 
source decides to continue, not realizing that it has seen all the relays there 
are in its forwarding set), all the subsequent decision
instants are assumed to occur at $T$.\\
 
\noindent
\emph{State Space:}
At stage $0$ the
state space is simply $\mathcal{S}_0^{a}=\Big\{(n,0,0): 1\le n\le K\Big\}$ 
and the only action possible is $0$, where $a$ in the superscript is to signify that $\mathcal{S}_0^{a}$ is the
set of \emph{actual} internal states of the system. 
The state space at stage $1$ is,
\begin{equation*}
 \mathcal{S}_1^{a}=\Big\{(n,w,b): 1\le n\le K, w\in(0,T), b\in[0,\overline{R}]\Big\}
\end{equation*}
and for stages $k=2,3,\cdots,K$ is,
\begin{eqnarray}
\label{actual_state_space_equn}
 \mathcal{S}_k^{a}&=&\Big\{(n,w,b): k\le n\le K, w\in(0,T), b\in[0,\overline{R}]\Big\}\nonumber\\
&& 	\cup \Big\{(k-1,T,b):b\in[0,\overline{R}]\Big\}\cup \{\psi\}\\
&=&\mathcal{S}_k^{a}(1)\cup\mathcal{S}_k^{a}(2)\cup\mathcal{S}_k^{a}(3)\mbox{.}\nonumber
\end{eqnarray}
 Thus the state space at 
stage $k=2,3,\cdots,K$ is written as the union of three sets. The physical
meanings of  these sets are as follows:
\begin{itemize}
 \item $\mathcal{S}_k^{a}(1)$:
$n$ in the state
triple $(n,w,b)$ represents the actual number of relays. The states in this set 
correspond to the case where there are more than or equal to $k$ relays, i.e., $n$ satisfies,
$k\le n\le K$. In the pair $(w,b)$, $w$ is the wake-up instant ($W_k$)
of the $k$ th relay, and $b$ is the best reward
$(B_k=\max\{R_1,\cdots,R_k\})$ among the relays seen so far. Same remark holds
for the states in $\mathcal{S}_1^{a}$. Stage $0$ begins at
time $0$ with $0$ reward. Hence the states in $\mathcal{S}_0^{a}$ are of the form $(n,0,0)$.

\item $\mathcal{S}_k^{a}(2)$: Suppose there were $k-1$ relays and, at stage $k-1$ the source
decides to continue. Note that it is possible for the source to take such a decision, since 
it \emph{does not know} the number of relays. In such a case, the source ends up waiting until time $T$
and enters stage $k$.
Hence the states in this set are of the form $(k-1,T,b)$ where $b$ represents the best reward among all
the $k-1$ relays ($B_{k-1}$).

\item $\mathcal{S}_k^{a}(3)$: $\psi$ is the \emph{terminating} state. The state at stage $k$ will be
$\psi$, if the source has already forwarded the packet at an earlier
stage.\\
\end{itemize}

\noindent
\emph{State Transition:}
If the state at stage $k$ is $\psi$ (i.e., the source has already
forwarded the packet) then the next state is always $\psi$.
Suppose $(n,w,b)\in\mathcal{S}_k^a$  is
the state at some stage $k$, $0\le k\le K-1$, and $a_k\in\{0,1\}$
represents the action taken.
If $a_k=1$ then the decision process stops and we regard that the system enters the 
termination state $\psi$ so that the state at all the subsequent stages, $k+1,\cdots,K$, is $\psi$.
The source will also terminate the decision process, knowing that the relays wake-up within the interval $(0,T)$,
if it has waited for a duration of
$T$. This means that $(n,w,b)\in\mathcal{S}_k^a(2)$, i.e., $n=k-1$ and
$w=T$.

On the other hand if $(n,w,b)\in\mathcal{S}_k^a(1)$ and $a_k=0$, the source
waits for a random duration of $U_{k+1}$ and encounters a relay with a random
reward of $R_{k+1}$ so that the next state is $(n,w+U_{k+1},\max\{b,R_{k+1}\})$.
Note that if $n=k$, i.e., the current relay is the last one, then since we have
defined $U_{k+1}=T-w$ and $R_{k+1}=0$, the next state will be of the form $(k,T,b)$.
Thus the state at stage $k+1$ can be written down as,
\begin{equation}
\label{actual_state_transition_equn}
 s_{k+1}=\left\{\begin{array}{ll}
                 \psi \mbox{ if } w = T \mbox{ and/or } a_k=1\\
                 \Big(n,w+U_{k+1},\max\{b,R_{k+1}\}\Big)\mbox{ otherwise.}
                \end{array}\right.
\end{equation}
 
\subsection{Belief State and Belief State Transition}
Since the source does not know the actual number of relays $N$, the state is only partially
observable.  The source takes decisions based on the entire
history of the wake-up instants and the best rewards. If the source
has not forwarded the packet until stage $k-1$ then define,
$I_k=(p_0,(w_1,b_1),\cdots,(w_{k},b_{k}))$ to be the \emph{information vector}
available at the source when the $k$ th relay wakes up. $w_1,\cdots,w_k$
represents the wake-up instants of relays waking up at stages
$1,\cdots,k$ and $b_1,\cdots,b_k$ are the corresponding best rewards.
Define $p_k$ to be the \emph{belief state} about $N$ at stage $k$
given the information vector $I_k$, {i.e.,} $p_k(n)=\mathbb{P}(N=n|I_k)$ for $n=k,k+1,\cdots,K$
(note that $p_k(k)$ is the probability that the $k$ th relay is the last one).
\emph{Thus, $p_k$ is a pmf in the $K-k$ dimensional
probability simplex}. Let us denote this simplex as $\mathcal{P}_k$.
\begin{definition}
For $k=1,2,\cdots,K$, let $\mathcal{P}_k$:= 
set of all pmfs on the set $\{k,k+1,\cdots,K\}$. $\mathcal{P}_k$ is the $K-k$ dimensional
probability simplex in $\Re^{K-k+1}$.
\hfill $\blacksquare$
\end{definition}

The ``observation'' $(w_k,b_k)$ at stage $k$ is a part of the actual
state $(n,w_k,b_k)$. For a general POMDP problem the observation can
belong to a completely different space than the actual state space.
Moreover the distribution of the observation at any stage can in general
depend on all the previous states, observations, actions and disturbances.
Suppose this distribution depends only on the state, action and disturbance 
of the immediately preceding stage, then a belief on the actual state given 
the entire history turns out to be sufficient for taking decisions \cite[Chapter 5]{optimalcontrol}.
For our case, this condition is met and hence at stage $k$, $(p_k,w_k,b_k)$ is a 
\emph{sufficient statistic} to take decision.
Therefore we modify the state space as,
$\mathcal{S}_0=\{(p,0,0): p\in\mathcal{P}_1\}$ and for
$k=1,2\cdots,K$,
\begin{eqnarray}
\mathcal{S}_k=\Big\{(p,w,b): p\in\mathcal{P}_k, w\in(0,T], b\in[0,\overline{R}]\Big\}\cup\{\psi\}\mbox{.}
\end{eqnarray}

After seeing $k$ relays, suppose the source chooses not to forward the packet, then
upon the next relay waking up (if any), the source needs to update its belief
about the number of relays.
Formally, if $(p,w,b)\in\mathcal{S}_k$ is the state at stage $k$ and $w+u$ is
the wake-up instant of the next relay then, using Bayes rule, the next
belief state can be obtained via the following \emph{belief state
  transition function} which yields a pmf in $\mathcal{P}_{k+1}$,
\begin{eqnarray}
\label{belief_transition_equn}
 \tau_{k+1}(p,w,u)(n)=\frac{p(n)f_{k}(u|w,n)}{\sum_{\ell=k+1}^{K}p(\ell)f_{k}(u|w,\ell)}
\end{eqnarray}
for $n=k+1,\cdots,K$. Note that this function does not depend on $b$.
Thus, if at stage $k\in\{0,1,\cdots,K-1\}$, the state 
is $(p,w,b)\in\mathcal{S}_k$, then the next state is
\begin{equation}
 \label{state_transition_equn}
  s_{k+1}=\left\{\begin{array}{ll}
                 \psi \mbox{ if } w=T \mbox{ and/or } a_k=1\\
                 \Big(\tau_{k+1}(p,w,U_{k+1}),w+U_{k+1},\max\{b,R_{k+1}\}\Big)\mbox{ otherwise,}
                \end{array}\right.
\end{equation}
where $U_{k+1}$ is the random delay until the next relay wakes up and  $R_{k+1}$
is the random reward offered by that relay. The explanation for 
the above belief state transition expression remains same as that of the actual state transition
in (\ref{actual_state_transition_equn}), except that if the action is to continue, then 
the source needs to update the belief about the number of relays. 
Suppose at stage $k$, the actual number of relays happens to be $k$ and the action
is to continue, which is possible since the source does not know the
actual number, then the source will end up waiting until time $T$ and
then transmit to the relay with the best reward.  
\subsection{Stopping Rules and the Optimization Problem}
As the relays wake-up, the source's problem is to decide to stop or continue waiting
for further relays. A stopping rule or a  
policy $\pi$ is a sequence of mappings $(\mu_1,\cdots,\mu_{K})$
where $\mu_{k}:\mathcal{S}_k\rightarrow \{0,1\}$. 
Let $\Pi$ represent
the set of all policies.
The delay $D_{\pi}$ incurred using 
policy $\pi$ is the instant at which the source forwards 
the packet. It could be either one of the $W_k$, or the instant $T$.
The reward $R_{\pi}$ is the reward associated with the relay to which 
the packet is forwarded. The problem we are 
interested in is the following,
\begin{eqnarray}
 \label{main_problem}
 \min_{\pi\in\Pi}& \mathbb{E}[D_{\pi}]\nonumber\\
\mbox{Subject to}& \mathbb{E}[R_{\pi}]\ge\gamma\mbox{.} 
\end{eqnarray}
To solve the above problem, we consider the following unconstrained problem,
\begin{equation}
 \label{unconstrained_problem}
 \min_{\pi\in\Pi} \Big(\mathbb{E}[D_{\pi}]-\eta\mathbb{E}[R_{\pi}]\Big)
\end{equation}
where $\eta>0$. 
\begin{lemma}
 Let $\pi^{*}$ be an optimal policy for the unconstrained 
problem in (\ref{unconstrained_problem}). Suppose that $\eta$ $(=:\eta_\gamma)$ 
is such that $\mathbb{E}[R_{\pi^{*}}]=\gamma$, 
then $\pi^{*}$ is optimal for the main problem in (\ref{main_problem})
as well.
\end{lemma}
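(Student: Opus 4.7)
The plan is to use a standard Lagrangian duality argument. The key observation is that the unconstrained problem in (\ref{unconstrained_problem}) is exactly the Lagrangian of the constrained problem in (\ref{main_problem}) with multiplier $\eta$, and the hypothesis that $\mathbb{E}[R_{\pi^*}] = \gamma$ amounts to saying that $\pi^*$ makes the constraint tight, so the slack term in the Lagrangian vanishes.

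First, I would verify feasibility: since $\mathbb{E}[R_{\pi^*}] = \gamma$, the policy $\pi^*$ satisfies the constraint in (\ref{main_problem}) with equality and is therefore feasible. Next, take any feasible policy $\pi \in \Pi$, so $\mathbb{E}[R_\pi] \geq \gamma$. By optimality of $\pi^*$ for the unconstrained problem (\ref{unconstrained_problem}) with $\eta = \eta_\gamma$, we have
\begin{equation*}
\mathbb{E}[D_{\pi^*}] - \eta_\gamma \mathbb{E}[R_{\pi^*}] \;\leq\; \mathbb{E}[D_{\pi}] - \eta_\gamma \mathbb{E}[R_{\pi}].
\end{equation*}
Substituting $\mathbb{E}[R_{\pi^*}] = \gamma$ and rearranging,
\begin{equation*}
\mathbb{E}[D_{\pi^*}] \;\leq\; \mathbb{E}[D_{\pi}] + \eta_\gamma\bigl(\gamma - \mathbb{E}[R_{\pi}]\bigr).
\end{equation*}

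Since $\eta_\gamma > 0$ and $\mathbb{E}[R_\pi] \geq \gamma$, the bracketed term is non-positive, so $\mathbb{E}[D_{\pi^*}] \leq \mathbb{E}[D_\pi]$. Because $\pi$ was an arbitrary feasible policy and $\pi^*$ is itself feasible, this proves that $\pi^*$ achieves the minimum in (\ref{main_problem}), hence is optimal for the constrained problem.

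There is really no main obstacle here; the only thing that requires care is spelling out that $\pi^*$ is feasible (which uses the equality $\mathbb{E}[R_{\pi^*}] = \gamma$ rather than just any $\eta$) and that the sign of $\eta_\gamma$ together with feasibility of the competitor $\pi$ makes the correction term in the Lagrangian work in our favor. Existence of such an $\eta_\gamma$ is not asserted by the lemma, only its consequence; so no continuity or monotonicity argument on $\eta \mapsto \mathbb{E}[R_{\pi^*_\eta}]$ is needed at this stage.
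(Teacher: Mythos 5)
Your proof is correct and follows essentially the same Lagrangian argument as the paper: optimality of $\pi^{*}$ for the unconstrained problem, substitution of $\mathbb{E}[R_{\pi^{*}}]=\gamma$, and the sign of $\eta_\gamma(\mathbb{E}[R_\pi]-\gamma)$ for feasible $\pi$. The only (welcome) addition is that you explicitly record the feasibility of $\pi^{*}$, which the paper leaves implicit.
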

\begin{proof}
 For any policy $\pi$ satisfying the constraint 
$\mathbb{E}[R_{\pi}]\ge\gamma$ we can write,
\begin{eqnarray*}
\mathbb{E}[D_{\pi^{*}}]&\le&\mathbb{E}[D_{\pi}]-\eta_\gamma\Big(\mathbb{E}[R_{\pi}]-\mathbb{E}[R_{\pi^{*}}]\Big)\\
&=& \mathbb{E}[D_{\pi}]-\eta_\gamma\Big(\mathbb{E}[R_{\pi}]-\gamma\Big)\\
&\le& \mathbb{E}[D_{\pi}]\mbox{,}
\end{eqnarray*}
where the first inequality is by the optimality of 
$\pi^{*}$ for (\ref{unconstrained_problem}), the equality 
is by the hypothesis on $\eta_\gamma$, and the last 
inequality is due to the restriction  of $\pi$ to 
$\mathbb{E}[R_{\pi}]\ge\gamma$.
\end{proof}

\vspace{1.5 mm}
Hence we focus on solving the unconstrained problem 
in (\ref{unconstrained_problem}).
\subsection{One-Step Costs}
The objective in (\ref{unconstrained_problem}) can be seen as 
accumulating additively over each step. If the decision at a stage
is to continue then the delay until the next relay wakes up (or until $T$)
gets added to the cost. On the other hand if the decision is to stop then
the source collects the reward offered by the relay to which it forwards the
packet and the decision process enters the state $\psi$.
The cost in state $\psi$ is $0$.  Suppose $(p,w,b)$ is the state at stage $k$.  Then the one-step-cost
function is, for $k=0,1,\cdots,K-1$,
\begin{equation}
\label{cost_function_equn}
 g_k\Big((p,w,b),a_k\Big)=\left\{\begin{array}{ll}
                          -\eta b & \mbox{ if } w=T  \mbox{ and/or } a_k=1\\
                          U_{k+1} & \mbox{ otherwise. } 
                          \end{array}\right.
\end{equation}
The cost of termination is $g_{K}(p,w,b)=-\eta b$. 
Also note that for $k=0$, the possible states are of the form $(p,0,0)$ 
and the only possible action is $a_0=1$, so that  $g_0\Big((p,0,0),a_0\Big)=U_1$.
\subsection{Optimal Cost-to-go Functions}
For $k=1,2,\cdots,K$, let $J_k(\cdot)$ represent the \emph{optimal cost-to-go function} at stage $k$.
For any state $s_k\in\mathcal{S}_k$, $J_k(s_k)$ can be written as,
\begin{eqnarray}
\label{optimal_cost_prep_equn}
J_k(s_k)=\min\{\mbox{\emph{stopping cost}}, \mbox{\emph{continuing cost}}\}\mbox{,}
\end{eqnarray}
where \emph{stopping cost} (\emph{continuing cost}) represents the average cost incurred, if the source,
at the current stage decides to stop (continue), and takes optimal action at the subsequent stages.
For the termination state, since the one step cost is zero and since the system remains
in $\psi$ in all the subsequent stages, we have $J_k(\psi)=0$. 
For a state $(p,w,b)\in\mathcal{S}_k$, we next evaluate the two costs in the above expression.

First let us obtain the stopping cost. 
Suppose that there were $K$ relay nodes and the source has seen them all.  
In such a case if $(p,w,b)\in\mathcal{S}_{K}$ (note that $p$ will just be a point mass on $K$)
is the state at  stage $K$ then 
the optimal cost is simply the cost of termination, i.e., $J_{K}(p,w,b)=g_K(p,w,b)=-\eta b$.
For $k=1,2,\cdots,K-1$, if the action is to stop then the one step
 cost is $-\eta b$ and the next state is $\psi$ so that the further
 cost is $J_{k+1}(\psi)=0$.  Therefore, the 
 stopping cost at any stage is simply $-\eta b$. 

On the other hand the
cost for continuing, when the state at stage $k$ is $(p,w,b)$, using the total expectation law, can be written 
as, 
\begin{eqnarray}
\label{continue_cost_equn}
c_k(p,w,b)&=&p(k)\Big(T-w-\eta b\Big)\nonumber\\
&&+\sum_{n=k+1}^{K}p(n)\mathbb{E}_{k}\bigg[U_{k+1}+J_{k+1}\Big(\tau_{k+1}(p,w,U_{k+1}),
w+U_{k+1},\max\{b,R_{k+1}\}\Big)\bigg|w,n\bigg]\mbox{.}
\end{eqnarray}

Each of the expectation term in the summation in (\ref{continue_cost_equn}) is the  average cost to continue conditioned
on the event $(N=n)$.
$U_{k+1}$ is the (random) time until the next relay wakes up ($U_{k+1}$ is the one step cost) 
and  $J_{k+1}(\cdot)$ is 
the optimal cost-to-go from the next stage onwards ($J_{k+1}(\cdot)$ constitutes the future cost). 
The next state is obtained via the state transition equation (\ref{state_transition_equn}).
The term $(T-w-\eta b)$ in (\ref{continue_cost_equn}) associated with $p(k)$
is the cost of continuing when the number of relays happen to be $k$, i.e., 
$(N=k)$ and there are no more relays to go.
Recall that we had defined (in Section~\ref{system_model})
$U_{k+1}=T-w$ and $R_{k+1}=0$ when the actual number of relays is $N=k$.
Therefore $T-w$ is the one step cost when $N=k$.
Also $w+U_{k+1}=T$ and $\max\{b,R_{k+1}\}=b$ so that at the next stage
(which occurs at $T$) the process will terminate (enter $\psi$) with a
cost of $-\eta b$ (see (\ref{state_transition_equn}) and
(\ref{cost_function_equn})), which represents the future cost. 

Thus the optimal cost-to-go function (\ref{optimal_cost_prep_equn}) 
at stage $k=1,2,\cdots,K-1,$ can be written as,
\begin{eqnarray}
\label{optimal_cost_equn}
 J_k(p,w,b)=\min\Big\{-\eta b,c_k(p,w,b)\Big\}\mbox{.}
\end{eqnarray}
From the above expression it is clear that at stage $k$
when the state is $(p,w,b)$, the source 
has to compare the stopping cost, $-\eta b$,
with the cost of continuing, $c_k(p,w,b)$, and stop iff $-\eta b\le c_k(p,w,b)$.
Later in Section~\ref{bounds}, we will use this condition ($-\eta b\le c_k(p,w,b)$)
and define, the \emph{optimum stopping set}. We will prove that the continuing cost, $c_k(p,w,b)$, is 
concave in $p$, leading to the result that the optimum stopping set is convex.
(\ref{continue_cost_equn}) and (\ref{optimal_cost_equn}) are extensively used
in the subsequent development.

\section{Relationship with the Case Where $N$ is Known (the COMDP Version)}
\label{COMDP_section} 
In the previous section (Section~\ref{pomdp_formulation}) we detailed 
our problem formulation as a POMDP. 
The state is partially observable because the
source does not know the exact number of relays. It is interesting to first consider the
simpler case where this number is known,
which is the contribution of our earlier work in \cite{naveen-kumar10geographical-forwarding}.
Hence, in this section, we will consider the case when the initial pmf, $p_0$, has all
the mass only on some $n$, i.e., $p_0(n)=1$. We call this, the COMDP version of the problem. 

First we define a sequence of threshold 
functions which will be useful in the subsequent proofs. 
These are the same threshold functions that characterize the optimal policy for our model
in \cite{naveen-kumar10geographical-forwarding}.

\noindent
\begin{definition}
\label{phi_threshold_definition}
For $(w,b)\in(0,T)\times[0,\overline{R}]$, 
define $\{\phi_{\ell}:\ell=0,1,\cdots,K-1\}$ inductively as follows:
$\phi_0(w,b)=0$ for all $(w,b)$, and for $\ell=1,2,\cdots,K-1$ (recall Definition~\ref{Defn:expectation}),
 \begin{eqnarray}
\label{phi_k_equn}
 \phi_\ell(w,b)
&=&\mathbb{E}_{{K-\ell}}\bigg[\max\bigg\{b,R,\phi_{\ell-1}\Big(w+U,
\max\{b,R\}\Big)\bigg\}
-\frac{U}{\eta}\bigg|w,K\bigg]\mbox{.}
\end{eqnarray}
In the above expression we have suppressed the subscript ${K-\ell}+1$
for $R$ and $U$ for simplicity.  The pdf used to take the expectation
in the above expression is $f_{R}(\cdot)f_{K-\ell}(\cdot|w,K)$ (again recall Definition \ref{Defn:expectation}).  
\hfill $\blacksquare$
\end{definition}

\noindent
We will need the following simple property of the threshold functions in a later section.
\begin{lemma}
\label{phi_lemma_equn}
For $\ell=1,2,\cdots,K-1$, 
$-\eta\phi_\ell(w,b)\le(T-w-\eta b)$.
\end{lemma}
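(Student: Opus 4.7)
The plan is to prove the inequality directly from the recursive definition, using only the trivial lower bound on the maximum together with the conditional expectation formula (\ref{condexpectation_equn}); no induction is required.

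First, I would observe that the maximum inside the expectation in (\ref{phi_k_equn}) is taken over a set that contains $b$, so
\[
\max\bigl\{b, R, \phi_{\ell-1}(w+U, \max\{b,R\})\bigr\} \ge b
\]
pointwise on the sample paths. Substituting this bound into the definition of $\phi_\ell(w,b)$ yields
\[
\phi_\ell(w,b) \;\ge\; \mathbb{E}_{K-\ell}\!\left[b - \frac{U}{\eta}\,\bigg|\, w, K\right] \;=\; b - \frac{1}{\eta}\,\mathbb{E}_{K-\ell}[U_{K-\ell+1}\mid w, K].
\]

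Next, I would invoke the closed-form conditional expectation in (\ref{condexpectation_equn}) with $k=K-\ell$ and $n=K$, which gives $\mathbb{E}_{K-\ell}[U_{K-\ell+1}\mid w, K] = (T-w)/(\ell+1)$. Therefore
\[
\phi_\ell(w,b) \;\ge\; b - \frac{T-w}{\eta(\ell+1)}.
\]
Multiplying through by $-\eta$ (which reverses the inequality) and using $\ell+1 \ge 1$ together with $T-w \ge 0$ for $w\in(0,T]$, I obtain
\[
-\eta\,\phi_\ell(w,b) \;\le\; -\eta b + \frac{T-w}{\ell+1} \;\le\; T-w-\eta b,
\]
which is the desired inequality.

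There is no real obstacle here: the whole argument is a one-line bound on the inner max combined with a direct evaluation of the mean inter-wake-up time. The only thing one needs to be careful about is that the quantities inside the max are well-defined (which they are, since $R\ge 0$, $b\ge 0$, and $\phi_{\ell-1}$ is a finite-valued function by the recursion), so taking the max with $b$ is legitimate and preserves the integrability needed to pass the inequality through the expectation.
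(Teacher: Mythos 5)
Your proof is correct and follows essentially the same route as the paper: both bound the inner maximum below by $b$ and then control the $U/\eta$ term, the only difference being that you evaluate $\mathbb{E}_{K-\ell}[U\mid w,K]=(T-w)/(\ell+1)$ exactly via (\ref{condexpectation_equn}) while the paper uses the almost-sure bound $U\le T-w$. Your version gives the marginally sharper intermediate estimate $\phi_\ell(w,b)\ge b-\frac{T-w}{\eta(\ell+1)}$, but both yield the lemma identically.
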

\begin{proof}
See Appendix~\ref{phi_lemma_equn_proof_appendix}.
\end{proof}

Next we state the main lemma of this section. We call this  the \emph{One-point} 
Lemma, because it gives the optimal cost, $J_k(p_k,w,b)$, at stage $k$ when the belief state
$p_k\in\mathcal{P}_k$ is such that it has all the mass on some $n\ge k$.
\begin{lemma}[One-point]
\label{one_point_mass_lemma}
Fix some $n\in\{1,2,\cdots,K\}$ and $(w,b)\in(0,T)\times[0,\overline{R}]$. 
For any $k=1,2,\cdots,n$, if  $p_k\in\mathcal{P}_k$ 
is such that $p_k(n)=1$
then,
\begin{equation*}
J_k(p_k,w,b)=\min\Big\{-\eta b,-\eta \phi_{n-k}(w,b)\Big\}\mbox{.}
\end{equation*}
\end{lemma}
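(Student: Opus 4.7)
The natural approach is backward induction on $k$, with the inductive index being the ``remaining'' number of relays $n-k$. The key observation underpinning the whole argument is that when $p_k$ is concentrated on a single value $n$, Bayes's rule (\ref{belief_transition_equn}) forces the posterior $\tau_{k+1}(p_k,w,u)$ to remain a point mass on $n$, so the belief process is trivial and the problem reduces exactly to the COMDP analyzed in \cite{naveen-kumar10geographical-forwarding}, whose optimal cost-to-go is encoded by the threshold functions $\phi_\ell$.

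\textbf{Base case} ($k=n$). Here $p_k(n)=1$ means the source knows no more relays will come, and $\phi_{n-k}=\phi_0\equiv 0$. The continuation cost (\ref{continue_cost_equn}) collapses to $c_n(p_n,w,b)=p_n(n)(T-w-\eta b)=T-w-\eta b\ge -\eta b$, hence $J_n(p_n,w,b)=\min\{-\eta b,\,T-w-\eta b\}=-\eta b=\min\{-\eta b,-\eta\phi_0(w,b)\}$, as required. (The sub-case $k=n=K$ is handled directly by the terminal rule $J_K=-\eta b$.)

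\textbf{Inductive step} ($k<n$). Assume the conclusion at stage $k+1$. Since $p_k(k)=0$, the first summand in (\ref{continue_cost_equn}) vanishes and only the $n$-th term in the sum survives, giving
\begin{equation*}
c_k(p_k,w,b)=\mathbb{E}_{k}\!\left[U_{k+1}+J_{k+1}\bigl(\tau_{k+1}(p_k,w,U_{k+1}),\,w+U_{k+1},\,\max\{b,R_{k+1}\}\bigr)\,\Big|\,w,n\right].
\end{equation*}
The transition (\ref{belief_transition_equn}) sends the point mass at $n$ to the point mass at $n$ in $\mathcal{P}_{k+1}$, so the inductive hypothesis applies and yields
\begin{equation*}
J_{k+1}\bigl(\tau_{k+1}(p_k,w,u),\,w+u,\,b'\bigr)=\min\bigl\{-\eta b',\,-\eta\phi_{n-k-1}(w+u,b')\bigr\},\qquad b'=\max\{b,R_{k+1}\}.
\end{equation*}
Using $-\eta\max\{b,R\}=\min\{-\eta b,-\eta R\}$, this turns $c_k(p_k,w,b)$ into
\begin{equation*}
\mathbb{E}_{k}\!\left[U_{k+1}+\min\bigl\{-\eta b,\,-\eta R_{k+1},\,-\eta\phi_{n-k-1}(w+U_{k+1},\max\{b,R_{k+1}\})\bigr\}\,\Big|\,w,n\right].
\end{equation*}
Comparing with Definition~\ref{phi_threshold_definition} (taking $\ell=n-k$) and invoking the remark that $f_{k'}(\cdot\mid w,n')$ depends on $(k',n')$ only through $n'-k'$ -- so that $\mathbb{E}_{K-\ell}[\cdot\mid w,K]=\mathbb{E}_{k}[\cdot\mid w,n]$ whenever $\ell=n-k$ -- the bracketed expression is precisely $-\eta\phi_{n-k}(w,b)$. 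Substituting into (\ref{optimal_cost_equn}) gives $J_k(p_k,w,b)=\min\{-\eta b,-\eta\phi_{n-k}(w,b)\}$.

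\textbf{Main obstacle.} The only non-mechanical step is recognizing that the expectation appearing in the recursive definition of $\phi_\ell$ (which is written in terms of $\mathbb{E}_{K-\ell}[\cdot\mid w,K]$ with ``$K$ total relays'') coincides with the expectation $\mathbb{E}_k[\cdot\mid w,n]$ arising in $c_k$ (with ``$n$ total relays, $k$ already seen''). This is exactly the translation invariance built into (\ref{condorderstat_equation}) -- the conditional density of the next inter-wake-up time depends only on the number of remaining relays and the remaining time -- so once that identification is made, the algebra of the inductive step is a direct term-by-term match with Definition~\ref{phi_threshold_definition}.
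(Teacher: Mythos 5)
Your proof is correct and follows essentially the same route as the paper's: backward induction from the base case $k=n$ (where $c_n(p_n,w,b)=T-w-\eta b\ge-\eta b$), using the invariance of the point mass under the belief transition and the identification $f_k(\cdot\mid w,n)=f_{K-(n-k)}(\cdot\mid w,K)$ to match the continuation cost with $-\eta\phi_{n-k}(w,b)$ from Definition~\ref{phi_threshold_definition}. The "main obstacle" you flag is exactly the step the paper's proof also hinges on.
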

\begin{proof}
The proof is by induction. 
We make use of the fact that if at some stage $k<n$ the belief state $p_k$ is
such that $p_k(n)=1$ then the next belief state $p_{k+1} (\in\mathcal{P}_{k+1})$, obtained by using the 
belief transition equation (\ref{belief_transition_equn}), is  also of the form $p_{k+1}(n)=1$.
We complete the proof by using Definition~\ref{phi_threshold_definition} and the induction hypothesis.
For a complete proof, see Appendix~\ref{one_point_mass_lemma_proof_appendix}.
\end{proof}

{\emph{Discussion of Lemma~\ref{one_point_mass_lemma}:}} 
At stage $k$ if the state is $(p_k,w,b)$,
where $p_k$ is such that $p_k(n)=1$ for some $n\ge k$, then from 
the One-point Lemma it follows that the optimal policy 
is to stop and transmit iff $b\ge\phi_{n-k}(w,b)$. The subscript 
${n-k}$ of the function $\phi_{n-k}$ signifies the number of 
more relays to go.
For instance, if we know that there are exactly 4 more relays 
to go then the threshold to be used is $\phi_4$. Suppose 
at stage $k$ if it was optimal to continue, then from 
(\ref{belief_transition_equn}) it follows that  the next 
belief state $p_{k+1}\in\mathcal{P}_{k+1}$ also has mass 
only on $(N=n)$   and hence at this stage it is optimal to 
use the threshold function $\phi_{n-(k+1)}$. Therefore, if 
we begin with
an intial belief $p_0\in\mathcal{P}_1$ such that $p_0(n)=1$ 
for some $n$, then the optimal policy is to stop at the first 
stage $k$ such that $b\ge\phi_{n-k}(w,b)$ 
where $W_k=w$ is the wake-up instant of the $k$ th 
relay and $B_k=\max\{R_1,\cdots,R_k\}=b$. Note that, since 
at stage $n$ the threshold to be used is $\phi_0(w,b)=0$ (see Definition~\ref{phi_threshold_definition}), 
we invariably have to stop at stage $n$ if we have 
not terminated earlier. This is exactly the same as our optimal policy in 
\cite{naveen-kumar10geographical-forwarding},
where the number of relays is known to the source 
(instead of knowing the number wp1, as in our One-point Lemma here).
\hfill$\blacksquare$
\section{Unknown $N$: Bounds on the optimum stopping set}
\label{bounds}
In this section we will consider the general case where the number of relays $N$ is not
known to the source. The sequential decision problem developed in Section~\ref{pomdp_formulation}
was for this unknown $N$ case. The problem was formulated as a 
POMDP for which the source's
decision to stop and forward the packet is based on the belief state which 
takes values in $\mathcal{P}_k$ after the source has observed $k$ relays
waking up. We begin this section by defining the \emph{optimum stopping set}. 
We show that this set is convex. Characterizing the exact optimum stopping set 
is computationally intensive. Therefore, our aim is to derive
inner and outer bounds (a subset and a superset, respectively) for the optimum stopping set.

\begin{definition}[Optimum stopping set]
\label{optimal_stopping_set_defn}
For $1\le k\le K-1$, let $\mathcal{C}_k(w,b)=\Big\{p\in\mathcal{P}_k: -\eta b\le c_k(p,w,b)\Big\}$. 
Referring to (\ref{optimal_cost_equn}) it follows that, for a given $(w,b)$, $\mathcal{C}_k(w,b)$
represents the set of all beliefs $p\in\mathcal{P}_k$ at stage $k$ at which 
it is optimal to stop. We call $\mathcal{C}_k(w,b)$
the \emph{optimum stopping set} at stage $k$
when the delay ($W_k$) and best reward ($B_k$) values are $w$ and $b$, respectively. 
\hfill $\blacksquare$
\end{definition}

\subsection{Convexity of the Optimum Stopping Sets}
We will prove (in Lemma~\ref{concave_lemma}) that the continuing cost, $c_k(p,w,b)$, in (\ref{continue_cost_equn}) is 
concave in $p\in\mathcal{P}_k$. From the form of the stopping set $\mathcal{C}_k(w,b)$, 
a simple consequence of this lemma will be that the optimum stopping set 
is convex. We further extend the concavity result of $c_k(p,w,b)$ for $p\in{\overline{\mathcal{P}}}_k$,
where ${\overline{\mathcal{P}}}_k$ is the affine set containing $\mathcal{P}_k$ (to be defined shortly in this
section).

\label{convex_section}
\begin{lemma}
\label{concave_lemma}
For $k=1,2,\cdots,K-1,$ and any given $(w,b)$, the cost of continuing (defined in (\ref{continue_cost_equn})),
$c_k(\cdot,w,b)$, is concave on $\mathcal{P}_k$. 
\end{lemma}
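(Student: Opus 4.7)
The plan is to establish the lemma by backward induction on $k$, simultaneously maintaining the hypothesis that $J_{k}(\cdot,w,b)$ is concave on $\mathcal{P}_{k}$ for every $(w,b)$. The base case $k=K$ is immediate: $\mathcal{P}_K=\{p_K^{(K)}\}$ is a singleton and $J_K(p,w,b)=-\eta b$ is trivially concave. For the propagation from $k+1$ to $k$, once $c_k(\cdot,w,b)$ has been shown concave, the expression $J_k=\min\{-\eta b,c_k(\cdot,w,b)\}$ in (\ref{optimal_cost_equn}) is concave as the pointwise minimum of an affine and a concave function, closing the induction. So the entire argument reduces to the inductive step ``$J_{k+1}(\cdot,w',b')$ concave for every $(w',b')$ $\Rightarrow$ $c_k(\cdot,w,b)$ concave''.

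To carry out this step, I would decompose $c_k(p,w,b)$ using (\ref{continue_cost_equn}) into a clearly affine part, namely $p(k)(T-w-\eta b)+\sum_{n=k+1}^{K}p(n)\mathbb{E}_k[U_{k+1}|w,n]$, and a ``future-value'' part
\[
I(p):=\sum_{n=k+1}^{K}p(n)\!\int_0^{T-w}\!\!\int_0^{\overline{R}} J_{k+1}\bigl(\tau_{k+1}(p,w,u),\,w+u,\,\max\{b,r\}\bigr)\,f_R(r)\,f_k(u|w,n)\,dr\,du.
\]
Since $J_{k+1}(\tau_{k+1}(p,w,u),\cdot,\cdot)$ does not depend on $n$, I can pull it out of the sum and use the identity $A(p,u)\,\tau_{k+1}(p,w,u)(n')=p(n')f_k(u|w,n')$, with $A(p,u):=\sum_{\ell=k+1}^{K}p(\ell)f_k(u|w,\ell)$ (which is exactly the denominator in (\ref{belief_transition_equn})), to rewrite
\[
I(p)=\int_0^{T-w}\!\!\int_0^{\overline{R}} A(p,u)\,J_{k+1}\bigl(\tau_{k+1}(p,w,u),\,w+u,\,\max\{b,r\}\bigr)f_R(r)\,dr\,du.
\]
Because nonnegative integration preserves concavity, it is enough to show that for each fixed $(u,r)$ the integrand $p\mapsto A(p,u)\,J_{k+1}(\tau_{k+1}(p,w,u),w+u,\max\{b,r\})$ is concave on $\mathcal{P}_k$.

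This last claim is a ``perspective of a concave function'' argument, and it is where the main work lies. For $p_1,p_2\in\mathcal{P}_k$, $\theta\in[0,1]$, and $p=\theta p_1+(1-\theta)p_2$, linearity in $p$ of $A(\cdot,u)$ and of each coordinate $p(n')f_k(u|w,n')$ gives the vector identity $A(p,u)\tau_{k+1}(p,w,u)=\theta A(p_1,u)\tau_{k+1}(p_1,w,u)+(1-\theta)A(p_2,u)\tau_{k+1}(p_2,w,u)$. When $A(p,u)>0$, dividing through exhibits $\tau_{k+1}(p,w,u)$ as the convex combination $\mu\tau_{k+1}(p_1,w,u)+(1-\mu)\tau_{k+1}(p_2,w,u)$ with $\mu=\theta A(p_1,u)/A(p,u)$; invoking the inductive concavity of $J_{k+1}(\cdot,w+u,\max\{b,r\})$ at this combination and multiplying back through by $A(p,u)$ yields the required super-additive inequality. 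The edge case $A(p,u)=0$ forces $A(p_i,u)=0$ for any $i$ contributing nontrivially (by nonnegativity of $A$), so both sides of the inequality collapse to $0$; handling this by working throughout with the \emph{unnormalized} vector $A(p,u)\tau_{k+1}(p,w,u)$ (so that one never actually divides by zero) is the cleanest way out, and is the only piece of nontrivial bookkeeping in the argument and the step I expect to be the main obstacle.
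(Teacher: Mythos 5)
Your proof is correct, but it reaches the conclusion by a genuinely different mechanism than the paper's. The paper's proof (Appendix~B.1, following Porta et al.) maintains a stronger inductive representation: $J_{k+1}(\cdot,w,b)$ is an infimum of \emph{linear} functions $\langle\alpha,p\rangle$ over a family of $\alpha$-vectors, with the infimum attained; it then builds an $\alpha$-vector $\alpha_k^{(q,w,b)}$ for each $q\in\mathcal{P}_k$, uses the cancellation $\tau_{k+1}(p,w,u)(n')\sum_{\ell}p(\ell)f_k(u|w,\ell)=p(n')f_k(u|w,n')$ to make $\langle\alpha_k^{(q,w,b)},p\rangle$ linear in $p$, and concludes $c_k(p,w,b)=\inf_q\langle\alpha_k^{(q,w,b)},p\rangle$, hence concave. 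You keep only the bare concavity of $J_{k+1}$ as the inductive hypothesis and verify the defining inequality for $c_k$ directly via the perspective-function identity $A(p,u)\,\tau_{k+1}(p,w,u)=\theta A(p_1,u)\,\tau_{k+1}(p_1,w,u)+(1-\theta)A(p_2,u)\,\tau_{k+1}(p_2,w,u)$ --- the same cancellation, exploited pointwise rather than through supporting hyperplanes. Your route is more elementary and avoids the paper's mild bookkeeping about the $\alpha$-representation having no constant terms and the infimum being attained; your treatment of the $A(p,u)=0$ edge case (never dividing, working with the unnormalized vector) is sound and is in fact glossed over in the paper. What the paper's route buys is the explicit $\alpha$-vector (piecewise-linear) structure of $J_k$, which is the standard POMDP machinery and makes Corollary~\ref{concave_affine_lemma} (concavity of $\overline{c}_k$ on the affine set $\overline{\mathcal{P}}_k$, needed for the inner bound) immediate, since an infimum of linear functions is concave on any convex set; your perspective argument would need extra care there because $A(p,u)$ can be negative off the simplex.
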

\begin{proof}
The essence of the proof is same as that in \cite[Lemma~1]{porta-etal06point-based}.
From (\ref{continue_cost_equn}) we easily see that $c_{K-1}(\cdot,w,b)$ is an affine function of $p\in\mathcal{P}_{K-1}$,
and hence $J_{K-1}(\cdot,w,b)$, in (\ref{optimal_cost_equn}), being minimum of an affine function and a constant is concave.
The proof then follows by induction.
The induction hypothesis is that for some stage $k+1$, $J_{k+1}(\cdot,w,b)$ is concave. Hence it
can be expressed as an infimum over some collection of affine functions.
The inductive step then shows that
$c_k(\cdot,w,b)$ can also be similarly expressed as
an infimum over some collection of affine functions. Hence $c_k(\cdot,w,b)$ and 
(using \ref{optimal_cost_equn}) $J_k(\cdot,w,b)$ are concave.
Formal proof is available in Appendix~\ref{concave_proof_appendix}.
\end{proof}

The following corollary is a straight forward application 
of the above lemma.
\begin{corollary}
\label{convex_corollary}
For $k=1,2,\cdots,K-1,$ and any given $(w,b)$, $\mathcal{C}_k(w,b)(\subseteq\mathcal{P}_k)$ is a convex set.
\end{corollary}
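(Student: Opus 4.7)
The plan is to derive the convexity of $\mathcal{C}_k(w,b)$ as an immediate consequence of Lemma~\ref{concave_lemma}, using the standard fact that the superlevel set of a concave function (restricted to a convex domain) is convex.

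First, I would observe that by Definition~\ref{optimal_stopping_set_defn},
\begin{equation*}
\mathcal{C}_k(w,b) = \bigl\{p\in\mathcal{P}_k : c_k(p,w,b) \geq -\eta b\bigr\},
\end{equation*}
i.e., $\mathcal{C}_k(w,b)$ is the $(-\eta b)$-superlevel set of the function $p \mapsto c_k(p,w,b)$ intersected with $\mathcal{P}_k$.

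Next, I would invoke Lemma~\ref{concave_lemma}, which asserts that $c_k(\cdot,w,b)$ is concave on $\mathcal{P}_k$. Take any $p_1, p_2 \in \mathcal{C}_k(w,b)$ and any $\lambda\in[0,1]$, and set $p_\lambda = \lambda p_1 + (1-\lambda) p_2$. Since $\mathcal{P}_k$ is a simplex (hence convex), we have $p_\lambda \in \mathcal{P}_k$. By concavity,
\begin{equation*}
c_k(p_\lambda,w,b) \geq \lambda\, c_k(p_1,w,b) + (1-\lambda)\, c_k(p_2,w,b) \geq \lambda(-\eta b) + (1-\lambda)(-\eta b) = -\eta b,
\end{equation*}
so $p_\lambda \in \mathcal{C}_k(w,b)$, which establishes convexity.

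There is no real obstacle here: the lemma is essentially a one-line consequence of Lemma~\ref{concave_lemma} together with the convexity of the simplex $\mathcal{P}_k$. The heavy lifting was done in proving concavity of $c_k(\cdot,w,b)$; this corollary merely records the geometric payoff of that fact, namely that the region of belief states where stopping is optimal forms a convex subset of the simplex.
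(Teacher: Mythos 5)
Your proof is correct and follows exactly the paper's own argument: the paper also derives Corollary~\ref{convex_corollary} directly from Lemma~\ref{concave_lemma} by observing that $\mathcal{C}_k(w,b)$ is a superlevel set of the concave function $c_k(\cdot,w,b)$. You have simply written out the standard superlevel-set argument explicitly, which the paper leaves as a citation.
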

\begin{proof}
 From Lemma~\ref{concave_lemma} we know 
that $c_k(p,w,b)$ is a concave function of $p\in\mathcal{P}_k$. 
 Hence $\mathcal{C}_k(w,b)$ (see Definition~\ref{optimal_stopping_set_defn}), being a 
\emph{super level set} of a concave function, 
is convex \cite{convexoptimization}.
\end{proof}

In the next section while proving an inner bound for the stopping set
$\mathcal{C}_k(w,b)$, we will identify a set of points that could lie
outside the probability simplex $\mathcal{P}_k$.  We can obtain a
better inner bound if we extend the concavity result to the affine
set,
\begin{eqnarray*}
 {\overline{\mathcal{P}}}_k&=&\Big\{p\in\Re^{K-k+1}:\left<p,\textbf{1}\right>=1\Big\}\mbox{,}
\end{eqnarray*}
where $\left<p,\textbf{1}\right>=\sum_{n=k}^{K}p(n)$, i.e., in ${\overline{\mathcal{P}}}_k$ 
the vectors sum to one, but
we do not require non-negativity of the vectors.  This can be done as
follows.  Define ${\overline{\tau}}_{k+1}(p,w,u)$ using
(\ref{belief_transition_equn}) for every
$p\in\overline{\mathcal{P}}_k$.  Then ${\overline{\tau}}_{k+1}(.,w,u)$
as a function of $p$, is the extension of $\tau_{k+1}(.,w,u)$ from
$\mathcal{P}_k$ to $\overline{\mathcal{P}}_k$. Similarly, for every
$p\in\overline{\mathcal{P}}_k$, define ${\overline{c}}_k(p,w,b)$ and
${\overline{J}}_k(p,w,b)$ using (\ref{continue_cost_equn}) and
(\ref{optimal_cost_equn}). These are the extensions of $c_k(\cdot,w,b)$
and $J_k(\cdot,w,b)$ respectively. Then again, using the proof technique
same as that in Lemma~\ref{concave_lemma}, we can
obtain the following corollary,

\begin{corollary}
\label{concave_affine_lemma}
For $k=1,2,\cdots,K-1$, and any given $(w,b)$,
 ${\overline{c}}_k(\cdot,w,b)$ is concave on 
the affine set  
${\overline{\mathcal{P}}}_k$. 
\hfill $\blacksquare$
\end{corollary}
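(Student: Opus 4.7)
The plan is to re-run the induction of Lemma~\ref{concave_lemma} on the enlarged domain $\overline{\mathcal{P}}_k$, relying on the observation that the ``alpha-vector'' cancellation used to express $c_k$ as an infimum of affine functions does not use non-negativity of the components of $p$ anywhere.

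I would induct downward on $k$, proving that $\overline{c}_k(\cdot,w,b)$ is an infimum of affine functions on $\overline{\mathcal{P}}_k$ (hence concave) for every fixed $(w,b)$. For the base case $k=K-1$, writing $J_K(q,w',b')=-\eta b'$ in the formula (\ref{continue_cost_equn}) removes all dependence of the future cost on the posterior belief, so $\overline{c}_{K-1}(p,w,b)$ collapses to a linear combination of $p(K-1)$ and $p(K)$ with coefficients that depend only on $(w,b)$. It is therefore affine on $\overline{\mathcal{P}}_{K-1}$, and $\overline{J}_{K-1}(\cdot,w,b)=\min\{-\eta b,\overline{c}_{K-1}(\cdot,w,b)\}$ is concave as the minimum of an affine function and a constant.

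For the inductive step, assume $\overline{J}_{k+1}(\cdot,w',b')$ is concave on $\overline{\mathcal{P}}_{k+1}$ and write it as $\inf_{\alpha\in\mathcal{A}_{w',b'}}\left<\alpha,\cdot\right>$ for some family of affine functions. Substituting this representation into (\ref{continue_cost_equn}) together with the definition of $\overline{\tau}_{k+1}$ from (\ref{belief_transition_equn}) invokes the key identity
\begin{equation*}
\sum_{n=k+1}^{K} p(n)\, f_k(u|w,n)\, \bigl\langle \alpha,\, \overline{\tau}_{k+1}(p,w,u)\bigr\rangle \;=\; \sum_{n'=k+1}^{K} \alpha(n')\, p(n')\, f_k(u|w,n'),
\end{equation*}
which is linear in $p\in\overline{\mathcal{P}}_k$ because the denominator $\sum_{\ell} p(\ell) f_k(u|w,\ell)$ that appears inside $\overline{\tau}_{k+1}$ cancels against the outer weighting $\sum_n p(n) f_k(u|w,n)$. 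Integrating the resulting affine expression against $f_R(r)$ and $f_k(u|w,n)$, then taking the infimum over selections $\alpha(u,r)$ as in the proof of Lemma~\ref{concave_lemma} (via a measurable-selection argument, which only needs pointwise infima of affine functionals), exhibits $\overline{c}_k(\cdot,w,b)$ as an infimum of affine functions on $\overline{\mathcal{P}}_k$, and so it is concave. Then $\overline{J}_k(\cdot,w,b)$ is concave as the minimum of a constant and a concave function, closing the induction.

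The main technical concern, and what I expect to be the principal obstacle, is that $\overline{\tau}_{k+1}(p,w,u)$ is genuinely undefined at points $p\in\overline{\mathcal{P}}_k$ where the denominator $\sum_{\ell} p(\ell) f_k(u|w,\ell)$ vanishes for a non-negligible set of $u$. The displayed identity resolves this cleanly: its right-hand side is affine in $p$ and well-defined on all of $\overline{\mathcal{P}}_k$, so one simply takes the right-hand side as the definition of the integrand (equivalently, by continuous extension) on the zero-denominator locus. With this convention, $\overline{c}_k$ is defined everywhere on $\overline{\mathcal{P}}_k$, coincides with the original $c_k$ on $\mathcal{P}_k$, and the concavity argument above applies verbatim.
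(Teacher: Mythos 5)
Your overall strategy is the one the paper intends: the paper's own ``proof'' of this corollary is a one-line appeal to the proof technique of Lemma~\ref{concave_lemma}, and you correctly isolate the reason that technique has a chance of surviving the extension, namely that the cancellation identity turning $\sum_n p(n)f_k(u|w,n)\left<\alpha,\overline{\tau}_{k+1}(p,w,u)\right>$ into the affine expression $\sum_{n'}\alpha(n')p(n')f_k(u|w,n')$ never invokes non-negativity of $p$. Your handling of the zero-denominator locus by continuous extension is also fine as far as it goes.

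However, there is a genuine gap, and it is not where you placed it. The obstruction is not the locus where $\sum_{\ell\ge k+1}p(\ell)f_k(u|w,\ell)$ \emph{vanishes} but the region of $\overline{\mathcal{P}}_k$ where it is \emph{negative}. The term contributed by the future cost is, pointwise in $(u,r)$, of the form $D(p,u)\cdot\inf_{\alpha}\left<\alpha,\overline{\tau}_{k+1}(p,w,u)\right>$ with $D(p,u)=\sum_{\ell\ge k+1}p(\ell)f_k(u|w,\ell)$. When $D(p,u)>0$ this equals $\inf_{\alpha}\sum_{n'}\alpha(n')p(n')f_k(u|w,n')$, an infimum of affine functions of $p$; but when $D(p,u)<0$ the multiplication reverses the infimum into a \emph{supremum} of affine functions, which is convex, not concave, in $p$. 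Equivalently, in the notation of the proof of Lemma~\ref{concave_lemma}, the inequality $\hat{s}_3\ge s_3$ rests on multiplying the nonnegative quantity $\left<\alpha_{k+1}^{(\tau(q,\cdot))},\tau(p,\cdot)\right>-\left<\alpha_{k+1}^{(\tau(p,\cdot))},\tau(p,\cdot)\right>$ by the weight $D(p,u)$, and this fails on any set of $u$ of positive measure where $D(p,u)<0$. So the claim that the argument ``applies verbatim'' on all of $\overline{\mathcal{P}}_k$ is not justified, and concavity on the full affine set is not established by this route. The statement you actually need, and which your argument does prove, is concavity on the convex subset $\{p\in\overline{\mathcal{P}}_k: p(n)\ge 0\ \mbox{for all } n\ge k+1\}$, where $D(p,u)\ge0$ for every $u$; since the points $p_k^{(k)},a_k^{1},\cdots,a_k^{K-k}$ of Theorem~\ref{inner_bound_theorem} have only their first coordinate possibly negative, their convex hull lies in this subset, and that restricted form of the corollary suffices for everything it is used for. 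You should either restrict the statement accordingly or supply a separate argument (e.g., defining the extension directly as $\inf_q\left<\alpha_k^{(q,w,b)},p\right>$ and checking it agrees with $c_k$ at the threshold points) for the negative-coordinate region.
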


\noindent
Using the above corollary,
 $\mathcal{C}_k(w,b)$ can be written as,
\begin{eqnarray}
 \label{stopping_set_equn}
 \mathcal{C}_k(w,b)&=& \mathcal{P}_k \cap
 \Big\{p\in\Re^{K-k+1}:\left<p,\textbf{1}\right>=1,
  -\eta b\le {\overline{c}}_k(p,w,b)\Big\} \mbox{.} 
\end{eqnarray}

\subsection{Inner Bound on the Optimum Stopping Set}
\label{inner_bound}
We have showed that the optimum stopping set is convex. In this section, we will identify 
points that lie along certain edges of the simplex $\mathcal{P}_k$. A convex hull of these
points will yield an inner bound to the optimum stopping set.
This will first require us to prove the following lemma, referred to as the 
\emph{Two-points} Lemma, and is a generalization of the One-point Lemma (Lemma~\ref{one_point_mass_lemma}).
It gives the optimal cost, $J_k(p,w,b)$, at stage $k$ when $p\in\mathcal{P}_k$ is such that
it places all its mass on $k$ and on some $n>k$, i.e., $p(k)+p(n)=1$. 
Throughout this and the next section
(on an outer bound) $(W_k,B_k)=(w,b)$ is fixed and hence, for the ease of presentation (and readability), we drop 
$(w,b)$ from the notations $\delta_{\ell}(w,b)$, $a_k^{\ell}(w,b)$ and 
$b_k^{\ell}(w,b)$ (to appear in these sections later). However it is understood that
these thresholds are, in general, functions of $(w,b)$.
\begin{lemma}[Two-points]
\label{two_point_mass_lemma}
 For $k=1,2,\cdots,K-1,$ if  
$p\in\mathcal{P}_k$ is such that $p(k)+p(n)=1$,
where  $k < n\le K$ then,
\begin{eqnarray*}
{J_k(p,w,b)}
&=&\min\Big\{-\eta b, p(k)\Big(T-w-\eta b\Big) + 
p(n)\Big(-\eta \phi_{n-k}(w,b)\Big)\Big\}\mbox{.}
\end{eqnarray*}
\end{lemma}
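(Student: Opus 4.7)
The plan is to reduce the computation of $J_k(p,w,b)$ to a one-step unfolding of the continuing cost $c_k(p,w,b)$ in (\ref{continue_cost_equn}), after which the One-point Lemma takes care of the rest. Because the stopping cost is $-\eta b$ regardless of $p$, by (\ref{optimal_cost_equn}) it suffices to show that
\begin{equation*}
c_k(p,w,b) \;=\; p(k)\bigl(T-w-\eta b\bigr) + p(n)\bigl(-\eta\,\phi_{n-k}(w,b)\bigr).
\end{equation*}
Under the hypothesis $p(k)+p(n)=1$, only two summands survive in (\ref{continue_cost_equn}): the $p(k)$ term contributes $p(k)(T-w-\eta b)$ directly, and the sum collapses to a single $p(n)$ term, giving
\begin{equation*}
c_k(p,w,b) \;=\; p(k)(T-w-\eta b) + p(n)\,\mathbb{E}_k\!\left[U_{k+1}+J_{k+1}\bigl(\tau_{k+1}(p,w,U_{k+1}),\,w+U_{k+1},\,\max\{b,R_{k+1}\}\bigr)\bigm|w,n\right].
\end{equation*}

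Next I would show that the posterior $\tau_{k+1}(p,w,U_{k+1})$ concentrates on $n$. Under the conditional law given $N=n$, we have $U_{k+1}\in[0,T-w)$ almost surely, and for such $u$ the belief update (\ref{belief_transition_equn}) yields $\tau_{k+1}(p,w,u)(n)=1$ since $f_k(u|w,k)=0$ on $[0,T-w)$ (a wake-up at a $(k+1)$-th stage rules out $N=k$). With the successor belief being a point mass on $n$, the One-point Lemma applies at stage $k+1$ with ``number remaining'' $n-(k+1)$, giving
\begin{equation*}
J_{k+1}\bigl(\tau_{k+1}(p,w,U_{k+1}),w+U_{k+1},\max\{b,R_{k+1}\}\bigr)\;=\;-\eta\,\max\bigl\{b,R_{k+1},\phi_{n-k-1}\bigl(w+U_{k+1},\max\{b,R_{k+1}\}\bigr)\bigr\}.
\end{equation*}
Substituting this back and factoring out $-\eta$ from the $p(n)$ expectation leaves the quantity
\begin{equation*}
\mathbb{E}_k\!\left[\max\bigl\{b,R_{k+1},\phi_{n-k-1}\bigl(w+U_{k+1},\max\{b,R_{k+1}\}\bigr)\bigr\}-U_{k+1}/\eta\,\Bigm|\,w,n\right],
\end{equation*}
which by Definition \ref{Defn:expectation} (the density $f_k(\cdot|w,n)$ depends on $n$ and $k$ only through $n-k$, and on $w$ only through $T-w$) equals the expectation in Definition \ref{phi_threshold_definition} with index $\ell=n-k$. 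Hence it is precisely $\phi_{n-k}(w,b)$, and the desired formula drops out.

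The main obstacle is the careful bookkeeping in the two indexing reductions: first, verifying that the Bayes update really does kill the $p(k)$-mass for every $u$ in the support of $f_k(\cdot|w,n)$ (so the issue of $f_k(u|w,k)$ being a point mass at $T-w$ is harmless under the $N=n$ conditional); and second, matching $\mathbb{E}_k[\cdot|w,n]$ against the $\mathbb{E}_{K-\ell}[\cdot|w,K]$ appearing in (\ref{phi_k_equn}) by the $n-k\leftrightarrow \ell$ correspondence. No induction beyond the already-established One-point Lemma is required, since one step of unfolding immediately drops us into a point-mass belief at stage $k+1$.
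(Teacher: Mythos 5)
Your proposal is correct and follows essentially the same route as the paper: expand $c_k(p,w,b)$ from (\ref{continue_cost_equn}) noting only the $p(k)$ and $p(n)$ terms survive, observe that the Bayes update (\ref{belief_transition_equn}) sends the belief to the point mass on $n$, invoke the One-point Lemma at stage $k+1$, and match the resulting expectation against Definition~\ref{phi_threshold_definition} via the $n-k\leftrightarrow\ell$ correspondence. The only difference is that you spell out the index bookkeeping (and the harmlessness of the $N=k$ atom at $T-w$) that the paper leaves implicit; no gap.
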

\begin{proof}
Using (\ref{continue_cost_equn}) we can write,
\begin{eqnarray*}
c_k(p,w,b)&=&p(k)\Big(T-w-\eta b\Big)\\
&&+p(n)\mathbb{E}_{k}\bigg[U_{k+1}+
J_{k+1}\Big(\tau_{k+1}(p,w,U_{k+1}),w+U_{k+1},\max\{b,R_{k+1}\}\Big)\bigg|w,n\bigg]\mbox{.}
\end{eqnarray*}
For $p$ given as in the hypothesis, the belief in the next state is
such that $\tau_{k+1}(p,w,u)(n)=1$.  Using this observation, 
Lemma~\ref{one_point_mass_lemma} (One-point), and the definition of $\phi_{n-k}$ in (\ref{phi_k_equn}),
we obtain the desired result.
\end{proof}

\emph{Discussion of Lemma~\ref{two_point_mass_lemma}}: The Two-points Lemma
(Lemma~\ref{two_point_mass_lemma}) can be used to obtain certain threshold points
in the following way.
When $p\in\mathcal{P}_k$ has mass only on $k$ and on some $n$, $k<n\le K$, then
using Lemma~\ref{two_point_mass_lemma}, the continuing cost can be
written as a function of $p(n)$ as,
\begin{eqnarray}
\label{ck_threshold_equn}
c_k(p,w,b)&=&\Big(T-w-\eta b\Big)-
p(n)\Big(T-w-\eta \Big(b-\phi_{n-k}(w,b)\Big)\Big)\mbox{.}
\end{eqnarray}
From Lemma \ref{phi_lemma_equn}, it follows that $c_k(p,w,b)$ in  
(\ref{ck_threshold_equn})  is a
decreasing function of $p(n)$. Let $p_k^{(k)}$ and $p_k^{(n)}$ be {pmfs} in 
$\mathcal{P}_k$ with mass only on $N=k$ and $N=n$ respectively.
These are two of the corner points of the simplex $\mathcal{P}_k$ 
(as an example, Fig.~\ref{simplex} illustrates the simplex and the corner
points for stage $k=K-2$. With at most two more nodes to go, $\mathcal{P}_{K-2}$
is a two dimensional simplex in $\Re^3$. $p_{K-2}^{(K-2)}$, 
$p_{K-2}^{(K-1)}$ and $p_{K-2}^{(K)}$ are the corner points of this simplex).
\begin{figure}
\centering
\includegraphics[scale=0.35]{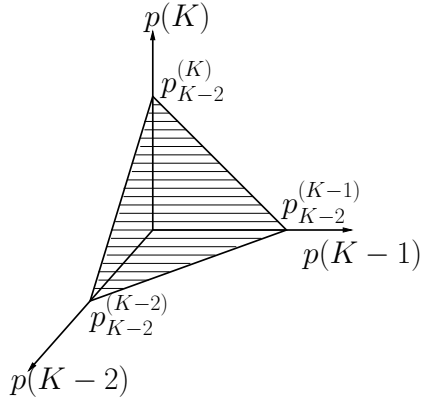}
\caption{Probability simplex, $\mathcal{P}_{K-2}$, at stage $K-2$. A belief state at stage $K-2$ is a pmf on the points
$K-2$, $K-1$ and $K$ (i.e., no-more, one-more and two-more relays to go, respectively).
Thus $\mathcal{P}_{K-2}$ is a two dimensional simplex in $\Re^3$.\label{simplex}}
\end{figure}
At stage $k$  as we move along the line joining the points 
$p_k^{(k)}$ and $p_k^{(n)}$ (Fig.~\ref{cost_case1} and \ref{cost_case2}
illustrates this as $p(n)$ going from $0$ to $1$), the cost of continuing in (\ref{ck_threshold_equn}) decreases 
and there is a threshold below which it is optimal to 
transmit and beyond which it is optimal to continue. 
The value of this threshold is that value of $p(n)$ in 
(\ref{ck_threshold_equn}) at which the  continuing cost 
becomes equal to $-\eta b$. 
Let $\delta_{n-k}$ 
denote this threshold value, then
\begin{eqnarray*}
 \delta_{n-k}&=& \frac{T-w}{T-w-\eta \Big(b-\phi_{n-k}(w,b)\Big)}\mbox{.}
\end{eqnarray*}
The cost of continuing in (\ref{ck_threshold_equn}) as a function of $p(n)$
along with the stopping cost, $-\eta b$, is shown in Fig.~\ref{cost_case1} and \ref{cost_case2}.
The threshold $\delta_{n-k}$ is the point of intersection of these two cost functions.
The value of the continuing cost $c_k(p,w,b)$ at $p(n)=1$ is $-\eta \phi_{n-k}(w,b)$.
Note that in the case when $b>\phi_{n-k}(w,b)$ the 
threshold $\delta_{n-k}$ will be greater than $1$ 
in which case it is optimal to stop for any $p$ on 
the line joining $p_k^{(k)}$ and $p_k^{(n)}$. 
\hfill $\blacksquare$
\begin{figure}[h]
\centering
\subfigure[]{
\includegraphics[scale=0.35]{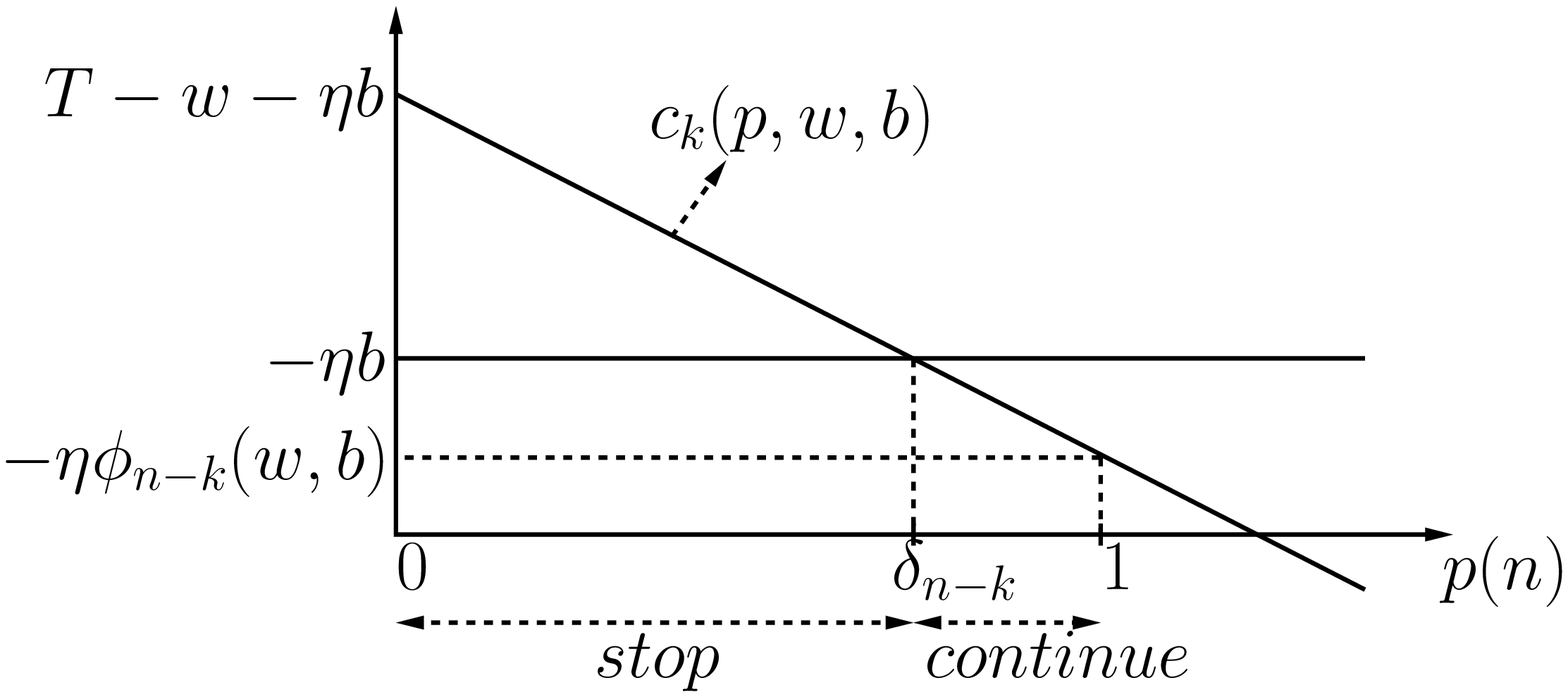}
\label{cost_case1}
}
\subfigure[]{
\includegraphics[scale=0.35]{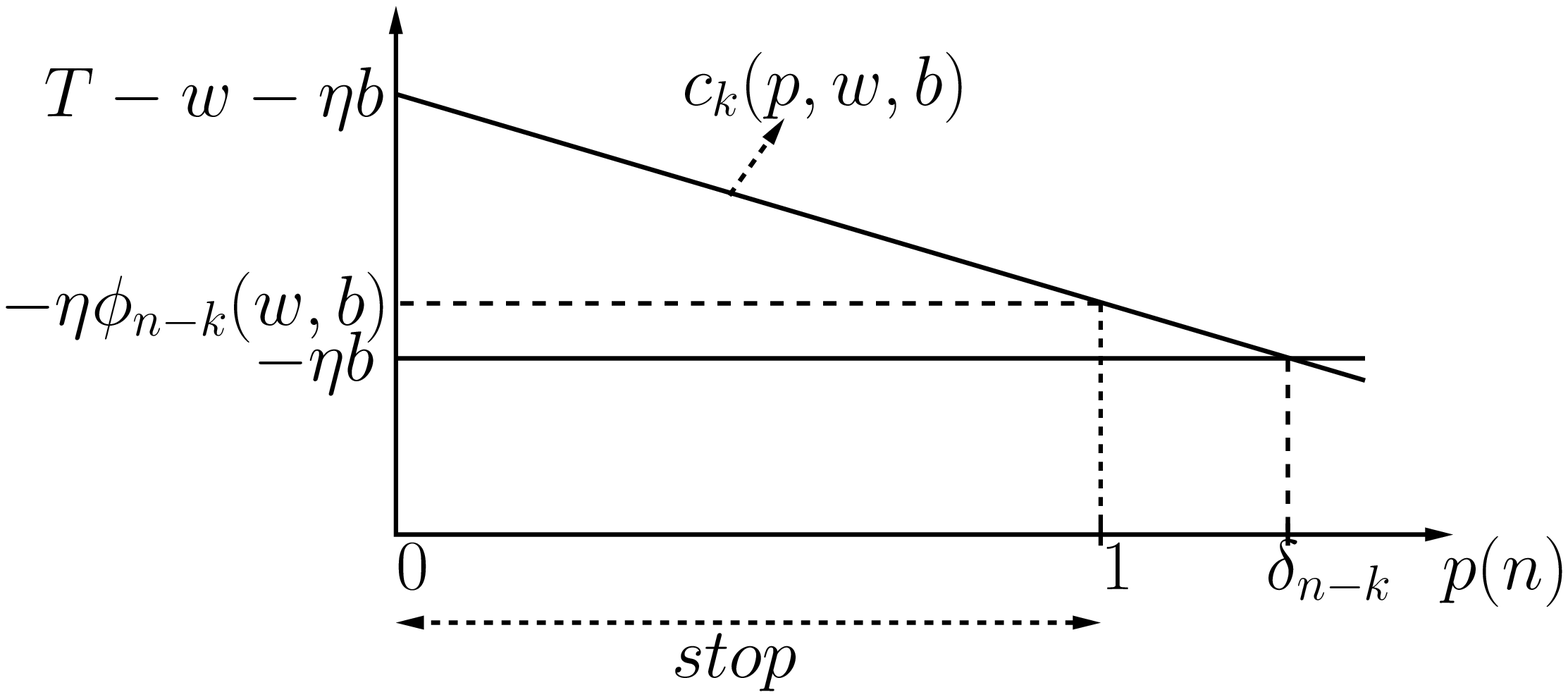}
\label{cost_case2}
}
\caption{\label{cost_cases_figure} Depiction of the thresholds
  $\delta_{n-k}(w,b)$.  $c_k(p,w,b)$ in
  Equation~(\ref{ck_threshold_equn}) is plotted as a function of
  $p(n)$.  
  Also shown is the constant function $-\eta b$ which is the stopping cost.
  $\delta_{n-k}$ is the point of intersection of
  these two functions.  \subref{cost_case1} When $b\le
  \phi_{n-k}$.  \subref{cost_case2} When $b>\phi_{n-k}(w,b)$.}
\end{figure}

There are similar thresholds along each edge of the simplex 
$\mathcal{P}_k$ starting from the corner point $p_k^{(k)}$. In general, 
let us define for $\ell=1,2,\cdots,K$,
\begin{eqnarray}
\label{a_l_equn}
 \delta_{\ell}&=& \frac{T-w}{T-w-\eta \Big(b-\phi_{\ell}(w,b)\Big)}\mbox{.}
\end{eqnarray}

\emph{Remark:} Note that (\ref{ck_threshold_equn}) will also hold for the extended function 
$\overline{c}_k(p,w,b)$, where now $p\in\overline{\mathcal{P}}_k$.
In terms of the extended function, $\delta_{n-k}$ represents the value of $p(n)$ 
(in (\ref{ck_threshold_equn}) with $c_k$ replaced by $\overline{c}_k$) at which $\overline{c}_k(p,w,b)=-\eta b$.

Recall that (from Lemma~\ref{two_point_mass_lemma}) the above discussion began with a $p\in\mathcal{P}_k$ 
such that $p(k)+p(n)=1$. At the threshold of interest we have $p(n)=\delta_{n-k}$ and hence
$p(k)=1-\delta_{n-k}$, and the rest of the components are zero. We denote this vector 
as $a_k^{n-k}$. For instance in Fig.~\ref{lower_bound_figure}, where the 
face of the two dimensional simplex $\mathcal{P}_{K-2}$ is shown, the threshold along the lower
edge of the simplex is $a_{K-2}^{1}=[1-\delta_1,\delta_1,0]$
and that along the other edge is $a_{K-2}^{2}=[1-\delta_2,0,\delta_2]$.
Since it is possible for $\delta_{n-k}>1$, therefore the vector threshold $a_k^{n-k}$
is not restricted to lie in the simplex $\mathcal{P}_k$, however it always stays in the affine set ${\overline{\mathcal{P}}}_k$.
We formally define these thresholds next.

\begin{figure*}
\centering
\subfigure[]{
\includegraphics[scale=0.28]{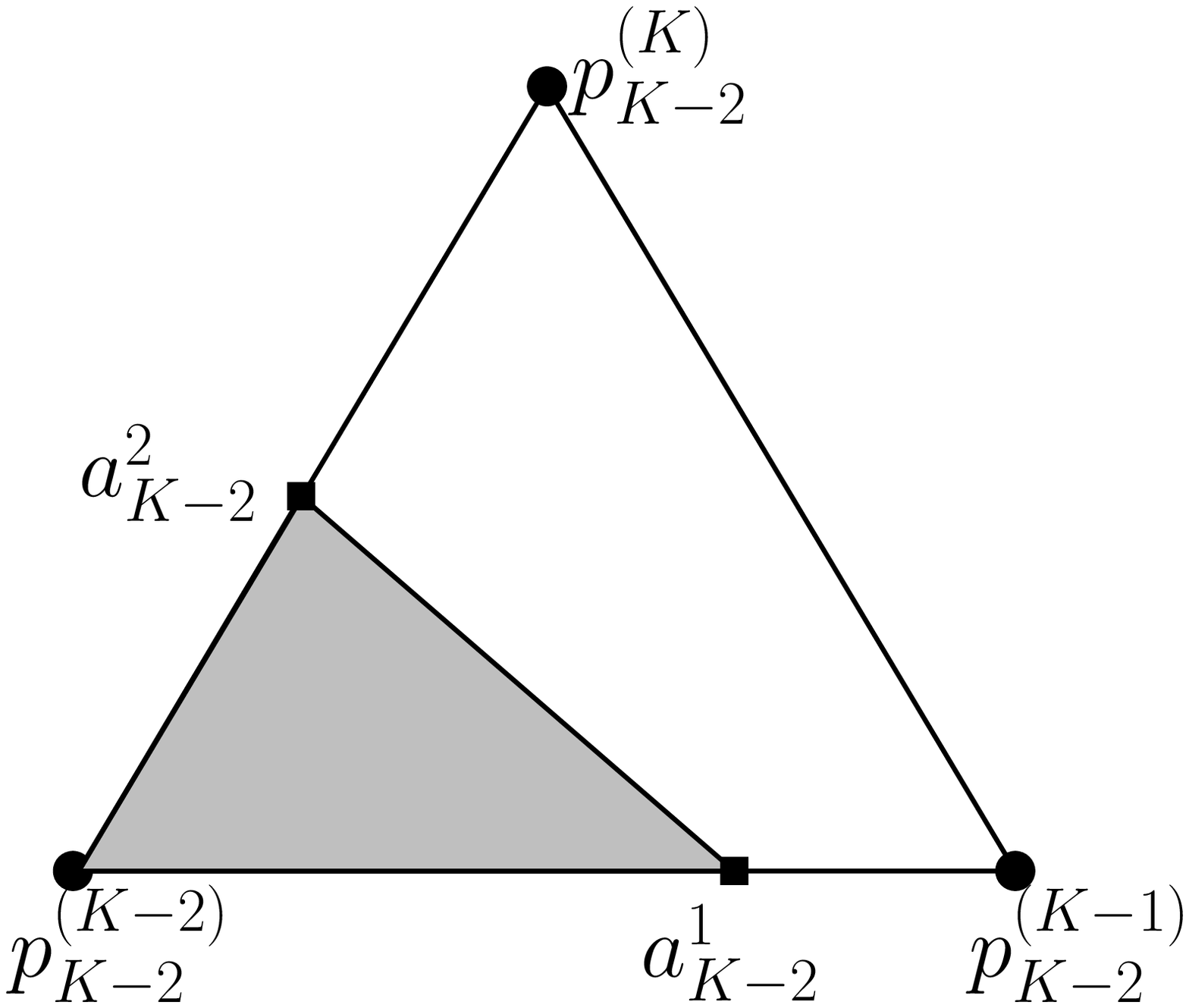}
\label{lower_case1}
}
\centering
\subfigure[]{
\includegraphics[scale=0.28]{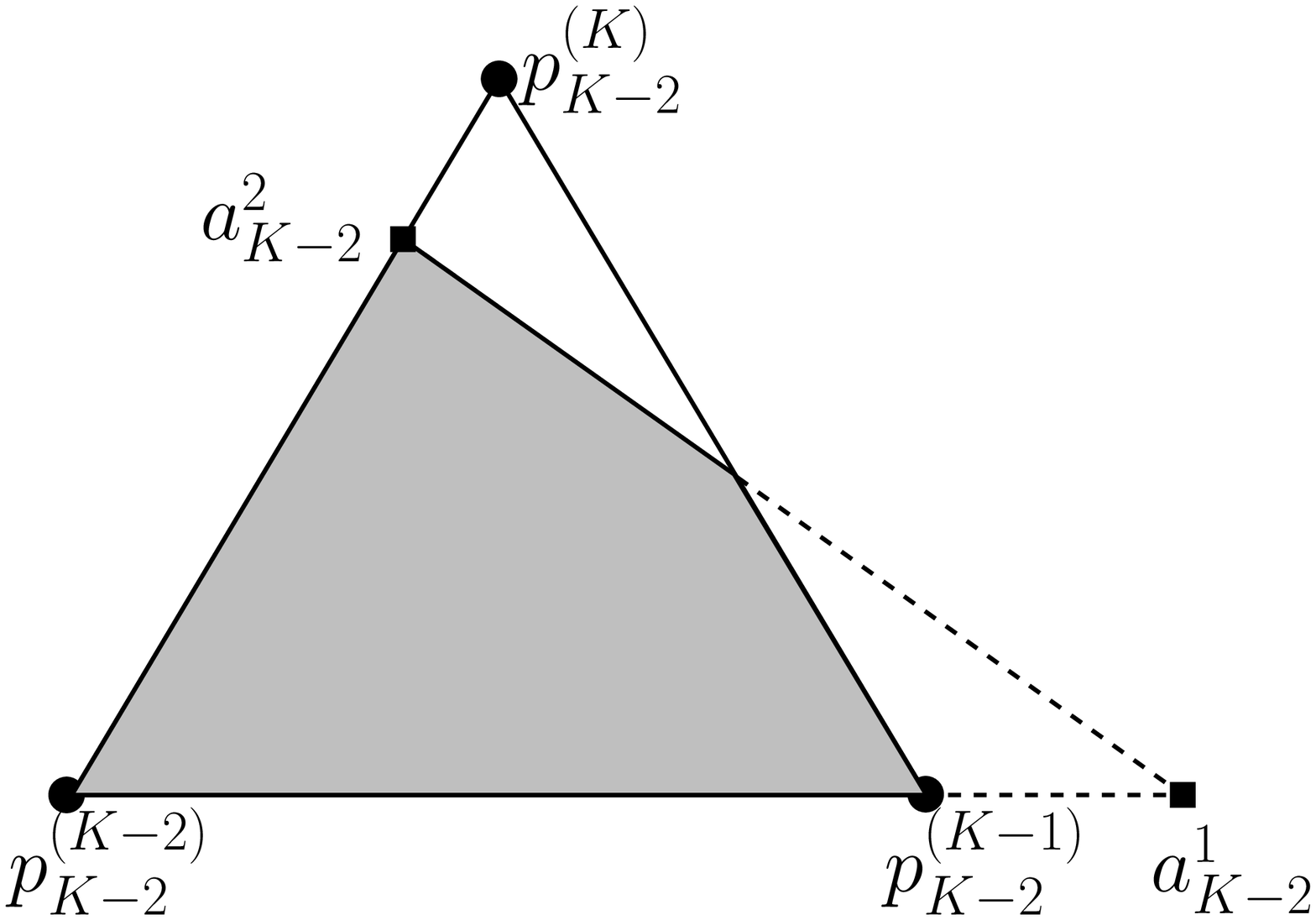}
\label{lower_case2}
}
\centering
\subfigure[]{
\includegraphics[scale=0.28]{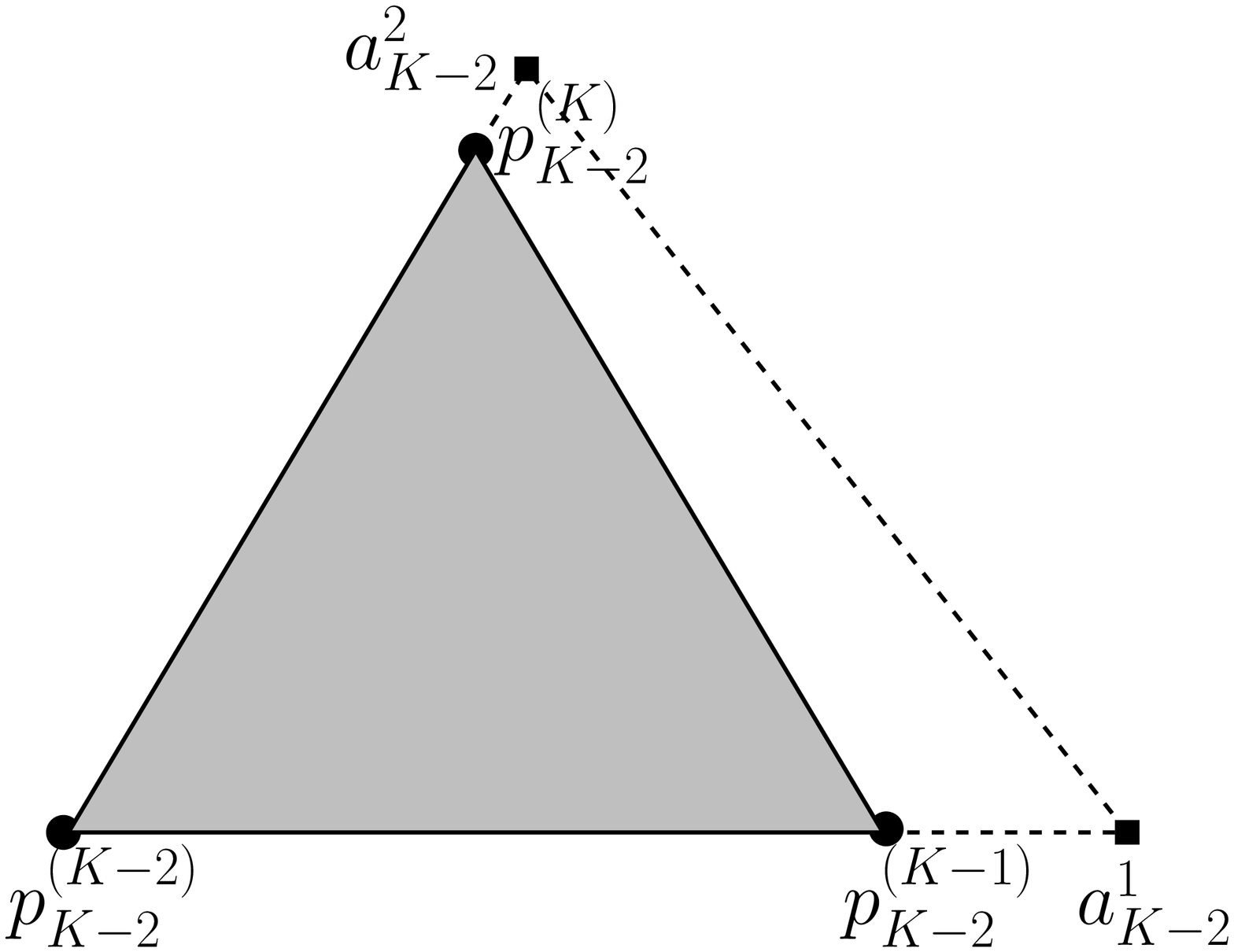}
\label{lower_case3}
}
\caption{\label{lower_bound_figure} Depiction of the inner bound
  $\underline{\mathcal{C}}_{K-2}(w,b)$. 
  In the examples in \subref{lower_case1},
  \subref{lower_case2}, and \subref{lower_case3} we only show the face
  of the simplex, $\mathcal{P}_{K-2}$, in Fig.~\ref{simplex}, with the inner bound being shown as the shaded
  region.  \subref{lower_case1} When $\delta_{1}$ and
  $\delta_2$ are both less than $1$.  \subref{lower_case2}
  When $\delta_{1}>1$ and $\delta_{2}<1$.
  \subref{lower_case3} When $\delta_{1} > 1$ and
  $\delta_2 > 1$.}
\vspace*{-4 mm}
\end{figure*}

\begin{definition}
For a given $k\in\{1,2,\cdots,K-1\}$, for each  $\ell=1,2,\cdots,K-k$ define  $a_k^{\ell}$ as a 
$K-k+1$ dimensional point with the first and the $\ell+1$ th 
components equal to   $1-\delta_\ell$  and  
$\delta_\ell$ respectively, the rest of the components are zeros.
As mentioned before, $a_k^{\ell}$ lies on the line joining $p_k^{(k)}$
and $p_k^{(k+\ell)}$. At stage $k$ there are $K-k$ such points, one corresponding to
each edge in $\mathcal{P}_k$ emanating from
the corner point $p_k^{(k)}$.
For an illustration of these points see Fig.~\ref{lower_bound_figure} for the case $k=K-2$.
\hfill $\blacksquare$
\end{definition}

Referring to Fig.~\ref{lower_case1} (which depicts the case, $k=K-2$), 
suppose all the vector thresholds,
$a_k^{l}$, lie within the simplex $\mathcal{P}_k$ then,
since at these points the stopping cost $(-\eta b)$ is equal to the continuing cost
($c_k(a_k^{l},w,b)$), all these points lie in the optimum stopping set $\mathcal{C}_k(w,b)$.
Note that  the corner point $p_k^{(k)}$ (belief with all the mass on no-more relays to go)
also lies in $\mathcal{C}_k(w,b)$.
Since we have already shown that $\mathcal{C}_k(w,b)$ is convex, 
the convex hull of these points will yield an inner bound.
However as mentioned earlier (and as depicted in Fig.~\ref{lower_case2} and 
\ref{lower_case3})  it is possible for some or all the thresholds
$a_k^{l}$ to lie outside the simplex (and hence these thresholds do not 
belong to $\mathcal{C}_k(w,b)$).
This is where
we will use Corollary~\ref{concave_affine_lemma}, where the concavity result
of the continuing cost, $c_k(p,w,b)$, is extended to the affine set ${\overline{\mathcal{P}}}_k$.
We next state this \emph{inner bound} theorem:

\begin{theorem}[Inner bound]
\label{inner_bound_theorem}
 For $k=1,2,\cdots,K-1$, Recalling that $p_k^{(k)}$ is the pmf in $\mathcal{P}_k$ with point mass on $k$, define
\begin{eqnarray*} 
\underline{\mathcal{C}}_k(w,b):=\mathcal{P}_k \cap conv\Big\{p_k^{(k)},a_k^1,\cdots,a_k^{K-k}\Big\}\mbox{,}
\end{eqnarray*}
where $conv$ denotes the convex hull of the given 
points. Then $\underline{\mathcal{C}}_k(w,b)\subseteq\mathcal{C}_k(w,b)$.
\end{theorem}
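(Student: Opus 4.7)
The plan is to use the extended concavity of $\overline{c}_k(\cdot, w, b)$ on $\overline{\mathcal{P}}_k$ (Corollary~\ref{concave_affine_lemma}) to propagate the stopping inequality $\overline{c}_k(p,w,b) \ge -\eta b$ from the generators $\{p_k^{(k)}, a_k^1, \ldots, a_k^{K-k}\}$ to their entire convex hull. Combined with the characterization (\ref{stopping_set_equn}) of $\mathcal{C}_k(w,b)$ in terms of $\overline{c}_k$, this will give the inclusion after intersecting with $\mathcal{P}_k$.

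First I would evaluate $\overline{c}_k$ at each generator. At $p_k^{(k)}$, formula (\ref{continue_cost_equn}) (whose sum vanishes because $p(n)=0$ for $n>k$) yields $\overline{c}_k(p_k^{(k)},w,b) = T-w-\eta b$, which strictly exceeds $-\eta b$ since $w<T$. At $a_k^\ell$, the (possibly extended) belief places mass $1-\delta_\ell$ on $k$ and $\delta_\ell$ on $k+\ell$. Since $\delta_\ell>0$ (both numerator and denominator in (\ref{a_l_equn}) are positive by Lemma~\ref{phi_lemma_equn}), $\overline{\tau}_{k+1}(a_k^\ell, w, u)$ is a legitimate point mass on $k+\ell$, so the One-point Lemma applies to the continuation and the derivation that produced the Two-points Lemma yields the formula (\ref{ck_threshold_equn}),
\[
\overline{c}_k(a_k^\ell, w, b) \;=\; (T-w-\eta b) \;-\; \delta_\ell\bigl(T-w-\eta(b-\phi_\ell(w,b))\bigr).
\]
Substituting the definition (\ref{a_l_equn}) of $\delta_\ell$ collapses the second term to exactly $T-w$, leaving $\overline{c}_k(a_k^\ell,w,b) = -\eta b$.

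With $\overline{c}_k(p_i,w,b) \ge -\eta b$ established at every generator $p_i \in \{p_k^{(k)}, a_k^1, \ldots, a_k^{K-k}\}$, I would apply Corollary~\ref{concave_affine_lemma}: for any $p = \sum_i \lambda_i p_i$ in the convex hull (with $\lambda_i \ge 0$, $\sum_i \lambda_i = 1$),
\[
\overline{c}_k(p,w,b) \;\ge\; \sum_i \lambda_i\, \overline{c}_k(p_i,w,b) \;\ge\; -\eta b.
\]
Intersecting the hull with $\mathcal{P}_k$ and invoking the representation (\ref{stopping_set_equn}) of $\mathcal{C}_k(w,b)$ then delivers $\underline{\mathcal{C}}_k(w,b) \subseteq \mathcal{C}_k(w,b)$.

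The main subtlety, and the reason the extension to $\overline{\mathcal{P}}_k$ is essential, is the case $b > \phi_\ell(w,b)$: here $\delta_\ell>1$, so $a_k^\ell$ has a negative $k$-th coordinate and lies outside $\mathcal{P}_k$. The Two-points Lemma as stated does not apply directly at such a point, so I must work with $\overline{c}_k$ on the affine hull, using the extended transition $\overline{\tau}_{k+1}$. The observation above---that $\overline{\tau}_{k+1}(a_k^\ell,w,u)$ remains a proper point mass on $k+\ell$---keeps the continuation analysis identical to the Two-points case, after which the concavity argument closes the proof; the final intersection with $\mathcal{P}_k$ in the definition of $\underline{\mathcal{C}}_k(w,b)$ discards any portion of the hull that had spilled outside the simplex.
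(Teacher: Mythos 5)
Your proposal is correct and follows essentially the same route as the paper's proof: evaluate $\overline{c}_k$ at the generators ($T-w-\eta b$ at $p_k^{(k)}$ and exactly $-\eta b$ at each $a_k^\ell$), use the concavity of $\overline{c}_k$ on $\overline{\mathcal{P}}_k$ from Corollary~\ref{concave_affine_lemma} to conclude the convex hull lies in the superlevel set $\{p:\left<p,\textbf{1}\right>=1,\ -\eta b\le\overline{c}_k(p,w,b)\}$, and finish via (\ref{stopping_set_equn}). Your explicit Jensen-type inequality $\overline{c}_k(\sum_i\lambda_i p_i,w,b)\ge\sum_i\lambda_i\overline{c}_k(p_i,w,b)$ is just the paper's observation that a superlevel set of a concave function is convex, and your added verification that $\overline{\tau}_{k+1}(a_k^\ell,w,u)$ remains a point mass on $k+\ell$ is a welcome elaboration of the Remark the paper cites.
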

\begin{proof}
The way the points $a_k^{\ell}$ are defined using $\delta_{\ell}$ it follows that 
${\overline{c}}_k(a_k^{\ell},w,b)=-\eta b$ (see Remark following
(\ref{a_l_equn})). $p_k^{(k)}$ is the pmf with point mass on $(N=k)$, so that 
$\overline{c}_k(p_k^{(k)},w,b)=c_k(p_k^{(k)},w,b)=T-w-\eta b$ (see (\ref{continue_cost_equn})). Therefore the
 points $p_k^{(k)},a_k^1,\cdots,a_k^{K-k}\in\Big\{p\in\Re^{K-k+1}:p.\textbf{1}=1, -\eta b\le {\overline{c}}_k(p,w,b)\Big\}$ 
which is a convex set (because ${\overline{c}}_k(p,w,b)$ is concave in $p$, from Corollary~\ref{concave_affine_lemma}). 
Therefore
\begin{eqnarray*}
 conv\Big\{p_k^{(k)},a_k^1,\cdots,a_k^{K-k}\Big\}&\subseteq&
\Big\{p\in\Re^{K-k+1}:p.\textbf{1}=1, 
-\eta b\le {\overline{c}}_k(p,w,b)\Big\}
\end{eqnarray*}
and the result follows from (\ref{stopping_set_equn}).
\end{proof}

In Fig.~\ref{lower_bound_figure}, for stage $k=K-2$, we illustrate the various cases that can arise.
In each of the figures
the shaded region is the inner bound. In Fig.~\ref{lower_case1} all the thresholds lie within the simplex
and simply the convex hull of these points gives the inner bound. When some or all the thresholds
lie outside the simplex, as in Fig.~\ref{lower_case2} and \ref{lower_case3}, then the inner bound is 
obtained by intersecting the convex hull of the thresholds with the simplex. 
In Fig.~\ref{lower_case3}, where all the thresholds lie outside the simplex, the inner bound
is the entire simplex, $\mathcal{P}_{K-2}$, so that at stage $K-2$ with $(W_{K-2},B_{K-2})=(w,b)$ 
it is optimal to stop for any belief state.

\subsection{Outer Bound on the Optimum Stopping Set}
\label{outer_bound}
In this section we will obtain an outer bound (a superset) for 
the optimum stopping set. Again, as in the case of the inner bound,
we will identify certain threshold points whose convex hull will
contain the optimum stopping set.
This will require us to first prove a monotonicity result which
compares the cost of continuing  at two  belief states $p,q\in\mathcal{P}_k$
which are ordered, for instance for $k=K-2$, as in Fig~\ref{upper_bound_figure}.
$q$ in Fig.~\ref{upper_bound_figure} is such that $q(K-2)=p(K-2)$ (i.e.,
the probability that there is no-more relays to go is same in both $p$ and $q$)
and $q(K-1)=1-p(K-2)$ (i.e., all the remaining probability in $q$ is on the event
that there is one-more relay to go, while in $p$ it can be on one-more or two-more relays to go).
Thus $q$ lies on the lower edge of the simplex.
We will show that the cost of continuing at $p$ is less than that at $q$.

\begin{lemma}
\label{cost_order_lemma}
 Given $p\in\mathcal{P}_{k}$ for $k=1,2,\cdots,K-1$, define 
$q(k)=p(k)$ and $q(k+1)=1-p(k)$, then $c_{k}(p,w,b)\le c_{k}(q,w,b)$
 for any $(w,b)$.
\end{lemma}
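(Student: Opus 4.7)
The strategy is to upper bound $c_k(p,w,b)$ by the cost of a suboptimal policy that commits to stopping at the next stage, and to evaluate $c_k(q,w,b)$ exactly by invoking the One-point Lemma. Because the $p(k)$ contribution is identical in the two continuing costs, the proof reduces to comparing the remaining summations over $n\ge k+1$.

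First, I would evaluate $c_k(q,w,b)$ explicitly. Since $q$ is supported on $\{k,k+1\}$, the posterior $\tau_{k+1}(q,w,u)$ is the point mass on $k+1$ for every $u$. Lemma~\ref{one_point_mass_lemma} applied at stage $k+1$, together with $\phi_0\equiv 0$ and the non-negativity of rewards, gives $J_{k+1}(\tau_{k+1}(q,w,U_{k+1}),w+U_{k+1},\max\{b,R_{k+1}\})=-\eta\max\{b,R_{k+1}\}$. Substituting into (\ref{continue_cost_equn}), using (\ref{condexpectation_equn}), and noting that $R_{k+1}$ is independent of $(N,W_k)$, yields
\[
c_k(q,w,b)=p(k)(T-w-\eta b)+(1-p(k))\left[\tfrac{T-w}{2}-\eta\mathbb{E}[\max\{b,R\}]\right].
\]

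Next, I would upper bound $c_k(p,w,b)$ by exploiting the fact that stopping at stage $k+1$ is always feasible, so $J_{k+1}(\cdot,\cdot,b')\le -\eta b'$. Applying this in (\ref{continue_cost_equn}) with $b'=\max\{b,R_{k+1}\}$, and again using the independence of $R_{k+1}$ from $(N,W_k)$ together with $\mathbb{E}_k[U_{k+1}|w,n]=(T-w)/(n-k+1)$, I obtain
\[
c_k(p,w,b)\le p(k)(T-w-\eta b)+(T-w)\sum_{n=k+1}^{K}\frac{p(n)}{n-k+1}-\eta(1-p(k))\mathbb{E}[\max\{b,R\}].
\]

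The conclusion follows from the elementary inequality $1/(n-k+1)\le 1/2$ for every $n\ge k+1$, which gives $\sum_{n=k+1}^{K}p(n)/(n-k+1)\le (1-p(k))/2$. Plugging this into the bound on $c_k(p,w,b)$ and matching it against the explicit expression for $c_k(q,w,b)$ completes the argument. There is no serious obstacle: the key insight is that dominating $J_{k+1}$ by the stopping cost $-\eta\max\{b,R_{k+1}\}$ removes the dependence on the actual belief update $\tau_{k+1}(p,w,\cdot)$ and reduces the claim to a one-line inequality about the harmonic factors $1/(n-k+1)$, which reflects the intuition that shorter expected waits (larger $n$) make continuation cheaper than the worst case $n=k+1$ used in $q$.
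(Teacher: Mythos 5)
Your proposal is correct and follows essentially the same route as the paper's proof: both bound $c_k(p,w,b)$ from above by replacing $J_{k+1}$ with the stopping cost $-\eta\max\{b,R_{k+1}\}$, compute the expected inter-wake-up time as $(T-w)/(n-k+1)$, and conclude via $n-k+1\ge 2$. The only cosmetic difference is that the paper evaluates $c_k(q,w,b)$ through the Two-points Lemma and the explicit formula $\phi_1(w,b)=\mathbb{E}[\max\{b,R\}]-\frac{T-w}{2\eta}$, whereas you obtain the same expression directly from the One-point Lemma with $\phi_0\equiv 0$.
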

\begin{proof}
See Appendix~\ref{cost_order_proof_appendix}. 
\end{proof}

\emph{Discussion of Lemma~\ref{cost_order_lemma}:}
This lemma proves the intuitive result
that the continuing cost with a pmf $p$ that gives mass on a larger
number of relays should be smaller than with a pmf $q$ that
concentrates all such mass in $p$ on just one more relay to go. With
more relays, the cost of continuing is expected to decrease.
\hfill $\blacksquare$

Similar to the thresholds $a_k^\ell$ we define the thresholds $b_k^\ell$ that lie along certain
edges of the simplex. We will identify the threshold $a_k^{\ell}$
that is at a maximum distance from the corner point $p_k^{(k)}$ (in Fig.~\ref{upper_bound_figure}, this point
is $a_{K-2}^{1}=[1-\delta_{1},\delta_{1},0]$). Next we define the thresholds $b_{k}^{\ell}$
to be the points on the edges emanating from $p_{k}^{(k)}$, which are at this same distance.
Thus in Fig.~\ref{upper_bound_figure}, $b_{K-2}^{1}=a_{K-2}^{1}$ and $b_{K-2}^{2}=[1-\delta_{1},0,\delta_{1}]$.

\begin{figure}[h]
 \centering
 \includegraphics[scale=0.28]{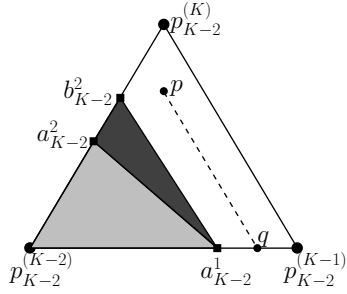}
\caption{\label{upper_bound_figure} 
The light shaded region is the inner bound. 
The outer bound is the union of 
 the light and the dark shaded regions.}
\end{figure}

\noindent
\begin{definition}
Now for $\ell=1,2,\cdots,K-k$ define
$b_k^{\ell}$ as a $K-k+1$ dimensional point with the first and
the $\ell+1$ th components equal to
$1-\delta_{\ell_{max}}$ and
$\delta_{\ell_{max}}$ respectively, the rest
of the components are zeros. Each of the $b_k^{\ell}$ 
are at equal distance from $p_k^{(k)}$ but on a different edge starting from $p_k^{(k)}$.
\hfill $\blacksquare$
\end{definition}
\noindent

Using Lemma~\ref{cost_order_lemma}, we show that the convex hull 
of the thresholds $b_k^{l}$ along with the corner point $p_k^{(k)}$
constitutes an outer bound for the optimum stopping set.
The idea of the proof can be illustrated using Fig.~\ref{upper_bound_figure}.
$p$ in Fig.~\ref{upper_bound_figure} is outside the convex hull
and $q$ is obtained from $p$ as in Lemma~\ref{cost_order_lemma}.
At $q$ it is optimal to continue since it is beyond the threshold $a_{K-2}^1$
and hence the continuing cost at $q$, $c_k(q,w,b)$, is less than the stopping cost 
$-\eta b$. From Lemma~\ref{cost_order_lemma} it follows that the continuing cost 
at $p$, $c_k(p,w,b)$, is also less than $-\eta b$ so that it is optimal
to continue at $p$ as well, proving that $p$ does not belong to the optimum stopping set.
Thus the convex hull contains the optimum stopping set.
We formally state and prove this \emph{outer bound} theorem next.
\begin{theorem}[Outer bound]
\label{outer_bound_theorem}
For $k=1,2,...,K-1,$ define
\begin{eqnarray*}
\overline{\mathcal{C}}_k(w,b)=\mathcal{P}_k \cap conv\Big\{p_k^{(k)},b_k^1,\cdots,b_k^{K-k}\Big\}\mbox{.}
\end{eqnarray*}
Then $\mathcal{C}_{k}(w,b)\subseteq \overline{\mathcal{C}}_{k}(w,b)$.
\end{theorem}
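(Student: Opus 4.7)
My plan is to prove the contrapositive: whenever $p \in \mathcal{P}_k$ lies outside $\overline{\mathcal{C}}_k(w,b)$, I will show that continuing is strictly better than stopping at $p$, i.e., $c_k(p,w,b) < -\eta b$, so that $p \notin \mathcal{C}_k(w,b)$. As a preliminary step, I would rewrite the convex hull in the definition of $\overline{\mathcal{C}}_k(w,b)$ in a simpler form. Since $p_k^{(k)}$ has first coordinate $1$ while each $b_k^{\ell}$ has first coordinate $1 - \delta_{\ell_{\max}}$, a direct computation with convex combinations yields
\[
\overline{\mathcal{C}}_k(w,b) \;=\; \big\{p \in \mathcal{P}_k : p(k) \geq 1 - \delta_{\ell_{\max}}\big\}.
\]
(If $\delta_{\ell_{\max}} \geq 1$, this set equals $\mathcal{P}_k$ and the theorem is trivial; so in what follows I assume $\delta_{\ell_{\max}} < 1$.)

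Given any $p \in \mathcal{P}_k$ with $p(k) < 1 - \delta_{\ell_{\max}}$, I would define $q$ exactly as in the hypothesis of Lemma~\ref{cost_order_lemma}: $q(k) = p(k)$, $q(k+1) = 1 - p(k)$, and $q(n) = 0$ for $n > k+1$; thus $q$ lies on the edge of $\mathcal{P}_k$ joining $p_k^{(k)}$ and $p_k^{(k+1)}$. Because $\ell_{\max}$ maximizes $\delta_\ell$ over $\ell \in \{1,\ldots,K-k\}$ by construction, we have $\delta_{\ell_{\max}} \geq \delta_1$, and therefore
\[
q(k+1) \;=\; 1 - p(k) \;>\; \delta_{\ell_{\max}} \;\geq\; \delta_1.
\]
The Two-points Lemma combined with expression (\ref{ck_threshold_equn}) applied at $n = k+1$ shows that $c_k(q,w,b)$ is affine in $q(k+1)$ with slope $-\big(T - w - \eta(b-\phi_1(w,b))\big)$, which is $\leq 0$ by Lemma~\ref{phi_lemma_equn} and in fact strictly negative in the present non-trivial case. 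It equals $-\eta b$ exactly at $q(k+1) = \delta_1$, so the strict inequality $q(k+1) > \delta_1$ forces $c_k(q,w,b) < -\eta b$.

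Invoking Lemma~\ref{cost_order_lemma} to compare $p$ and $q$ then yields $c_k(p,w,b) \leq c_k(q,w,b) < -\eta b$, which establishes $p \notin \mathcal{C}_k(w,b)$ and hence $\mathcal{C}_k(w,b) \subseteq \overline{\mathcal{C}}_k(w,b)$. I do not anticipate any real computational obstacles. The one conceptual point to get right is why the edge length $\delta_{\ell_{\max}}$ (rather than $\delta_1$) appears in $\overline{\mathcal{C}}_k(w,b)$: since Lemma~\ref{cost_order_lemma} always collapses $p$ onto the single edge terminating at $p_k^{(k+1)}$, one only needs $1 - p(k)$ to exceed the edge threshold $\delta_1$, and the bound $p(k) \geq 1 - \delta_{\ell_{\max}}$ guarantees this uniformly while keeping the outer-bound description independent of which edge one ultimately exploits in the monotonicity step.
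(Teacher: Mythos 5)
Your proposal is correct and follows essentially the same argument as the paper: identify $\ell_{\max}$, handle the trivial case $\delta_{\ell_{\max}}\ge 1$, collapse an arbitrary $p$ outside the convex hull onto the edge toward $p_k^{(k+1)}$ via the $q$ of Lemma~\ref{cost_order_lemma}, use $q(k+1)=1-p(k)>\delta_{\ell_{\max}}\ge\delta_1$ to conclude $c_k(q,w,b)<-\eta b$, and then apply the monotonicity lemma. The only difference is that you spell out the half-space description of $conv\{p_k^{(k)},b_k^1,\cdots,b_k^{K-k}\}$ and the affine-decrease step from (\ref{ck_threshold_equn}), which the paper leaves implicit.
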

\begin{proof}
  Let $\ell_{max}=\argmax_{\ell=1,2,\cdots,K-k}\delta_{\ell}$. If
  $\delta_{\ell_{max}}\ge 1$, then $\overline{\mathcal{C}}_{k}
  (w,b)=\mathcal{P}_{k} (\supseteq \mathcal{C}_{k}(w,b))$ and the
  result trivially follows.  Hence, let us consider the case where $\delta_{\ell_{max}}<1$. 
  Pick any $p\notin \overline{\mathcal{C}}_{k}(w,b)$. We will show that $p\notin\mathcal{C}_{k}(w,b)$.
  Let $q\in\mathcal{P}_{k}$ be such that $q(k)=p(k)$ and
  $q(k+1)=1-p(k)$.

  $p\notin \overline{\mathcal{C}}_k(w,b)$ implies that
  $p(k)<1-\delta_{\ell_{max}}$. Since $q(k+1)=1-p(k)>\delta_{\ell_{max}}\ge \delta_1$, it follows that under $q$ it is
  optimal to continue so that $q\notin \mathcal{C}_{k}(w,b)$ i.e., $c_k(q,w,b)<-\eta b$.
  Finally by applying Lemma~\ref{cost_order_lemma} we can write $c_{k}(p,w,b)\le c_{k}(q,w,b)<-\eta b$. 
  This means that at $p$ it is optimal to continue so that $p\notin\mathcal{C}_{k}(w,b)$.
\end{proof}

The outer bound for $k=K-2$ is illustrated in Fig.~\ref{upper_bound_figure}.
The light shaded region is the inner bound. The outer bound is the union of the light
and the dark shaded regions. 
The boundary of the optimum stopping set falls within
the dark shaded region.
For any $p$ within the inner bound we know that it is optimal to stop
and for any $p$ outside the outer bound it is optimal to continue.
We are uncertain about the optimal action for belief states within the 
dark shaded region.

\section{Optimum Relay Selection in a Simplified Model}
\label{section:simplified_model}
The bounds obtained in the previous section require us to compute
the threshold functions $\{\phi_\ell: \ell=0,1,\cdots,K-1\}$ (see 
Definition~\ref{phi_threshold_definition}) recursively. These are computationally
very intensive to obtain. Hence, in this section we simplify the exact model and
extract a simple selection rule. Our aim is to apply this simple rule to the
exact model and compare its performance with the other policies. 

\subsection{The Simplified Model}
Now we describe our \emph{simplified model}. There are $\tilde{N}$ relays. 
Here, $\tilde{N}$ is a constant and is known to the source. 
The key simplification in this model is that here the relay 
nodes wake-up \emph{at the first $\tilde{N}$ points of a Poisson process
of rate $\frac{\tilde{N}}{T}$}. The following are the motivations for considering such a simplification.
Note that in our actual model (Section~\ref{system_model}), when $N=\tilde{N}$, the inter wake-up times
$\{U_k: 1\le k\le \tilde{N}\}$ are identically distributed \cite[Chapter 2]{orderstatistics}, but not independent.
Their common cdf (cumulative distribution function)  is  $F_{U_k|N}(u|\tilde{N})=1-{(1-\frac{u}{T})}^{\tilde{N}}$ for $u\in(0,T)$.
From Fig.~\ref{cdf_figu} we observe that the cdf of $\{U_k: 1\le k\le \tilde{N}\}$ is close to that of an exponential random variable of
parameter $\frac{\tilde{N}}{T}$ and the approximation becomes better for large values of $\tilde{N}$ (for a fixed $T$).
This motivates us to approximate the actual inter wake-up times by exponential random variable of rate $\frac{\tilde{N}}{T}$.
Further in the simplified model we allow the inter wake-up times to be independent.
Finally, observe that in the simplified model 
 the average number of relays that wake-up within the duty cycle
$T$ is $\tilde{N}$ which is same as that in the exact model when $N=\tilde{N}$. 

\begin{figure}[ht]
\centering
\subfigure[]{
\includegraphics[scale=0.3]{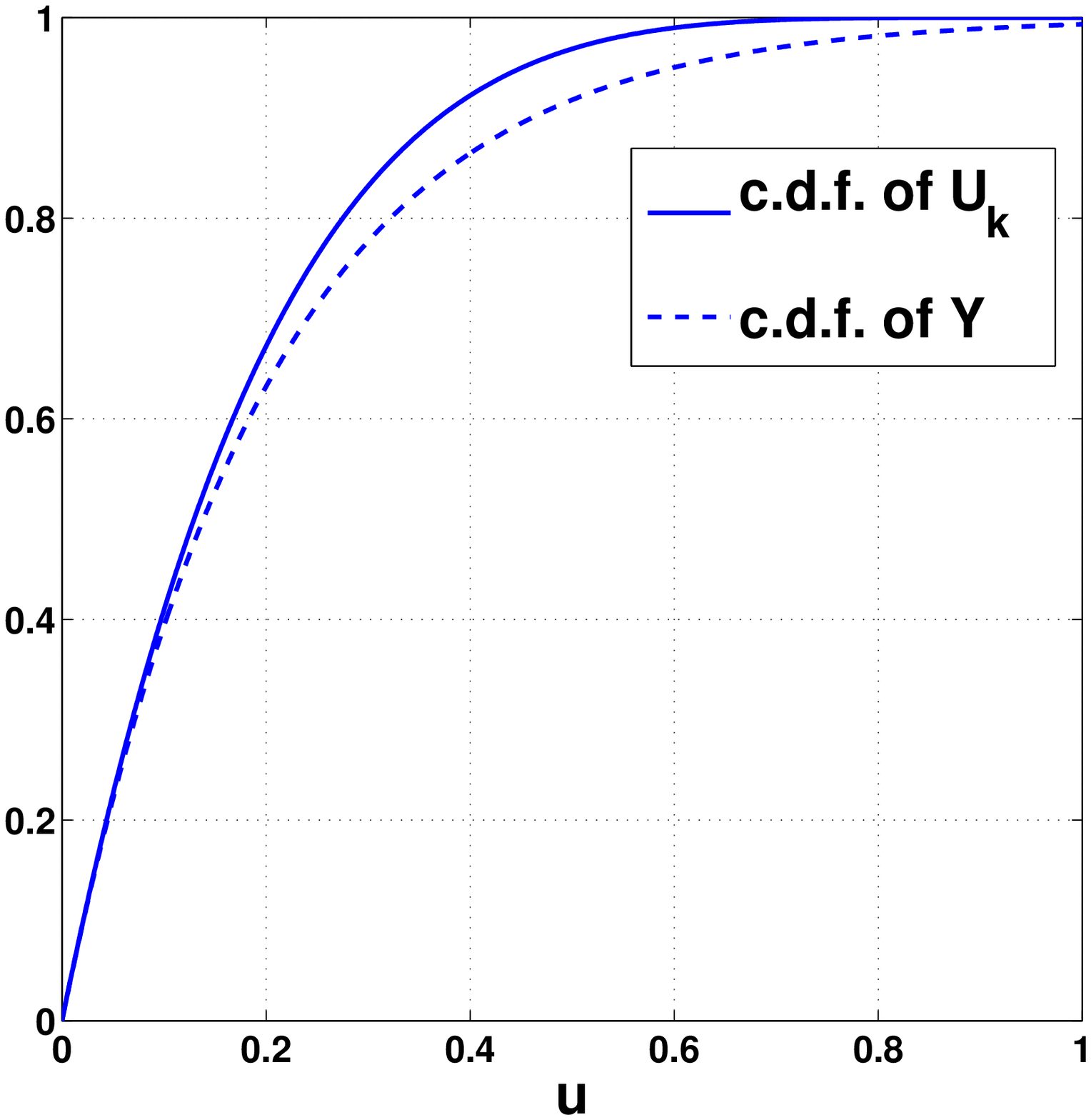}
\label{cdf1}
}
\subfigure[]{
\includegraphics[scale=0.3]{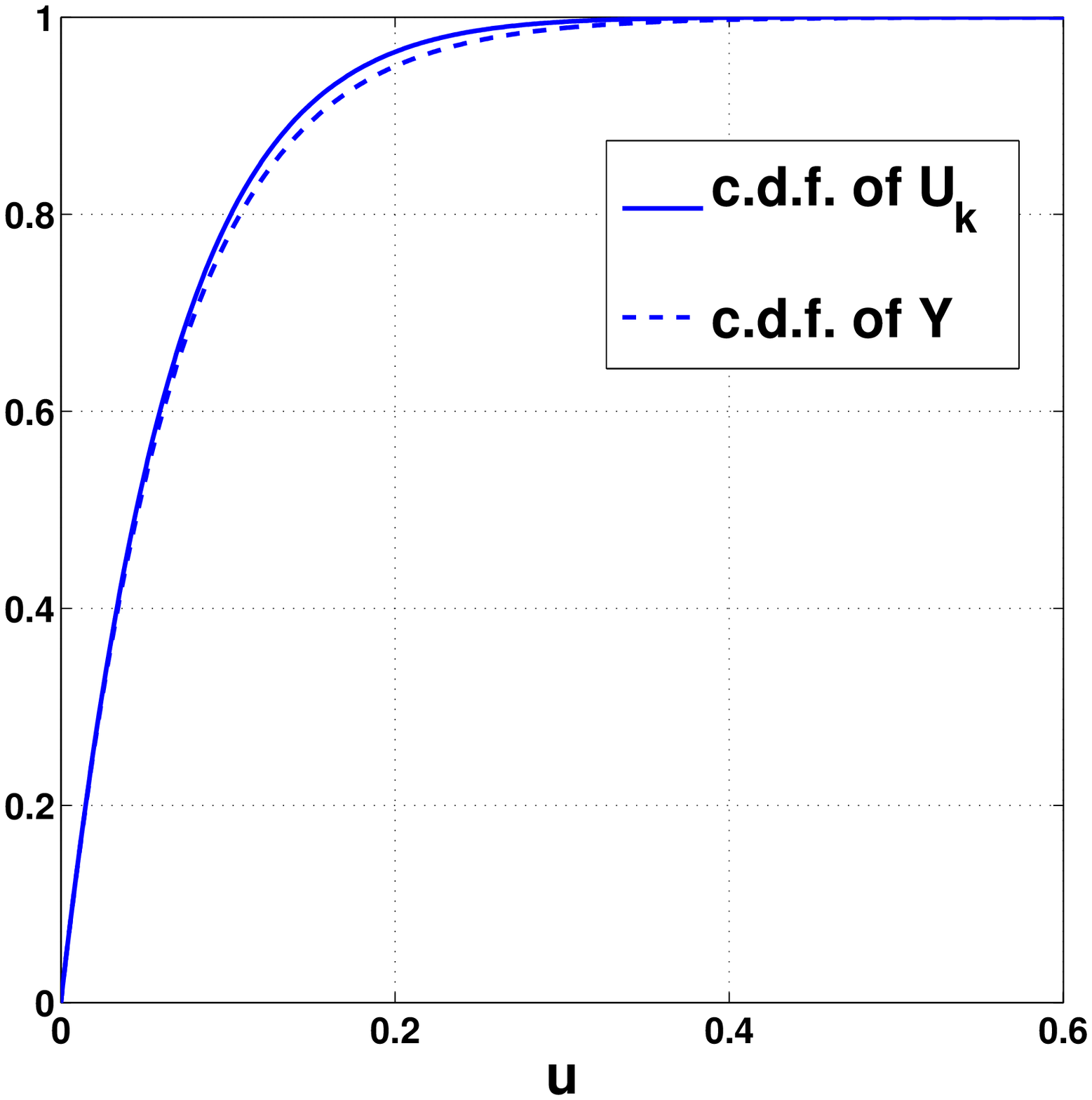}
\label{cdf2}
}
\caption{The cdfs  $F_{U_k|N}(.|\tilde{N})$ and $F_Y(.)$ 
where $Y\sim Exponential(\frac{\tilde{N}}{T})$ (with $T=1$) 
are plotted for \subref{cdf1} $\tilde{N}=5$ and 
\subref{cdf2} $\tilde{N}=15$.\label{cdf_figu}}
\end{figure}

We will use the  notations such as $\tilde{W}_k, \tilde{R}_k, \tilde{U}_k$, etc., to represent the 
analogous quantities that were defined for the exact model. For instance, $\tilde{W}_k$ represents the wake-up
time of the $k$ th relay. However,  unlike in the exact model, here $\tilde{W}_k$
can be beyond $T$.
As mentioned before,
$\{\tilde{U}_k:k=1,\cdots,\tilde{N}\}$ are simply iid exponential random variables with parameter $\frac{\tilde{N}}{T}$.
$\{\tilde{R}_k:k=1,\cdots,\tilde{N}\}$ are iid random rewards with common pdf $f_R$ which is same 
as that in the exact model.

\subsection{MDP Formulation}
Again, here the decision instants are the times at which the relays wake-up. At some stage $k$, $1\le k<\tilde{N}$,
suppose $(\tilde{W}_k,\tilde{R}_k)=(w,b)$ then the one step cost of stopping is $-\eta b$ and that of
continuing is $\tilde{U}_{k+1}$. Note that since $\tilde{U}_{k+1}\sim Exp(\frac{\tilde{N}}{T})$, 
the one step costs do not depend on $w$, 
which means that the optimal policy for the simplified model 
does not depend on the value of $w$. Also since the number of 
relays $\tilde{N}$ is a contant, we do not wish to retain it 
as a part of the state unlike that in the actual state space 
$\mathcal{S}_k^a$ (Equation~(\ref{actual_state_space_equn})).
Therefore we simplify the state space to be 
$\tilde{\mathcal{S}}_0=\{0\}$ and for $k=1,2,\cdots,\tilde{N}$,
\begin{eqnarray*}
 \tilde{\mathcal{S}}_k=[0,\overline{R}]\cup\{\psi\}\mbox{.}
\end{eqnarray*}
As before $\psi$ is the terminating state. Suppose at 
some stage $1\le k<\tilde{N}$ the state is $\tilde{B}_k=b$
then the next state $s_{k+1}$ will be
\begin{eqnarray*}
 s_{k+1}=\left\{\begin{array}{ll}
                 \psi&\mbox{ if } a_k=1\\
		\max\{b,\tilde{R}_{k+1}\}&\mbox{ if } a_k=0
                \end{array}\right.\mbox{.}
\end{eqnarray*}
We had mentioned the one step costs earlier. We  write them down here for the sake of completeness,
\begin{eqnarray*}
 \tilde{g}_k\Big(b,a_k\Big)=\left\{\begin{array}{ll}
                                      -\eta b&\mbox{ if } a_k=1\\
				 	\tilde{U}_{k+1}&\mbox{ if } a_k=0
                                     \end{array}\right.\mbox{.}
\end{eqnarray*}
The cost of termination is simply $\tilde{g}_{\tilde{N}}(b)=-\eta b$.

\subsection{Optimal Policy via One-Step-Stopping Set}
In this section we will prove that the \emph{one-step-look-ahead rule} is 
optimal for the simplified model. The idea is to show that the 
\emph{one-step-stopping set is absorbing} \cite[Section~4.4]{optimalcontrol}. 
All these will now be defined. For an alternate
derivation of the optimal policy by value iteration, 
see the next section (Section~\ref{solution_value_iteration_section}). 

At stage $k$, $1\le k< \tilde{N}$, when the state is $b$, the cost of stopping is 
simply $c_s(b)=-\eta b$. The cost of continuing for one more step (which is $\tilde{U}_{k+1}$) and then stopping
at the next stage (where the state is $\max\{b,\tilde{R}_{k+1}\}$) is,
\begin{eqnarray*}
c_c(b)&=&\mathbb{E}\Big[\tilde{U}_{k+1}-\eta\max\{b,\tilde{R}_{k+1}\}\Big]\\
&=&-\eta\Big(\mathbb{E}[\max\{b,R\}]-\frac{T}{\eta\tilde{N}}\Big)
\end{eqnarray*}
By defining the function $\beta(\cdot)$ for $b\in[0,\overline{R}]$ as
\begin{eqnarray}
\label{beta_1_equn}
\beta(b)&=&\mathbb{E}\Big[\max\{b,R\}\Big]-\frac{T}{\eta\tilde{N}}\mbox{,}
\end{eqnarray}
we can write $c_c(b)=-\eta\beta(b)$.
Note that both the costs, $c_s$ and $c_c$, do not depend on the stage index $k$.
\begin{definition}
We define the \emph{One-step-stopping set} as, 
\begin{eqnarray}
\label{one_step_equn}
\mathcal{C}_{1step}&=&\Big\{b\in[0,\overline{R}]: -\eta b\le -\eta \beta(b)\Big\}.
\end{eqnarray}
i.e., it is the set of all states $b\in[0,\overline{R}]$
where the cost of stopping, $c_s(b)$, is less than the cost of continuing for
one more step and then stopping at the next stage $c_c(b)$.
\hfill $\blacksquare$
\end{definition}

We will show that $\mathcal{C}_{1step}$ is characterized by a
threshold $\alpha$ and can be written as $\mathcal{C}_{1step}=[\alpha,\overline{R}]$.
This will require the following properties about $\beta(\cdot)$.
\begin{lemma}
 \label{beta_1_lemma}
\verb11
\begin{enumerate}
  \item $\beta$ is continuous, increasing and convex in $b$.
  \item If $\beta(0)<0$, then $\beta(b)<b$ for all $b\in[0,\overline{R}]$.
  \item If $\beta(0)\ge 0$, then $\exists$ a unique $\alpha$
    such that $\alpha=\beta(\alpha)$.
  \item If $\beta(0)\ge0$, then $\beta(b)<b$ for
    $b\in(\alpha,\overline{R}]$ and $\beta(b)>b$ for
    $b\in[0,\alpha)$.
  \end{enumerate}
\end{lemma}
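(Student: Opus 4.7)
The plan is to package all four parts of the lemma by studying the auxiliary function
\[
h(b) \;:=\; \beta(b) - b \;=\; \mathbb{E}\bigl[(R-b)^+\bigr] \;-\; \frac{T}{\eta\tilde{N}},
\]
which we obtain from the elementary identity $\max\{b,R\} = b + (R-b)^+$. Nearly every claim in the lemma can be read off from the shape of $h$.

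For Part 1, I would first rewrite $\beta(b) = b + \mathbb{E}[(R-b)^+] - T/(\eta\tilde{N})$. Since $r \mapsto (r-b)^+$ is convex in $b$ for every $r$, its expectation is convex in $b$, and adding the affine term $b$ preserves convexity; so $\beta$ is convex. Continuity follows from the dominated convergence theorem (the integrand $\max\{b,R\}$ is bounded by $\overline{R}$). For monotonicity, differentiate under the integral to get $\beta'(b) = F_R(b) \in [0,1]$, so $\beta$ is nondecreasing; since $F_R(b) > 0$ for $b > 0$, in fact $\beta$ is strictly increasing on $(0,\overline{R}]$.

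For Part 2, note that $h'(b) = F_R(b) - 1 \le 0$, so $h$ is nonincreasing; hence $h(b) \le h(0) = \beta(0) < 0$ for every $b \in [0,\overline{R}]$, which is exactly $\beta(b) < b$. For Part 3, the same derivative computation shows $h'(b) = F_R(b) - 1 < 0$ for $b < \overline{R}$ (recall the support of $R$ is $[0,\overline{R}]$, so $F_R(b) < 1$ on that interval), hence $h$ is strictly decreasing and continuous on $[0,\overline{R}]$. We have $h(0) = \beta(0) \ge 0$ by hypothesis, while $h(\overline{R}) = 0 - T/(\eta\tilde{N}) < 0$. By the intermediate value theorem and strict monotonicity there is a unique $\alpha \in [0,\overline{R})$ with $h(\alpha) = 0$, i.e.\ $\alpha = \beta(\alpha)$. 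Part 4 is then immediate: strict monotonicity of $h$ combined with $h(\alpha) = 0$ forces $h(b) > 0$ (i.e.\ $\beta(b) > b$) on $[0,\alpha)$ and $h(b) < 0$ (i.e.\ $\beta(b) < b$) on $(\alpha,\overline{R}]$.

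There is no real obstacle here; the only minor technicality is justifying strict monotonicity of $h$, which requires $F_R(b) < 1$ for $b < \overline{R}$. This is standard given the stated support of $f_R$, and it is what underwrites both the uniqueness in Part 3 and the strict inequalities in Part 4. Everything else reduces to the single convex, strictly decreasing function $h$.
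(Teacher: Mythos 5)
Your proof is correct, and Part 1 coincides with the paper's: the identity $\max\{b,R\}=b+(R-b)^+$ gives exactly the paper's formula $\beta(b)=b+\int_b^{\overline{R}}(1-F_R(r))\,dr-\frac{T}{\eta\tilde{N}}$, from which $\beta'=F_R\ge 0$ and $\beta''=f_R\ge 0$. For Parts 2--4, however, your route differs from the paper's in a substantive way. You run everything through monotonicity of $h(b)=\beta(b)-b$, with $h'=F_R-1\le 0$ giving Part 2, and strict negativity of $h'$ on $[0,\overline{R})$ giving uniqueness in Part 3 and the sign pattern in Part 4 in one stroke. The paper instead never invokes strict monotonicity: Part 2 is a chord argument from convexity ($\beta(b)\le \frac{\overline{R}-b}{\overline{R}}\beta(0)+\frac{b}{\overline{R}}\beta(\overline{R})<b$), uniqueness in Part 3 is again a convexity/chord argument showing a second fixed point would force $\beta(\overline{R})\ge\overline{R}$, and Part 4 is an intermediate-value argument that manufactures a second fixed point to contradict Part 3. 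The trade-off: your argument is shorter and handles the boundary case $\beta(b)=b$ in Part 4 cleanly (the paper's Part 4 as written only treats $\beta(b)>b$), but it genuinely needs $F_R(b)<1$ for all $b<\overline{R}$ -- the technicality you flag -- whereas the paper's convexity route only needs the automatic fact $\beta(\overline{R})=\overline{R}-\frac{T}{\eta\tilde{N}}<\overline{R}$ and would survive a reward distribution whose cdf reaches $1$ before $\overline{R}$. Given that the paper declares the support of $f_R$ to be $[0,\overline{R}]$, your extra assumption is harmless here, but it is worth noting that it is an assumption the paper's own proof does not require.
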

\begin{proof}
See Appendix~\ref{beta_1_lemma_proof_appendix}.
\end{proof}

\emph{Discussion of Lemma~\ref{beta_1_lemma}:} When $\beta(0)\ge0$ then using Lemma~\ref{beta_1_lemma}.3
and \ref{beta_1_lemma}.4, we can write $\mathcal{C}_{1step}$ in (\ref{one_step_equn}) 
as $\mathcal{C}_{1step}=[\alpha,\overline{R}]$.
For the other case
where $\beta(0)<0$, from Lemma~\ref{beta_1_lemma}.2
it follows that $\mathcal{C}_{1step}=[0,\overline{R}]$.
Thus by defining $\alpha=0$ whenever $\beta(0)<0$ we can write
$\mathcal{C}_{1step}=[\alpha,\overline{R}]$ for either case.
\hfill $\blacksquare$

\begin{definition}
Depending on the value of $\beta(0)$ define $\alpha$ as follows,
\begin{eqnarray}
\label{equn:beta_definition}
 \alpha=\left\{\begin{array}{ll}
                \beta_1(\alpha)&\mbox{ if } \beta_1(0)\ge0\\
		0&\mbox{ otherwise }
               \end{array}\right.
\end{eqnarray}
\hfill $\blacksquare$
\end{definition}

\begin{definition}
A policy is said to be \emph{one-step-look-ahead} if at stage $k$, $1\le k<\tilde{N}$,
it stops iff the state $b\in\mathcal{C}_{1step}$, i.e., iff the cost of stopping, $c_s(b)$,
is less than the cost of continuing for one more step and then stopping, $c_c(b)$.
\hfill $\blacksquare$
\end{definition}

\begin{definition}
Let $\mathcal{C}$ be some subset of the state space 
$[0,\overline{R}]$, i.e., $\mathcal{C}\subseteq[0,\overline{R}]$.
We say that $\mathcal{C}$ is \emph{absorbing} if for every $b\in\mathcal{C}$, 
if the action at stage $k$, $1\le k<\tilde{N}$, is to continue, then the next state,
$s_{k+1}$ at stage $k+1$, also falls into $\mathcal{C}$.
\hfill $\blacksquare$
\end{definition}

Since we have expressed $\mathcal{C}_{1step}$ as $[\alpha,\overline{R}]$ and since
$s_{k+1}=\max\{b,\tilde{R}_{k+1}\}$ it is clear that $\mathcal{C}_{1step}$
is absorbing. Finally, referring to \cite[Section~4.4]{optimalcontrol}, it follows that,
for optimal stopping problems, \emph{whenever the one-step-stopping set is absorbing then
the one-step-look-ahead rule is optimal}. Thus the optimal policy for the simplified model is to
choose the first relay whose reward is more than $\alpha$.
If none of the relays' reward values are more than $\alpha$ then at the last stage, $\tilde{N}$,
choose the one with the maximum reward. 

\subsection{Optimal Policy via Value Iteration}
\label{solution_value_iteration_section}
In this section we provide an alternative derivation for the optimal policy
(already obtained in the previous section). 
We will write down the value functions starting from the last stage $\tilde{N}$ and
proceed backwards, and then simplify to obtain the optimal policy.

The value function for the last stage $\tilde{N}$ is simply
$\tilde{J}_{\tilde{N}}(b)=\tilde{g}_{\tilde{N}}(b)=-\eta b$.
Next, when the stage is $\tilde{N}-1$,
\begin{eqnarray}
\label{Js_last_1_equn}
 \tilde{J}_{\tilde{N}-1}(b)&=&\min\bigg\{-\eta b, \mathbb{E}\Big[\tilde{U}_{\tilde{N}}+
\tilde{J}_{\tilde{N}}\Big(\max\{b,\tilde{R}_{\tilde{N}}\}\Big)\Big]\bigg\}\nonumber\\
&=&\min\bigg\{-\eta b,\mathbb{E}\Big[\tilde{U}_{\tilde{N}}-\eta \max\{b,\tilde{R}_{\tilde{N}}\Big] \bigg\}\nonumber\\
&=&\min\left\{-\eta b,-\eta \left(\mathbb{E}[\max\{b,R\}]-\frac{T}{\eta\tilde{N}}\right)\right\}\nonumber\\
&=&\min\Big\{-\eta b, -\eta \beta_1(b)\Big\}\mbox{,}
\end{eqnarray}
where the function $\beta_1(\cdot)$ is exactly same as the function $\beta(\cdot)$ in (\ref{beta_1_equn}),
which we reproduce here for convenience,
\begin{eqnarray*}
\beta_1(b)&=&\mathbb{E}\Big[\max\{b,R\}\Big]-\frac{T}{\eta\tilde{N}}\mbox{.}
\end{eqnarray*}
$\beta_1$ satisfies the properties listed in Lemma~\ref{beta_1_lemma}.

From (\ref{Js_last_1_equn}) it is clear that at stage $\tilde{N}-1$ the optimal 
policy is to stop iff $-\eta b\le -\eta \beta_1(b)$, i.e., iff $b\ge\beta_1(b)$.
%
Whenever $\beta_1(0)<0$, from Lemma~\ref{beta_1_lemma}.2 
and (\ref{Js_last_1_equn}), we observe that at stage $\tilde{N}-1$ 
it is optimal to stop for any $b\in[0,\overline{R}]$. On the otherhand 
when $\beta_1(0)\ge0$, from Lemma~\ref{beta_1_lemma}.3, \ref{beta_1_lemma}.4 
and (\ref{Js_last_1_equn}), we can conclude that it is optimal to stop iff $b\ge\alpha$. 
A plot of the function $\beta_1(\cdot)$ for the case when $\beta_1(0)\ge0$ is shown in 
Fig.~\ref{beta_behaviour_figure}.
It will follow  that there
is a similar function at each stage. Formally, at stage $k$ there 
is a function $\beta_{K-k}(\cdot)$ such that at stage $k$ it is optimal to stop
iff $b\ge\beta_{K-k}(b)$.
Further $\beta_{K-k}(\cdot)$ statisfies for $b<\alpha$, $\beta_{K-k}(b)\ge\beta_1(b)$
and for $b\ge\alpha$, $\beta_{K-k}(b)=\beta_1(b)$. This property of the $\beta$ functions
is illustrated in Fig.~\ref{beta_behaviour_figure} for stages $K-2$ and $K-3$.
Thus the  
optimal policy at any other stage 
$k=1,2,\cdots,\tilde{N}-2$, is same as the above mentioned $\alpha$-threshold policy. 

\begin{figure}[h]
\centering
\includegraphics[scale=0.35]{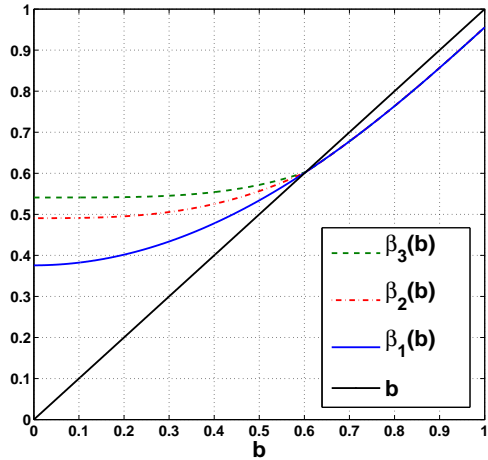} 
\caption{Simplified Model: Illustration of the sucessive $\beta$ functions. 
The threshold $\alpha$ is the point of intersection of $\beta_1(b)$ with the
linear function, $b$. In the figure, $\alpha=0.6$.
$\beta_{\ell}(b)$ as a function of $\ell$ is increasing for $b<\alpha$.
For $b\ge\alpha$, $\beta_{\ell}(b)=\beta_1(b)$.}
\label{beta_behaviour_figure}
\end{figure}

First we will extend the definition of $\alpha$ for the case when 
$\beta_1(0)<0$ by defining $\alpha=0$ (which is same as the definition 
of $\alpha$ in (\ref{equn:beta_definition}) in the previous section).

\begin{definition}
Depending on the value of $\beta_1(0)$ define $\alpha$ as follows,
\begin{eqnarray*}
 \alpha=\left\{\begin{array}{ll}
                \beta_1(\alpha)&\mbox{ if } \beta_1(0)\ge0\\
		0&\mbox{ otherwise }
               \end{array}\right.
\end{eqnarray*}
\hfill $\blacksquare$
\end{definition}

\begin{lemma}
\label{Js_beta_k_lemma}
For every  $k\in\{1,2,\cdots,\tilde{N}-1\}$ the following holds,
\begin{eqnarray}
	\label{Js_k_equn}
	\tilde{J}_{k}(b)&=&\min\Big\{-\eta b,-\eta\beta_{\tilde{N}-k}(b)\Big\}\mbox{,}
\end{eqnarray}
where $\beta_1(b)$ is as defined in (\ref{beta_1_equn}) and for 
$k=1,2,\cdots,\tilde{N}-2$,
\begin{eqnarray}
\label{beta_k_equn}
\beta_{\tilde{N}-k}(b)&=&\mathbb{E}\bigg[\max\Big\{b,R,\beta_{\tilde{N}-(k+1)}(\max\{b,R\})\Big\}\bigg]
- \frac{T}{\eta \tilde{N}}\mbox{,}
\end{eqnarray}
and has the property, $\beta_{\tilde{N}-k}(b)\ge\beta_{\tilde{N}-(k+1)}(b)$ for any $b\in[0,\overline{R}]$. In particular if 
 $b\ge\alpha$ then $\beta_{\tilde{N}-k}(b)=\beta_1(b)$.
\end{lemma}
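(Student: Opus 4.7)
The plan is to prove all three assertions of the lemma simultaneously by backward induction on the stage $k$, running from $k = \tilde{N}-1$ down to $k = 1$. The base case $k = \tilde{N}-1$ is already established by equation (\ref{Js_last_1_equn}), which gives $\tilde{J}_{\tilde{N}-1}(b) = \min\{-\eta b, -\eta \beta_1(b)\}$; the monotonicity and the equality-on-$[\alpha,\overline{R}]$ assertions are vacuous at this base.

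For the inductive step, assume all three claims hold at stage $k+1$. Starting from the Bellman recursion
\begin{equation*}
\tilde{J}_k(b) = \min\Big\{-\eta b,\ \mathbb{E}\big[\tilde{U}_{k+1} + \tilde{J}_{k+1}(\max\{b,\tilde{R}_{k+1}\})\big]\Big\},
\end{equation*}
I would use the identity $\min\{-\eta x, -\eta y\} = -\eta\max\{x,y\}$ together with the induction hypothesis to rewrite
\begin{equation*}
\tilde{J}_{k+1}(\max\{b,R\}) = -\eta\,\max\big\{b,\ R,\ \beta_{\tilde{N}-(k+1)}(\max\{b,R\})\big\}.
\end{equation*}
Since $\tilde{U}_{k+1}$ is exponential with rate $\tilde{N}/T$, we have $\mathbb{E}[\tilde{U}_{k+1}] = T/\tilde{N}$. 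Taking expectation then gives the continuing cost as $-\eta\bigl(\mathbb{E}[\max\{b,R,\beta_{\tilde{N}-(k+1)}(\max\{b,R\})\}]-T/(\eta\tilde{N})\bigr) = -\eta\,\beta_{\tilde{N}-k}(b)$, which is precisely the definition in (\ref{beta_k_equn}). This establishes the cost-to-go formula.

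The monotonicity $\beta_{\tilde{N}-k}(b) \ge \beta_{\tilde{N}-(k+1)}(b)$ is handled by a parallel induction on $\ell := \tilde{N}-k$. The base $\beta_2 \ge \beta_1$ is immediate because $\max\{b,R,\beta_1(\max\{b,R\})\} \ge \max\{b,R\}$ pointwise, so expectation preserves the inequality. For the step, if $\beta_{\ell}(x) \ge \beta_{\ell-1}(x)$ for every $x$, then $\max\{b,R,\beta_{\ell}(\max\{b,R\})\} \ge \max\{b,R,\beta_{\ell-1}(\max\{b,R\})\}$, and applying $\mathbb{E}[\cdot]-T/(\eta\tilde{N})$ yields $\beta_{\ell+1}(b) \ge \beta_{\ell}(b)$.

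For the equality on $[\alpha,\overline{R}]$: if $b \ge \alpha$ then $\max\{b,R\} \ge \alpha$, so by the induction hypothesis $\beta_{\tilde{N}-(k+1)}(\max\{b,R\}) = \beta_1(\max\{b,R\})$; by Lemma~\ref{beta_1_lemma}.4 (noting $\beta_1 \equiv \beta$ there), this quantity is bounded above by $\max\{b,R\}$. Hence the triple maximum in (\ref{beta_k_equn}) collapses to $\max\{b,R\}$ and $\beta_{\tilde{N}-k}(b) = \mathbb{E}[\max\{b,R\}] - T/(\eta\tilde{N}) = \beta_1(b)$. The main obstacle is not any single computation but the interlocking nature of the three assertions: the collapse of the triple-$\max$ needs both the inductive equality on $[\alpha,\overline{R}]$ and the bound $\beta_1(x)\le x$ from Lemma~\ref{beta_1_lemma}.4, so the induction must be carried out with all three claims coupled at each stage.
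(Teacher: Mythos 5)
Your proposal is correct and follows essentially the same route as the paper's proof: backward induction from stage $\tilde{N}-1$, rewriting $\min\{-\eta x,-\eta y\}$ as $-\eta\max\{x,y\}$ inside the expectation with $\mathbb{E}[\tilde{U}_{k+1}]=T/\tilde{N}$, deriving the recursion (\ref{beta_k_equn}), and obtaining the ordering and the collapse of the triple max on $[\alpha,\overline{R}]$ from the coupled induction hypothesis together with Lemma~\ref{beta_1_lemma}. The only cosmetic difference is that the paper works out the $k=\tilde{N}-2$ case explicitly and then asserts the general step follows identically, whereas you write the general step directly (and, for completeness, the case $\beta_1(0)<0$ with $\alpha=0$ uses Lemma~\ref{beta_1_lemma}.2 rather than .4).
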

\begin{proof}
Here we provide only an outline of the proof. For a complete 
proof, see Appendix~\ref{Js_beta_k_lemma_proof_appendix}.
The result already holds for $k=\tilde{N}-1$ (see (\ref{Js_last_1_equn}) and (\ref{beta_1_equn})).
Next we prove the result for $\tilde{N}-2$. The proof is by induction.
Suppose for some $k$, $1<k\le\tilde{N}-2$, (\ref{Js_k_equn}) and (\ref{beta_k_equn}) hold  
along with the ordering property mentioned in the Lemma. We write down the value function $\tilde{J}_{k-1}$ in terms
of $\tilde{J}_{k}$ and straight forward manipulation will yield (\ref{Js_k_equn}) and (\ref{beta_k_equn}) for $k-1$.
The ordering result for $k-1$ can also be easily obtained by using the ordering result for $k$.  
In Fig.~\ref{beta_behaviour_figure} we have depicted this ordering behaviour of the $\beta_{\ell}$ functions.
\end{proof}

\noindent
The following main theorem is a simple consequence 
of the Lemma \ref{beta_1_lemma} and Lemma \ref{Js_beta_k_lemma}.
\begin{theorem}
\label{simple_policy_theorem}
 At any stage $k=1,2,\cdots,\tilde{N}-1$ the optimal policy for the simplified model is to stop iff $\tilde{B}_k=b\ge\alpha$.
\end{theorem}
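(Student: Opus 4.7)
The plan is to derive the theorem essentially as a corollary of the two preceding lemmas, by translating the characterization of $\tilde{J}_k$ in terms of $\beta_{\tilde{N}-k}$ into a simple comparison of the current best reward $b$ with the fixed threshold $\alpha$.

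First, I would invoke Lemma~\ref{Js_beta_k_lemma}, which gives $\tilde{J}_k(b)=\min\{-\eta b,\,-\eta\beta_{\tilde{N}-k}(b)\}$ for every $k\in\{1,\ldots,\tilde{N}-1\}$. From this representation, at stage $k$ it is optimal to stop if and only if $-\eta b\le -\eta\beta_{\tilde{N}-k}(b)$, i.e.\ iff $b\ge\beta_{\tilde{N}-k}(b)$. So the whole task reduces to showing that the inequality $b\ge\beta_{\tilde{N}-k}(b)$ is equivalent to $b\ge\alpha$, for every $k$ in the allowed range.

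Next, I would split into the two cases that define $\alpha$ in the paper. In the case $\beta_1(0)\ge 0$, so that $\alpha$ is the unique fixed point $\alpha=\beta_1(\alpha)$ guaranteed by Lemma~\ref{beta_1_lemma}.3: if $b\ge\alpha$, the final clause of Lemma~\ref{Js_beta_k_lemma} gives $\beta_{\tilde{N}-k}(b)=\beta_1(b)$, and Lemma~\ref{beta_1_lemma}.4 together with $\alpha=\beta_1(\alpha)$ yields $\beta_1(b)\le b$, so stopping is optimal. Conversely, if $b<\alpha$, then Lemma~\ref{beta_1_lemma}.4 gives $\beta_1(b)>b$, and the monotonicity $\beta_{\tilde{N}-k}(b)\ge\beta_1(b)$ from Lemma~\ref{Js_beta_k_lemma} then yields $\beta_{\tilde{N}-k}(b)>b$, i.e.\ continuing is strictly better. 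In the case $\beta_1(0)<0$, where by definition $\alpha=0$, every state $b\in[0,\overline{R}]$ satisfies $b\ge\alpha$; applying the same final clause of Lemma~\ref{Js_beta_k_lemma} gives $\beta_{\tilde{N}-k}(b)=\beta_1(b)$, and Lemma~\ref{beta_1_lemma}.2 gives $\beta_1(b)<b$ (strict for $b>0$, and trivially $0=b=\beta_1(b)$ only fails strict inequality when $b=0$, which is a measure-zero boundary issue harmless to the threshold statement), so stopping is optimal everywhere.

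The main obstacle is almost entirely cosmetic: the real work is hidden in Lemma~\ref{beta_1_lemma} (which establishes existence, uniqueness, and the sign of $b-\beta_1(b)$ around the fixed point) and in Lemma~\ref{Js_beta_k_lemma} (which guarantees both the $\min$-form of $\tilde{J}_k$ and the pinning $\beta_{\tilde{N}-k}(b)=\beta_1(b)$ for $b\ge\alpha$). The only point requiring a moment of care is verifying that the monotonicity $\beta_{\tilde{N}-k}\ge\beta_1$ combined with $\beta_1(b)>b$ on $b<\alpha$ actually forces the stopping inequality to fail for all $k$ and not merely for $k=\tilde{N}-1$; this is exactly what the ordering statement in Lemma~\ref{Js_beta_k_lemma} supplies, so no additional induction or argument is needed here.
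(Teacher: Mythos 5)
Your proposal is correct and follows essentially the same route as the paper's own proof: both reduce the stopping condition to $b\ge\beta_{\tilde{N}-k}(b)$ via the $\min$-form in Lemma~\ref{Js_beta_k_lemma}, then use the pinning $\beta_{\tilde{N}-k}(b)=\beta_1(b)$ for $b\ge\alpha$ together with Lemma~\ref{beta_1_lemma}.4, and the ordering $\beta_{\tilde{N}-k}\ge\beta_1$ together with $\beta_1(b)>b$ for $b<\alpha$. Your explicit case split on the sign of $\beta_1(0)$ is a harmless elaboration (and the worry about $b=0$ is unnecessary, since Lemma~\ref{beta_1_lemma}.2 already gives $\beta_1(b)<b$ at $b=0$ when $\beta_1(0)<0$).
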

\begin{proof}
 From (\ref{Js_k_equn}) in Lemma \ref{Js_beta_k_lemma}, it follows that the 
optimal policy is to stop iff $-\eta b\le-\eta\beta_{\tilde{N}-k}(b)$ i.e., $b\ge\beta_{\tilde{N}-k}(b)$.
 If $b\ge\alpha$ then from  Lemma \ref{beta_1_lemma}.4 and Lemma \ref{Js_beta_k_lemma} 
we have $b\ge\beta_1(b)=\beta_{\tilde{N}-k}(b)$ and hence it is optimal to stop (see Fig. \ref{beta_behaviour_figure} for an illustrations).
On the otherhand if $b<\alpha$ then (again from Lemma \ref{beta_1_lemma}.4 and 
\ref{Js_beta_k_lemma}) we have $b<\beta_1(b)\le\beta_{\tilde{N}-k}(b)$ and 
hence the optimal action is to continue.
\end{proof}

Thus the policy for the simplified model is to simply select  the first relay  with  a reward of more that $\alpha$. If all 
the relays have reward of less than $\alpha$ then at the last stage $\tilde{N}$, choose the one
with the best reward.

\subsection{Analysis of the $\alpha$-Threshold Policies}
We have thus seen that the optimal policy for the simplified model is characterized by a 
threshold $\alpha$. Let $R_\alpha$ represent the 
reward obtained when the threshold used is $\alpha$. $R_\alpha$ is equal 
to the reward value of that relay to which the packet is finally forwarded.
We are interested in obtaining an expression for $\mathbb{E}[R_\alpha]$
(this will be useful later in Section~\ref{end_to_end_section}).
$\mathbb{E}[R_\alpha]$ can be written down as
\begin{eqnarray}
\label{R_expectation_equn}
\mathbb{E}[R_\alpha]&=&\int_0^{\overline{R}}\mathbb{P}(R_\alpha>r)dr\mbox{,}
\end{eqnarray}
which will require us to obtain $\mathbb{P}(R_\alpha>r)$ for $r\in[0,\overline{R}]$.
Let us consider two cases, $r\in[0,\alpha]$ and $r\in(\alpha,\overline{R}]$.

For $r\in[0,\alpha]$, the average reward $R_\alpha>r$ whenever there is at least one relay 
with a reward value of more than $r$. Therefore for $r\in[0,\alpha]$,
\begin{eqnarray}
 \label{R_ccdf1_equn}
 \mathbb{P}(R_\alpha>r) &=& \mathbb{P}(\max\{\tilde{R}_1,\cdots,\tilde{R}_{\tilde{N}}\}>r)\nonumber\\
 &=& 1-\mathbb{P}(\max\{\tilde{R}_1,\cdots,\tilde{R}_{\tilde{N}}\}\le r)\nonumber\\
 &=& 1-{F_R(r)}^{\tilde{N}}\mbox{.}
\end{eqnarray}
 The third equality is because the $\tilde{R}_i$'s
are iid with $F_R$ being their common cdf.

Now for $r\in(\alpha,\overline{R}]$, the average reward $R_\alpha>r$ 
whenever the \emph{set of relays whose rewards are more than $\alpha$} is 
nonempty and further the reward of the first relay to wake-up from 
this set is more than $r$. Therefore for $r\in(\alpha,\overline{R}]$,
\begin{eqnarray}
 \label{R_ccdf2_equn}
\mathbb{P}(R_\alpha>r)&=&\Big(1-{F_R(\alpha)}^{\tilde{N}}\Big)\frac{1-F_R(r)}{1-F_R(\alpha)}\mbox{.}
\end{eqnarray}
$1-{F_R(\alpha)}^{\tilde{N}}$ is the probability that there is at least one 
relay with a reward value of more than $\alpha$ and $\frac{1-F_R(r)}{1-F_R(\alpha)}$ is the probability
that the reward of the first relay (to wake-up from the set mentioned above) 
is more than $r$ conditioned on the fact that its  reward is already more than $\alpha$.

Using (\ref{R_ccdf1_equn}) and (\ref{R_ccdf2_equn}) in (\ref{R_expectation_equn})
it is possible to numerically compute $\mathbb{E}[R_\alpha]$. 
We will use these expressions while describing a policy $\pi_{A-SIMPL}$ (in Section~\ref{end_to_end_section})
which is derived from the simplified model.
For $\alpha_1>\alpha_2$ it is clear than $R_{\alpha_1}\ge R_{\alpha_2}$ which means that $\mathbb{E}[R_\alpha]$
as a function of $\alpha$ is non decreasing.

\section{Numerical and Simulation Results}
\label{simulation_results}
\subsection{One Hop Performance}
\label{one_hop_performance}
Recall (from Section~\ref{system_model}) that our model admits any general 
reward associated with a relay. 
In this section we  perform and discuss a simulation study
of geographical forwarding in a dense sensor network with sleep-wake cycling nodes
where the reward provided by a relay is the progress made towards the base-station
(or sink) if the packet is forwarded to that relay.
 In Appendix~\ref{simulation_cases_appendix} 
we have shown simulation results for 
other rewards  (e.g., reward being a function of the progress and
channel gain).

\begin{figure}[h]
 \centering
\includegraphics[scale=0.4]{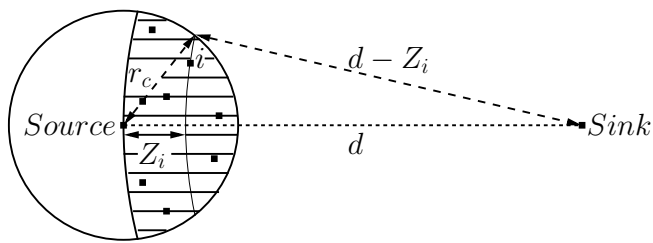}
\caption{\label{forwarding_set_figure} The hatched region 
 is the forwarding region.}
\end{figure}

The source and
sink are separated by a distance of $d=10$ (see Fig.~\ref{forwarding_set_figure}). The source has a packet to
forward at time $0$. The communication radius of the source is
$r_c=1$.  The potential relay nodes are the neighbors of the source
that are closer to the sink than itself.  The period of sleep-wake
cycling is $T=1$.  Let $Z_i$ represent the progress of relay $i$. $Z_i$ is the 
difference between the
source-sink and relay-sink distances.  
The reward associated with a relay $i$ is simply the progress made by it, i.e., $R_i=Z_i$. 
We interchangeably use progress and reward in this section.

Each of the nodes is located uniformly in the forwarding set,
independently of the other nodes. Therefore, it can be shown that, the progress made by them
are iid with pdf
\begin{eqnarray}
\label{equn:progress_pmf}
 f_{Z}(r)&=&\frac{2(d-r)cos^{-1}\left(\frac{d^2+{(d-r)}^2-{r_c}^2}{2d(d-r)}\right)}{\mbox{Area of the forwarding region}}\mbox{,}
\end{eqnarray}
and the support of $f_{Z}$ is $[0,r_c]$. Hence $r_c$ 
is analogous to $\overline{R}$ (see System Model, Section~\ref{system_model}).  We take the bound 
on the number of relays as $K=50$, and
the initial pmf is taken as truncated Poisson with 
parameter $10$, {i.e.,} for $n=1,2,\cdots,K$, $p_0(n)=c\frac{10^n}{n!}e^{-10}$ where $c$ is the 
normalization constant. The above mentioned reward pmf ($f_Z$) and initial belief ($p_0$) will be a good approximation
if the nodes are deployed in a region according to a spatial Poisson process of rate $10$. 
The approximation will become better for larger values of $K$.

Since it is computationally intensive to obtain the thresholds
$\{\phi_l\}$ in (\ref{phi_k_equn}) inductively, we have
discretized the space $[0,T]\times[0,\overline{R}]$ into
$100\times100$ equally spaced points and obtain $\{\phi_l\}$ at these
points.  Appropriate pmfs are obtained from the pdfs. All the analysis
in the previous sections hold for this discrete setting as well.

When the actual state space $\mathcal{S}_k^a$ is discrete, then there
are established algorithms to obtain the optimal policy for POMDP
problems \cite{monahan82survey,lovejoy91algorithmic-methods,smallwood-sondik73optimal-control}.
However it is highly computationally intensive to apply these algorithms
here because of the large state space. For instance with $K=50$, the
cardinality of $\mathcal{S}_1^{a}$ is $50\times100\times100$.  Hence
we compare the performance of our suboptimal POMDP policies with the
COMDP policy (Section~\ref{COMDP_section}) that is optimal when the 
actual number of relays is known
 and hence \emph{serves as a lower bound for the cost} that can be
achieved by the optimal POMDP policy.
\subsubsection{Implemented Policies (one-hop)}
\label{implemented_policies}
We summarize the various policies we have implemented.
\begin{itemize}

\item \underline{$\pi_{COMDP}$:} The source knows the actual value of $N$.
Suppose $N=n$, then the source begins with an initial belief with mass
only on $n$. At any stage, $k=1,2,\cdots,n$, if the delay and best
reward pair is $(w,b)$ then transmit if $b\ge\phi_{n-k}(w,b)$,
continue otherwise. See the remark following
Lemma~\ref{one_point_mass_lemma}. 

\item \underline{$\pi_{INNER}$:} We use the inner bound
$\underline{\mathcal{C}}_k(w,b)$ to obtain a suboptimal policy. At stage $k$ if the belief state is
$(p,w,b)$ $(\in\mathcal{S}_k)$, then transmit iff
$p\in\underline{\mathcal{C}}_k(w,b)$.

\item \underline{$\pi_{OUTER}$:} We use the outer bound
$\overline{\mathcal{C}}_k(w,b)$ to obtain a suboptimal policy. 
At stage $k$ if the belief state is $(p,w,b)$ $(\in\mathcal{S}_k)$,
then transmit iff $p\in\overline{\mathcal{C}}_k(w,b)$.

\item \underline{$\pi_{A-COMDP}$}: (Average-COMDP) The source assumes that  $N$ is equal to
its average value $\overline{N}=\left[\mathbb{E}N\right]$ \footnote{[x]
  represents the smallest integer greater than x.}, and begins
with an initial pmf with mass only on $\overline{N}$. Suppose $N=n$,
which the source does not know, then at some stage
$k=1,2,\cdots,\min\{n,\overline{N}\}$ if the delay and best reward
pair is $(w,b)$ then transmit iff $b\ge\phi_{\overline{N}-k}(w,b)$.
In the case when $\overline{N}>n$, if the
source has not transmitted until stage $n$ and further at stage $n$ if the action is to
continue, then since there are no more relays to go, the source ends up
waiting until time $T$ and then forwards to the node with the best
reward.

\item \underline{$\pi_{A-SIMPL}$}: (Average-Simple) This policy is derived from the simplified
model described in Section~\ref{section:simplified_model}. The source considers the
simplified model assuming that there are $\overline{N}=[\mathbb{E}N ]$ number of
relays. It computes the threshold $\alpha$ accordingly using (\ref{equn:beta_definition}). The policy is
to transmit to the first relay that wakes up and offers a reward
(progress in this case) of more than $\alpha$. If there is no such
relay then the source ends up waiting until time $T$, and then
transmits to the node with the best reward.
\end{itemize}

\subsubsection{Discussion}
We have performed simulations to obtain the average values for the
above policies for several values of $\eta$ ranging from $0.1$ to $1000$.  In
Fig.~\ref{delay_figure}, we plot the average delays of the policies
described above as a function of $\eta$.  The average reward is
plotted in Fig.~\ref{reward_figure}.\vspace{-2mm}

\begin{figure}[h!]
 \centering
\subfigure[]{
\includegraphics[scale=0.32]{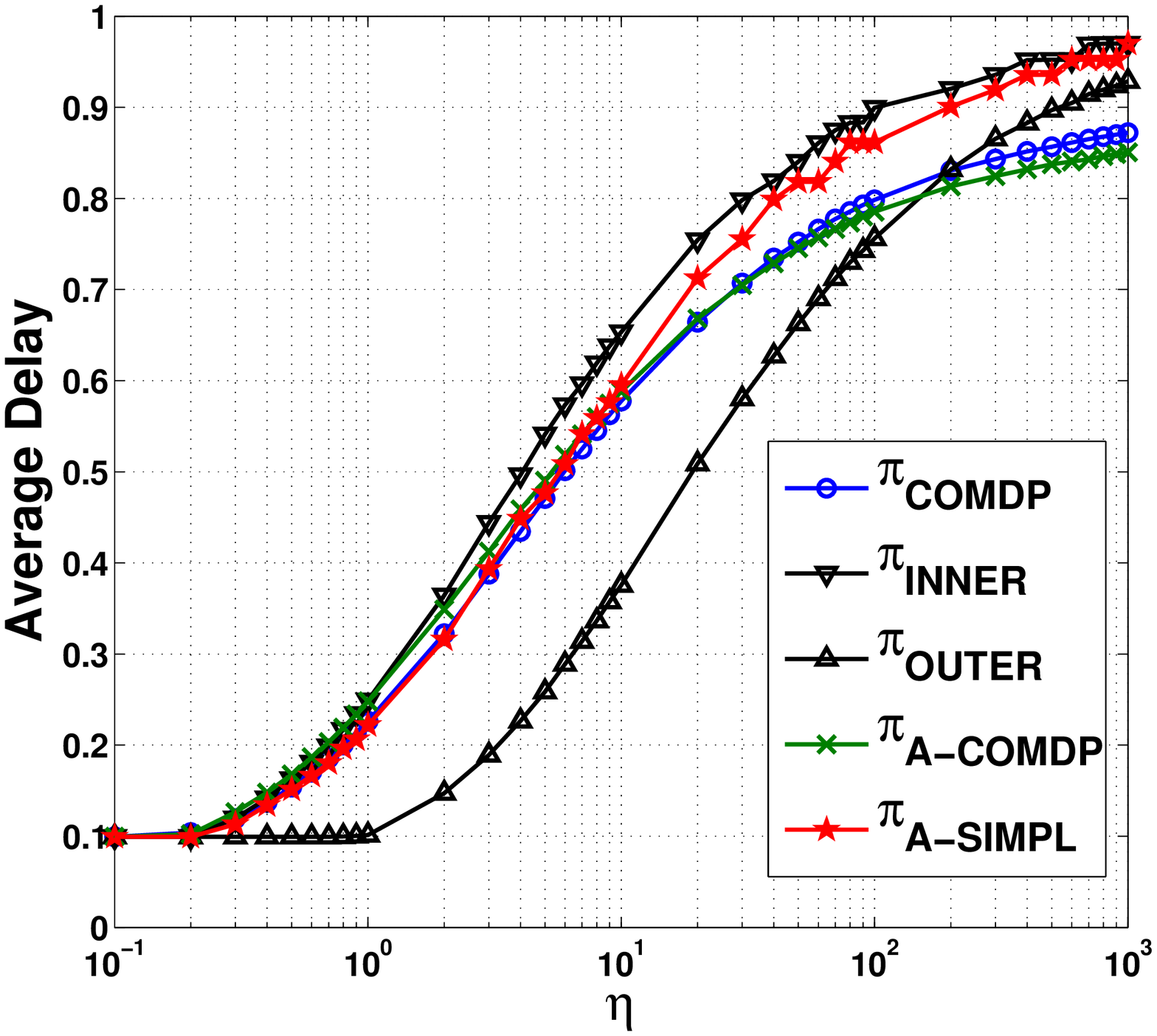}
\label{delay_figure}
}
\centering
\subfigure[]{
\includegraphics[scale=0.32]{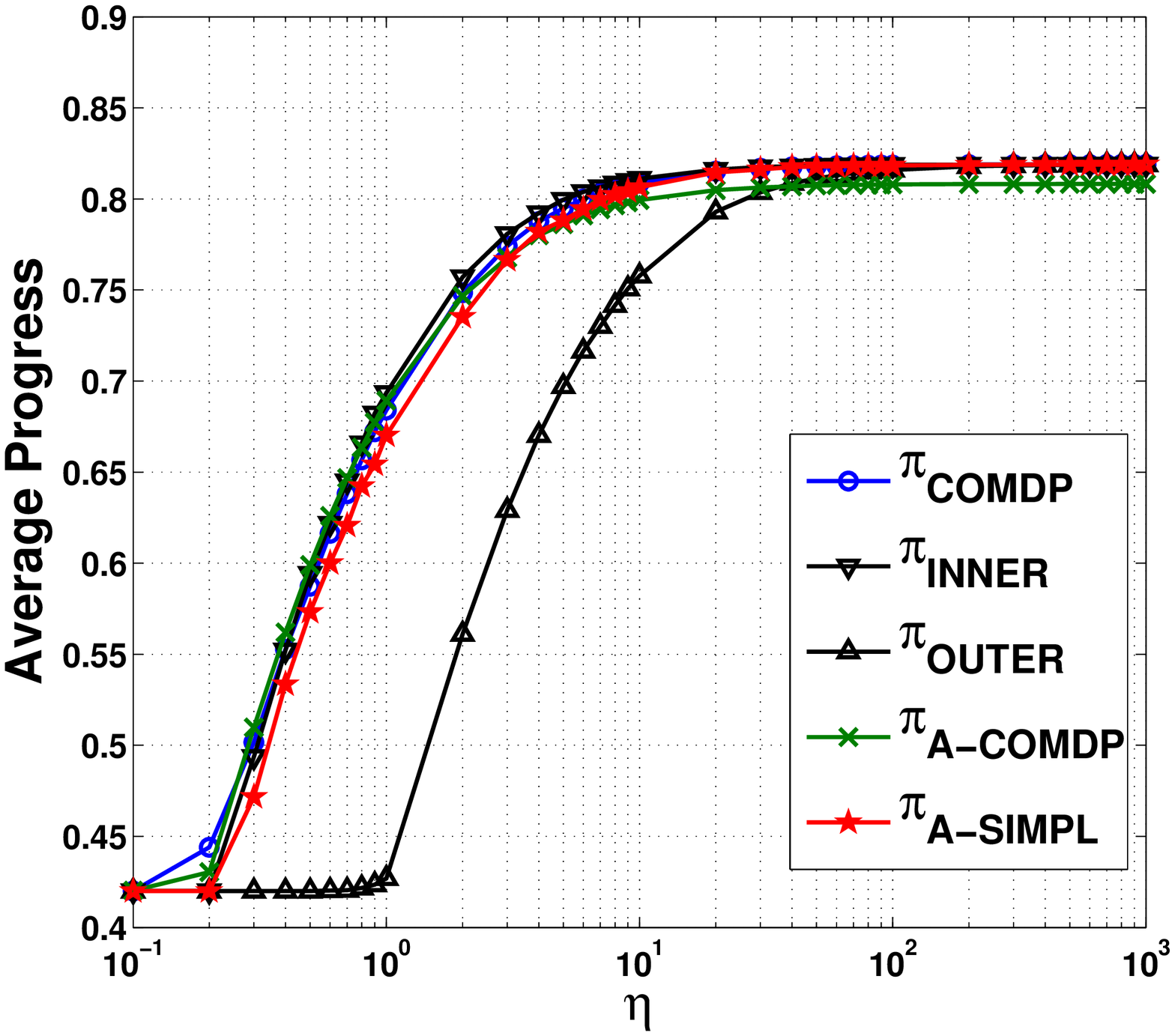}
\label{reward_figure}
}
\caption{\label{delay_reward_figure}
\subref{delay_figure} Average Delay as a function of $\eta$. 
\subref{reward_figure} Average Reward as a function of $\eta$.}
\vspace{-2mm}
\end{figure}

As a function of $\eta$ both the average delay and the average reward are
increasing. This is because for larger $\eta$ we value the progress
more so that we tend to wait for longer time to do better in progress.
For very small values of $\eta$, all the thresholds ($\{\phi_\ell\}$ and
$\alpha$) are very small and most of the time, the packet is forwarded
to the first node (referred to as the \emph{First Forward} policy in \cite{naveen-kumar10geographical-forwarding}).  
For very high values of $\eta$  the policies end up waiting
for all the relays and then choose the one with the best reward 
(referred to as the \emph{Max Forward} policy in \cite{naveen-kumar10geographical-forwarding}). 
Therefore, as $\eta$ increases the average progress of all the policies 
(excluding $\pi_{A-COMDP}$) converge
to $\mathbb{E}[\max\{Z_1,\cdots,Z_N\}]$ which is about $0.82$ 
(see Fig.~\ref{reward_figure}). However
the average progress for $\pi_{A-COMDP}$ converges to a value less than $0.82$.
This is because whenever $\overline{N}<N$ and for large $\eta$ (where all the 
thresholds $\{\phi_\ell\}$ are large)  $\pi_{A-COMDP}$ ends up waiting for the 
first $\overline{N}$ relays and obtain an progress of 
$\max\{Z_1,\cdots,Z_{\overline{N}}\}$ which is less than (or equal to) 
the progress made by the other policies (which is $\max\{Z_1,\cdots,Z_N\}$).
 
Recall that the main problem we are interested in is the one  in
(\ref{main_problem}).  We should be comparing the average delay obtained
using the above policies such that the average reward provided by each
of them is $\gamma$.  This will require us, for each policy, to use an
$\eta$ such that the average reward is equal to $\gamma$. Since we do not have
any closed form expression for average reward in terms of $\eta$, we
proceed as follows. 
 We fix a target $\gamma$.  For each policy, we
choose among the several average reward values (corresponding to the
several $\eta$ values) the one that is closest to the target $\gamma$
and consider the corresponding average delay.  For different target
$\gamma$, in Tables~\ref{table:result_Z} and \ref{table:result_D}  we have tabulated such  
average progress and delay values respectively
for different policies.

\begin{table}[h]
\centering
\begin{tabular}{|c |c |c |c |c|}
\hline
Target $\gamma$&   0.6800  &  0.7200   & 0.7600  &  0.8000\\
\hline
$\mathbb{E}[R_{\pi_{COMDP}}]$ & 0.6840 & 0.7198 & 0.7612 & 0.8000\\
\hline
$\mathbb{E}[R_{\pi_{INNER}}]$ & 0.6822 & 0.7212 & 0.7600 & 0.8001\\
\hline
$\mathbb{E}[R_{\pi_{OUTER}}]$ & 0.6789 & 0.7208 & 0.7578 & 0.8003\\
\hline
$\mathbb{E}[R_{\pi_{A-COMDP}}]$ & 0.6773 & 0.7195 & 0.7590 & 0.8005\\
\hline
$\mathbb{E}[R_{\pi_{A-SIMPL}}]$ & 0.6819 & 0.7165 & 0.7585 & 0.7996\\
\hline
\end{tabular}
\caption{For a given target $\gamma$ (a column) and a policy 
(a row) the entry in the table corresponds to the average progress value
that is closest to the target $\gamma$.}
\label{table:result_Z}
\vspace*{-12mm}
\end{table}

\begin{table}[h!]

\centering
\begin{tabular}{|c |c |c |c |c|}
\hline
Target $\gamma$&   0.6800  &  0.7200   & 0.7600  &  0.8000\\
\hline
$\mathbb{E}[D_{\pi_{COMDP}}]$ & 0.2262 & 0.2711 & 0.3529 & 0.5012\\
\hline
$\mathbb{E}[D_{\pi_{INNER}}]$ & 0.2343 & 0.2905 & 0.3735 & 0.5450\\
\hline
$\mathbb{E}[D_{\pi_{OUTER}}]$ & 0.2359 & 0.2967 & 0.3756 & 0.5551\\
\hline
$\mathbb{E}[D_{\pi_{A-COMDP}}]$ & 0.2336 & 0.2954 & 0.3825 & 0.5997\\
\hline
$\mathbb{E}[D_{\pi_{A-SIMPL}}]$ & 0.2338 & 0.2823 & 0.3684 & 0.5415\\
\hline
\end{tabular}
\caption{For a given target $\gamma$ (a column) and a policy (a row) 
the entry in the table is  the average delay value corresponding to the 
average progress value in Table~\ref{table:result_Z}.}
\label{table:result_D}
\vspace*{-6 mm}
\end{table}

The entries in the first row of both the tables contain different values of 
target $\gamma$ (namely, $0.68$, $0.72$, $0.76$ and $0.8$).
We will discuss the entries in the last column (i.e., entries corresponding to
the target $\gamma$ of $0.8$).
By reading the values from the last column of Table.~\ref{table:result_Z},
which contains the average progress values, we see that the average progress
for all the policies are within $\pm 0.0005$ of $0.8$ (for other columns all the
entries are within $\pm 0.005$ of the corresponding target $\gamma$). 
Hence it is reasonable to compare the delay values of the various policies
in the last column of Table~\ref{table:result_D}. As expected, the COMDP obtains the 
lowest delay (of $0.5012$). There is only a very small performance gap between
the INNER and OUTER bound policies i.e.,  the
delay obtained by the INNER bound policy ($0.5450$) is slightly less than that of
the OUTER bound policy ($0.5551$).
The scheme A-COMDP, which simply assumes that
the actual number of relays is the average of the initial belief, results in
a higher delay (of $0.5997$).
Interestingly we observe that the policy A-SIMPL, which
was derived from the simplified model performs very close to the INNER
bound policy (with an average delay of $0.5415$).
Other columns can be read similarly. 
For small values of target progress, $\gamma$, we see similar performance for all
the policies. 
These observations are for the particular case
where the reward is simply the progress and the initial belief is truncated Poisson.
In Appendix~\ref{simulation_cases_appendix} 
we have shown simulation results for other 
reward structures and initial beliefs. 
We observe similar behavior there as well.  

\subsection{End-to-End Performance}
\label{end_to_end_section}
The single hop problem considered by us was originally motivated by the end-to-end problem.
In the geographical forwarding context, the end-to-end metrics of interest are the total delay and hop count. 
Hop count is important because it is proportional to the number of transmissions and hence the energy expended by the network.
Each of these metrics immediately motivates us to consider two extreme policies. 
One policy is for each node to transmit to its first neighbor in the forwarding set
to wake-up. The second policy is to wait for all the neighbors in 
the forwarding set to wake-up and then transmit to the one
that makes maximum progress towards the sink. It is reasonable to 
expect that the first policy will minimize the end-to-end delay while
the second one will result in the least hop count. Hence there is a tradeoff between the two metrics. 
Suppose we want to minimize the average total end-to-end delay by imposing an average hop count constraint of $h$.
Let $d$ be the distance between the source and the sink. Heuristically, we expect that the hop count 
constraint would be (approximately) met if
each node, enroute to the sink, contributes an average progress of $\frac{d}{h}$. 
For this average progress constraint if each node now uses the locally optimal policy ($\pi_{COMDP}$), 
we expect the average delay at each hop to be minimized and, hence, obtain close to optimal average total delay.
Instead of the optimal policy, each node can use the policy $\pi_{A-SIMPL}$
since its one hop performance is close to the optimum. Also, its application only requires a node $i$
to compute a simple threshold $\alpha_i$, unlike the other policies where
the threshold $\{\phi_\ell\}$ computation is intensive. 
Fig.~\ref{end_to_end_figure} illustrates the multihop forwarding algorithm with each node 
using the locally derived threshold (obtained form the simplified model in Section~\ref{section:simplified_model}) to forward.
Next we briefly describe the network setting and the implemented policies.

\begin{figure}[h!]
 \centering
\includegraphics[scale=0.3]{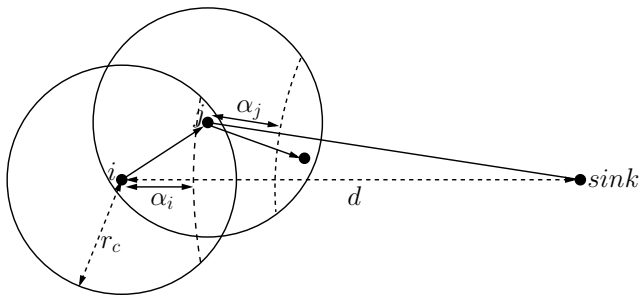}
\caption{Each node enroute to the sink uses the threshold obtained from the simplified model.
$\alpha_i$ and $\alpha_j$ are the thresholds used by nodes 
$i$ and $j$ respectively. \label{end_to_end_figure}}
\end{figure}

\subsubsection{Network Setting}
First we fix a network by placing $M$ nodes randomly in $[0,L]^2$ where $L=10$. $M$ is sampled from
$Poisson(\lambda L^2)$ where $\lambda=5$. Additional source and sink nodes are placed at the locations $(0,0)$ and $(L,L)$ respectively. Further
we have considered a network realization where the forwarding set of each node is nonempty. 
The wake-up times of the nodes are sampled
independently from $Uniform([0,T])$ with $T=1$. If the wake-up instant of a node 
$i$ is $T_i$ then it wakes up at the periodic instances $\{kT+T_i: k\ge0\}$. 
The communication radius of each node is $r_c=1$. 
The source is given a packet at time $0$ and we are interested in routing this packet
to the sink. 

\subsubsection{Implemented Policies (end-to-end)}
We also compare our work with that of Kim et
al.~\cite{kim-etal09optimal-anycast} who have developed end-to-end
delay optimal geographical forwarding in a network setting similar to ours. We
first give a brief description of their work. They minimize, for a
given network, the average delay from any node to the sink when each
node $i$ wakes up asynchronously with rate $r_i$. They show that
periodic wake up patterns obtain minimum delay among all sleep-wake
patterns with the same rate. A relay node with a packet to forward,
transmits a sequence of beacon-ID signals. They propose an algorithm
called LOCAL-OPT \cite{kim-etal08tech-report} which yields, for each
neighbor $j$ of node $i$, an integer $h_j^{(i)}$ such that if $j$
wakes up and listens to the $h$ th beacon signal from node $i$ and if
$h \le h_j^{(i)}$, then $j$ will send an ACK to receive the packet
from $i$. Otherwise (if $h > h_j^{(i)}$) $j$ will go back to sleep and $i$
will continue waiting for further neighbors to wake-up. A
\emph{configuration phase} is required to run the LOCAL-OPT algorithm.

To make a
fair comparision with the work of Kim et al.\ in our network setting  we also introduce beacon-ID
signals of duration $t_I=5$ \emph{msec} and packet transmission duration of
$t_D=30$ \emph{msec}. Description of the policies we have implemented is given below,
\begin{itemize}
 \item $\pi_{FF}$ (First Forward): Each of the node, whenever it gets a packet, it will 
always transmit to the first neighbor in its forwarding set to wake-up, irrespective of the
progress made by it. 

\item $\pi_{MF}$ (Max Forward): We assume that each node knows the number of neighbors in its
forwarding set. 
in this policy a node, when it gets a packet, it will wait for all of its neighbors in the
forwarding set to wake-up. Finally when the last node wakes up, it will forward the packet to the
one which achieves maximum progress towards the sink. 

 \item $\hat{\pi}_{SF}$ (Simplified Forward): This end-to-end policy works by applying the $\pi_{A-SIMPL}$  
policy at each hop. First we fix $\gamma$ as a network parameter 
(as mentioned before, $\gamma$ can be set to $\frac{d}{h}$). 
Nodes \emph{do not know} the number of neighbors in their forwarding set. However they 
know the node density and thus estimates this number as $[\lambda\times\mbox{forwarding set area}]$.
Using this estimated number, a node considers the simplified model and comes up with
a threshold $\alpha$ such that the average progress 
$\mathbb{E}R_\alpha$ in (\ref{R_expectation_equn}) is equal to $\gamma$ 
(see also (\ref{R_ccdf1_equn}) and (\ref{R_ccdf2_equn})). 
$\mathbb{E}R_\alpha$ as a function of 
$\alpha$ is non decreasing. Hence for some node $i$, if $\gamma<\mathbb{E}R_0$ then 
node $i$ chooses its threshold as $0$, and if $\gamma>\mathbb{E}R_{r_c}$ then
node $i$ uses $r_c$ as its threshold.
Suppose node $i$ has a packet to forward.
 When a neighbor of node $i$, say node $j$, wakes up and hears a beacon signal
from $i$, it waits for the ID signal and then sends an ACK signal containing its location information. 
If the progress made by $j$ is more than the threshold, then $i$ forwards
the packet to $j$ (packet duration is $t_D=30$~\emph{msec}). If the progress
made by $j$ is less than the threshold, then $i$ asks $j$ to stay
awake if its progress is the maximum among all the nodes that have
woken up thus far, otherwise $i$ asks $j$ to return to sleep. If more
than one node wakes up during the same beacon signal, then contentions
are resolved by selecting the one which makes the most progress among
them. In the simulation, this happens instantly (as also for the Kim et al. 
algorithm that we compare with); in practice this will require a splitting algorithm; 
see, for example, \cite[Chapter 4.3]{bertsekas-gallager87data-networks}. 
We assume that within $t_I=5$ \emph{msec} all these transactions
(beacon signal, ID, ACK and contention resolution if any) are over.
If there is no eligible
node even after the $\frac{T}{t_I}-th$ beacon signal (one case when
this is possible is when the actual number of nodes $N$ is less than
$[\lambda\times\mbox{Forwarding set Area}]$ and none of the nodes make a
progress of more than the threshold) then $i$ will select one which
makes the maximum progress among all nodes.

\item ${\pi}_{SF}$: This is the same as $\hat{\pi}_{SF}$, but here we assume
that each node knows the exact number of neighbors in its forwarding set and
uses this exact number to come up with the threshold $\alpha$.
Unlike in the previous case, here if none of the neighbors of 
node $i$ make a progress of more that the threshold
used by $i$ then, knowing the number of neighbors, node $i$
choose the neighbor with the best progress when the last one 
wakes up.
$\pi_{FF}$ and $\pi_{MF}$ can be thought of as special cases of
$\pi_{SF}$ with thresholds of $0$ and $r_c$ respectively.

\item \emph{Kim et al.}: We run the LOCAL-OPT algorithm
\cite{kim-etal08tech-report} on the network and obtain the values
$h_j^{(i)}$ for each pair $(i,j)$ where $i$ and $j$ are neighbors. We
use these values to route from source to sink in the presence of sleep
wake cycling. Contentions, if any, are resolved (instantly, in the simulation) by selecting a node
$j$ with the highest $h_j^{(i)}$ index.
\end{itemize}

\begin{figure}
	\centering 
	\includegraphics[scale=0.4] {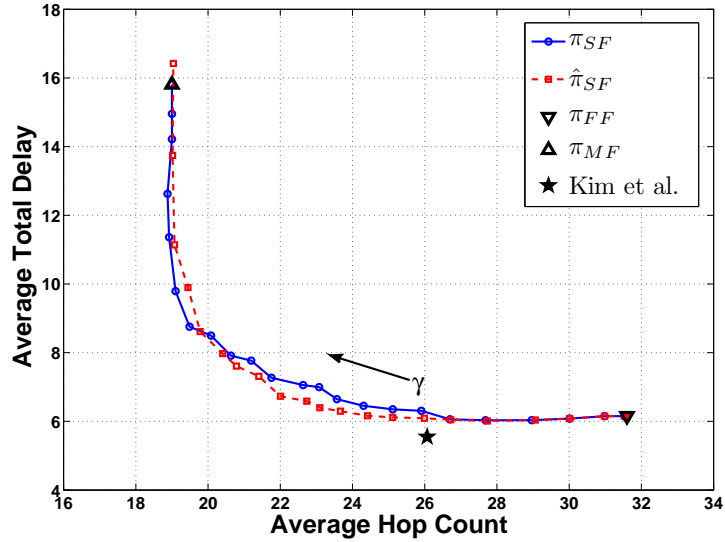}
	\caption{End-to-end performance: Plot of average end-to-end delay vs. average end-to-end hop count
          obtained by applying the simple $\alpha$-threshold policy, $\pi_{A-SIMPL}$, at each hop.
          The  operating points of the policies $\pi_{FF}$, $\pi_{MF}$ and Kim et al.
          are also shown in the figure. Each point on the curve
          corresponds to a different value of $\gamma$ which increases
          along the direction shown. \label{endperformance_figu}}
\vspace*{-6mm}
\end{figure}

\subsubsection{Discussion}
In Fig.~\ref{endperformance_figu} we plot average total delay vs.
average hop count for different policies for fixed node placement, while the averaging is over the wake-up
times of the nodes. Each point on the curve is obtained by averaging  
over 1000 transfers of the packet from the source node to the sink.
 As expected, Kim et al.  achieves minimum average delay.
In comparision with $\pi_{FF}$, Kim et al. also achieve smaller
average hop count. Notice, however that using $\hat{\pi}_{SF}$ (or $\pi_{SF}$) policy and properly choosing
$\gamma$, it is possible to obtain hop count similar to that of Kim et
al., incurring only slightly higher delay. 

The advantage of $\hat{\pi}_{SF}$
over Kim et al. is that there is \emph{no need for a configuration phase}.
Each relay node has to only compute a threshold that depends on the
parameter $\gamma$ which can be set as a network parameter during
deployment. A more interesting approach would be to allow the source
node to set $\gamma$ depending on the type of application. For delay
sensitive applications it is appropriate to use a smaller value of
$\gamma$ so that the delay is small, whereas, for energy constrained
applications (where the network energy needs to conserved) it is
better to use large $\gamma$ so that the number of hops (and
hence the number of transmissions) is reduced. For other applications, moderate
values of $\gamma$ can be used. $\gamma$ can be a part of the ID signal so
that it is made available to the next hop relay.

Another interesting observation from Fig.~\ref{endperformance_figu} is
that the performance of $\hat{\pi}_{SF}$ is close to that of
$\pi_{SF}$. In practice, it may not be possible for a node to
know the exact number of relays in its forwarding set, due to varying channel condition,
node failures, etc. Recall that
$\hat{\pi}_{SF}$ works with the average number of nodes instead of the
actual number. For small values of $\gamma$ both the policies $\pi_{SF}$
and $\hat{\pi}_{SF}$, most of the time, transmit to the first node to
wake up.  Hence the performance is similar for small $\gamma$.  For
large $\gamma$, we observe that the delay incurred by $\hat{\pi}_{SF}$ is larger.

\section{Conclusion}
\label{conclusion}
Our work in this paper was motivated by the problem of geographical forwarding of
packets in a wireless sensor networks whose function is to detect certain infrequent
events and forward these alarms to a base station, and whose nodes are sleep-wake cycling to 
conserve energy. This end-to-end problem gave rise to the local problem faced by
a packet forwarding node, i.e., that  
of choosing one among a set of
potential relays, so as to minimize
the average delay in selecting a relay subject to a
constraint on the average progress (or some reward, in general).
The source does not know the number of available relays, which made
this a sequential decision problem with partial information.
We formulated the problem as a finite horizon POMDP with the unknown
state being the number of available relays. The optimum stopping set
is the set of all pmfs on the number of relays for which the
average cost of stopping is less than that of continuing.  We showed
that the optimum stopping set is convex (Corollary~\ref{convex_corollary}) and obtained threshold points
along certain edges of the simplex which belong to the optimal
stopping set.  A convex combination of these point gave us an inner
bound for the optimum stopping set (Theorem~\ref{inner_bound_theorem}).  We proved a monotonicity result and
obtained an outer bound (Theorem~\ref{outer_bound_theorem}). We also obtained a simple threshold rule by
 formulating an alternate simplified model (Section~\ref{section:simplified_model}).

We have performed simulations to compare the performance of the various policies. 
We observe that the inner bound policy ($\pi_{INNER}$) 
is better than the outer bound ($\pi_{OUTER}$). 
Further the performance of the simple threshold policy ($\pi_{A-SIMPL}$) is comparable with $\pi_{INNER}$,
both of which are close to the optimal policy ($\pi_{COMDP}$).
We have performed one-hop simulations for few other examples where we have considered different
rewards and initial beliefs 
(see Appendix~\ref{simulation_cases_appendix}).
In all the examples, we observe the good performance of the policy $\pi_{A-SIMPL}$.

We have devised simple end-to-end policies ($\pi_{SF}$ and $\hat{\pi}_{SF}$) using $\pi_{A-SIMPL}$.
We have shown that by varying a network parameter these policies can favourably tradeoff between the
average total delay and average hop count. 


\bibliographystyle{IEEEtran}
\bibliography{IEEEabrv,naveen-kumar10relay-selection_tech-report}

\appendices
\section{Proofs of Lemmas in Section~\ref{COMDP_section}}
\label{COMDP_section_proofs_appendix}

\subsection{Proof of Lemma~\ref{phi_lemma_equn}}
\label{phi_lemma_equn_proof_appendix}
From (\ref{phi_k_equn}) (the subscripts of both $U$ and $R$ in the following expressions 
is $K-\ell+1$, which we have suppressed for simplicity),
  \begin{eqnarray*}
   {\phi_\ell(w,b)}
&=&\mathbb{E}_{{K-\ell}}\bigg[\max\bigg\{b,R,\phi_{\ell-1}\Big(w+U,
\max\{b,R\}\Big)\bigg\}-\frac{U}{\eta}\bigg|w,K\bigg]\nonumber\\
&\ge&\mathbb{E}_{K-\ell}\bigg[\max\bigg\{b,R,\phi_{\ell-1}\Big(w+U,
\max\{b,R\}\Big)\bigg\}\bigg|w,K\bigg]-\frac{T-w}{\eta}\nonumber\\
&\ge& b-\frac{T-w}{\eta}\mbox{,}
  \end{eqnarray*}
where the first inequality follows from $U\le T-w$ and the second due to the $\max$ inside the 
expectation. 
\hfill $\blacksquare$

\subsection{Proof of Lemma~\ref{one_point_mass_lemma}}
\label{one_point_mass_lemma_proof_appendix}
 We proceed by value iteration. First we will show that
the lemma holds for $k=n$, where the fixed $n$ could be either less than $K$ or 
equal to $K$ (recall that $K$ is the bound on the number of relays). Suppose $n<K$. 
Since $p_n(n)=1$,  from (\ref{continue_cost_equn}), 
it follows that $c_{n}(p_n,w,b)=T-w-\eta b$. 
Therefore,
\begin{eqnarray*}
 J_{n}(p_n,w,b)=\min\Big\{-\eta b, T-w-\eta b\Big\}
=-\eta b\mbox{.}
\end{eqnarray*}
If $n=K$ then $J_n(p_n,w,b)=g_{K}(p_n,w,b)=-\eta b$. 
Thus for any fixed $n$ we can write
\begin{eqnarray*}
J_{n}(p_n,w,b)= \min\Big\{-\eta b,-\eta \phi_{n-n}(w,b)\Big\}\mbox{.}
\end{eqnarray*}
 Suppose for some $k=1,\cdots,n-1$ the following holds, 
\begin{eqnarray*}
 J_{k+1}(p_{k+1},w,b)&=&\min\Big\{-\eta b,-\eta \phi_{n-(k+1)}(w,b)\Big\}\mbox{.}
\end{eqnarray*}
Then,
\begin{eqnarray*}
c_{k}(p_k,w,b)
&=&\mathbb{E}_{k}\bigg[U_{k+1}+J_{k+1}\Big(\tau_{k+1}(p_k,w,U_{k+1}),w+U_{k+1},
\max\{b,R_{k+1}\}\Big)\bigg|w,n\bigg]\nonumber\\
&=& \mathbb{E}_{k}\bigg[U_{k+1}+\min\bigg\{-\eta \max\{b,R_{k+1}\}, 
-\eta \phi_{n-(k+1)}\Big(w+U_{k+1},\max\{b,R_{k+1}\}\Big)\bigg\}\bigg|w,n\bigg]\nonumber\\
&=& -\eta \mathbb{E}_{k}\bigg[\max\bigg\{b,R_{k+1},\phi_{n-(k+1)}\Big(w+U_{k+1},
\max\{b,R_{k+1}\}\Big)\bigg\}-\frac{U_{k+1}}{\eta}\bigg|w,n\bigg]\mbox{.}
\end{eqnarray*}
In the second equality we have used the induction
hypothesis and the  fact that if $p_k(n)=1$ then \\
$\tau_{k+1}(p_k,w,U_{k+1})(n)=1$. The expectation in
(\ref{c_phi_equn}) is over the pdf $f_R()f_{k}(|w,n)$.  From
(\ref{condorderstat_equation}), note that the pdf $f_{k}(|w,n)$
depends on $k$ and $n$ only through the difference $n-k$. Therefore
$f_{k}(.|w,n)=f_{K-(n-k)}(.|w,K)$.  Using this and (\ref{phi_k_equn})
in (\ref{c_phi_equn}) we can write
\begin{eqnarray}
\label{c_phi_equn}
c_{k}(p_k,w,b)
&=& -\eta \mathbb{E}_{K-(n-k)}\bigg[\max\bigg\{b,R,\phi_{n-(k+1)}\Big(w+U,
\max\{b,R\}\Big)\bigg\}-\frac{U}{\eta}\bigg|w,K\bigg]\nonumber\\
&=&-\eta\phi_{n-k}(w,b)\mbox{.}
\end{eqnarray}
Finally using (\ref{optimal_cost_equn}), we can write, $J_k(p_k,w,b)=\min\Big\{-\eta b,-\eta \phi_{n-k}(w,b)\Big\}$. 
Hence we have proved that the lemma holds for $k$ if it is true for $k+1$. Since we have already shown that the lemma holds for $n$,
from induction argument we can conclude that it holds for all $k=1,2,\cdots,n$.
\hfill $\blacksquare$

\section{Proof of Lemmas in Section~\ref{bounds}}
\label{bounds_section_proofs_appendix}
\subsection{Proof of Lemma~\ref{concave_lemma}}
\label{concave_proof_appendix}
The essence of the proof is same as that in 
\cite[Lemma 1]{porta-etal06point-based}. We 
provide the proof here for completeness. The cost to continue
at stage $K-1$ is (see (\ref{continue_cost_equn})), 
\begin{eqnarray}
\label{c_kbar_equn}
 c_{K-1}(p,w,b)&=&p(K-1)\Big(T-w-\eta b\Big)+p(K)\mathbb{E}_{K-1}\Big[U_{K}-\eta \max\{b,R_{K}\}\Big|w,K\Big]\nonumber\\
&=&p(K-1)\Big(T-w-\eta b\Big)+p(K)\Big(-\eta \phi_1(w,b)\Big)\mbox{.}
\end{eqnarray}
Thus we have shown that $c_{K-1}(\cdot,w,b)$ is an affine function of $p\in\mathcal{P}_{K-1}$, 
for every $(w,b)$. Recalling (\ref{optimal_cost_equn}), $J_{K-1}(\cdot,w,b)$, being the minimum of two affine functions, $-\eta b$ 
and $c_{K-1}(\cdot,w,b)$, is concave on $\mathcal{P}_{K-1}$. The proof now proceeds by induction. 

\emph{Induction hypothesis}: 
For some $k=1,2,\cdots,K-2$, and for each $(w,b)$, $J_{k+1}(\cdot,w,b)$ is concave on $\mathcal{P}_{k+1}$ and can be written down as,
\begin{eqnarray}
\label{equn:s_3_help0}
 J_{k+1}(p,w,b)
&=&\inf_{\alpha\in\mathcal{A}_{k+1}(w,b)}\left<\alpha,p\right>\nonumber\\
&=&\left<\alpha_{k+1}^{(p,w,b)},p\right>\mbox{,}
\end{eqnarray}
where $\mathcal{A}_{k+1}(w,b)$ is some collection of $K-k$ length vectors and 
$\alpha_{k+1}^{(p,w,b)}=\argmin_{\alpha\in\mathcal{A}_{k+1}(w,b)}\left<\alpha,p\right>$.

There are two points to note here. First, in general a concave function can be 
written down as an infimum over some collection of affine functions of the form $\left<\alpha,p\right>+c$
where $c$ is some constant. However, we claim that there are no such constants associated with the $\alpha$
vectors in the set $\mathcal{A}_{k+1}(w,b)$. Second, we are claiming the existence of 
the vector $\alpha_{k+1}^{(p,w,b)}$. Notice that both of these claims are true for stage $K-1$, since 
the set $\mathcal{A}_{K-1}(w,b)$
comprises only two vectors, $\Big((T-w-\eta b),-\eta \phi_1(w,b)\Big)$ and $(-\eta b,-\eta b)$, 
i.e., the induction hypothesis holds for $k=K-1$.

To show that $J_{k}(\cdot,w,b)$ is concave on $\mathcal{P}_k$, it suffices to  prove that 
$c_{k}(\cdot,w,b)$ is concave. $c_k$ in (\ref{continue_cost_equn}) can be written down as,
\begin{eqnarray}
\label{continue_cost_split_equn}
 c_{k}(p,w,b)&=&p(k)\Big(T-w-\eta b\Big)+\sum_{n=k+1}^{K}p(n)\mathbb{E}_{k}\Big[U_{k+1}\Big|w,n\Big]+\nonumber\\
&&\sum_{n=k+1}^{K}
p(n)\mathbb{E}_{k}\bigg[J_{k+1}\Big(\tau_{k+1}(p,w,U_{k+1}),w+U_{k+1},\max\{b,R_{k+1}\}\Big)\bigg|w,n\bigg]\mbox{.}
\end{eqnarray}
Let us focus on the third term in the above summation. Call it $s_3$ for convenience. 
\begin{eqnarray}
\label{equn:s_3_help1}
 s_3&=&\sum_{n=k+1}^{K}p(n)\int_{0}^{\overline{R}}\int_{0}^{T-w}f_R(r)f_{k}(u|w,n)J_{k+1}\Big(\tau_{k+1}(p,w,u),w+u,\max\{b,r\}\Big)dudr\nonumber\\
 &=&\sum_{n=k+1}^{K}p(n)\int_{0}^{\overline{R}}\int_{0}^{T-w}f_R(r)f_{k}(u|w,n)
\left<\alpha_{k+1}^{(\tau_{k+1}(p,w,u),w+u,\max\{b,r\})},\tau_{k+1}(p,w,u)\right>dudr\mbox{.}
\end{eqnarray}
Substituting for $\tau_{k+1}(p,w,u)$ from (\ref{belief_transition_equn}) and  simplifying yields,
\begin{eqnarray}
\label{equn:s_3_help2}
 s_3&=&\sum_{n=k+1}^{K}p(n)\int_{0}^{\overline{R}}\int_{0}^{T-w}f_R(r)f_{k}(u|w,n)
 \sum_{n'=k+1}^{K}\Big(\alpha_{k+1}^{(\tau_{k+1}(p,w,u),w+u,\max\{b,r\})}(n')\Big)
\frac{p(n')f_{k}(u|w,n')}{\sum_{\ell=k+1}^{K}p(\ell)f_k(u|w,\ell)}dudr\nonumber\\
&=&\sum_{n'=k+1}^{K}p(n')\int_{0}^{\overline{R}}\int_{0}^{T-w}f_R(r)f_{k}(u|w,n')
\alpha_{k+1}^{(\tau_{k+1}(p,w,u),w+u,\max\{b,r\})}(n')dudr\mbox{.}
\end{eqnarray}
Define $K-k+1$ length vector $\alpha_k^{(p,w,b)}$ as $\alpha_k^{(p,w,b)}(k)=(T-w-\eta b)$ and for $n=k+1,\cdots,K$,
\begin{eqnarray}
 \alpha_k^{(p,w,b)}(n)=\mathbb{E}_{k}\Big[U_{k+1}\Big|w,n\Big]+\int_{0}^{\overline{R}}\int_{0}^{T-w}f_R(r)f_{k}(u|w,n)
\alpha_{k+1}^{(\tau_{k+1}(p,w,u),w+u,\max\{b,r\})}(n) dudr\mbox{.}
\end{eqnarray}
Then (\ref{continue_cost_split_equn}) can be written as,
\begin{eqnarray*}
 c_k(p,w,b)&=&\left<\alpha_k^{(p,w,b)},p\right>\mbox{.}
\end{eqnarray*}
Now for any $q\ne p$ if we write down $\left<\alpha_k^{(q,w,b)},p\right>$, then it will have a term similar to $s_3$ (see (\ref{equn:s_3_help1})
and (\ref{equn:s_3_help2})), but with 
$\alpha_{k+1}^{(\tau_{k+1}(p,w,u),w+u,\max\{b,r\})}$ replaced with $\alpha_{k+1}^{(\tau_{k+1}(q,w,u),w+u,\max\{b,r\})}$. Let us call
this term as $\hat{s}_3$. 
More precisely $\left<\alpha_k^{(q,w,b)},p\right>$ will be similar to RHS of (\ref{continue_cost_split_equn}), but with the third term there (recall that we had named the 
third term as $s_3$) replaced by $\hat{s}_3$. 
Using (\ref{equn:s_3_help0}) in (\ref{equn:s_3_help1}) we  observe that $\hat{s}_3\ge s_3$ so that,
\begin{eqnarray}
 c_k(p,w,b)\le \left<\alpha_k^{(q,w,b)},p\right>\mbox{.}
\end{eqnarray}
Hence by defining $\mathcal{A}_{k}(w,b):=\{\alpha_k^{(q,w,b)}:q\in\mathcal{P}_k\}$ we can write,
\begin{eqnarray*}
 c_k(p,w,b)
&=&\inf_{\alpha\in\mathcal{A}_k(w,b)}\left<\alpha,p\right>
\end{eqnarray*}
which proves that $c_k(\cdot,w,b)$ is concave. Finally, by including in the set $\mathcal{A}_k(w,b)$,  the $K-k+1$ length
vector with each component equal to $-\eta b$, 
we can express $J_k(p,w,b)$ as,
\begin{eqnarray*}
 J_{k}(p,w,b)
&=&\inf_{\alpha\in\mathcal{A}_{k}(w,b)}\left<\alpha,p\right>\mbox{.}
\end{eqnarray*}
\hfill $\blacksquare$

\subsection{Proof of Lemma~\ref{cost_order_lemma}}
\label{cost_order_proof_appendix}
 Since $q$ has mass only on $k$ and $k+1$, using Lemma~\ref{two_point_mass_lemma} we can write,
\begin{eqnarray*}
 c_k(q,w,b)&=&p(k)\Big(T-w-\eta b\Big)+p(k+1)\Big(-\eta\phi_1(w,b)\Big)\mbox{.}
\end{eqnarray*}
Using  (\ref{condexpectation_equn}) and (\ref{phi_k_equn}), we obtain 
$\phi_1(w,b)=\mathbb{E}\Big[\max\{b,R\}\Big]-\frac{T-w}{2\eta}$.
Substituting for $\phi_1(w,b)$ in the above expression we have,

\begin{eqnarray}
 c_k(q,w,b)&=&p(k)\Big(T-w-\eta b\Big)+p(k+1)\bigg(\frac{T-w}{2}-\eta\mathbb{E}[\max\{b,R\}]\bigg)\mbox{.}
\end{eqnarray}

Recall (\ref{continue_cost_equn}),
\begin{eqnarray*}
 \lefteqn{c_k(p,w,b)=p(k)\Big(T-w-\eta b\Big)+}\\
&&\sum_{n=k+1}^{K}p(n)\mathbb{E}_{k}\bigg[U_{k+1}+J_{k+1}\Big(\tau_{k+1}(p,w,U_{k+1}),
w+U_{k+1},\max\{b,R_{k+1}\}\Big)\bigg|w,n\bigg]\mbox{.}
\end{eqnarray*}
Using (\ref{optimal_cost_equn}) and (\ref{condexpectation_equn}) we can write,
\begin{eqnarray*}
 c_k(p,w,b)&\le&p(k)\Big(T-w-\eta b\Big)+\sum_{n=k+1}^{K}p(n)\mathbb{E}_{k}\Big[U_{k+1}-\eta\max\{b,R_{k+1}\}\Big|w,n\Big]\\
&=&p(k)\Big(T-w-\eta b\Big)+\sum_{n=k+1}^{K}p(n)\bigg(\frac{T-w}{n-k+1}-\eta\mathbb{E}\Big[\max\{b,R\}\Big]\bigg)\\
&\le&p(k)\Big(T-w-\eta b\Big)+\sum_{n=k+1}^{K}p(n)\bigg(\frac{T-w}{2}-\eta\mathbb{E}\Big[\max\{b,R\}\Big]\bigg)\\
&=&p(k)\Big(T-w-\eta b\Big)+\Big(1-p(k)\Big)\bigg(\frac{T-w}{2}-\eta\mathbb{E}\Big[\max\{b,R\}\Big]\bigg)\\
&=&c_k(q,w,b)\mbox{.}
\end{eqnarray*}
\hfill $\blacksquare$


\section{Proof of Lemmas in Section~\ref{section:simplified_model}}
\label{simplified_model_proofs_appendix}

\subsection{Proof of Lemma~\ref{beta_1_lemma}}
\label{beta_1_lemma_proof_appendix}

\noindent
\emph{Proof of \ref{beta_1_lemma}.1:}
 Let $F_R$ represent the cummulative distribution function (cdf) 
of $R$. For  $b\in[0,\overline{R}]$, the cdf of $\max\{b,R\}$ is,
 \begin{equation*}
  F_{\max\{b,R\}}(r)=\left\{\begin{array}{ll}
                     0&\mbox{ if }r<b\\
 		    F_R(r)&\mbox{ if }r\ge b\mbox{,}\end{array}\right.
 \end{equation*}
 using which $\beta(b)$ in (\ref{beta_1_equn}) can be written down as,
 \begin{eqnarray}
  \beta(b)&=&\int_0^{\overline{R}}\Big(1-F_{\max\{b,R\}}(r)\Big)dr-\frac{T}{\eta \tilde{N}}\nonumber\\
 &=&b+\int_b^{\overline{R}}\Big(1-F_R(r)\Big)dr-\frac{T}{\eta \tilde{N}}\mbox{.}\nonumber
 \end{eqnarray}
 ${\beta}'(b)=F_R(b)\ge0$  and ${\beta}''(b)=f_R(b)\ge0$ implies that $\beta$ is continuous, increasing and convex in $b$.\\
 
 \noindent
 \emph{Proof of \ref{beta_1_lemma}.2:}
 From (\ref{beta_1_lemma}) note that  $\beta(\overline{R})<\overline{R}$. 
 Also $\beta$ is convex (from {Lemma} \ref{beta_1_lemma}.1). Hence we can write,
 \begin{eqnarray*}
  \beta(b)&\le&\frac{\overline{R}-b}{\overline{R}}\beta(0)+\frac{b}{\overline{R}}\beta(\overline{R})\\
 &<&b\mbox{.}
 \end{eqnarray*}
 \verb11 \\
 \noindent
 \emph{Proof of \ref{beta_1_lemma}.3:}
   Let $g(b)=b-\beta_1(b)$. Then, $g(0)\le0$ and
   $g(\overline{R})>0$ (because $\beta(\overline{R})<\overline{R}$). Also $g(b)$ is continuous (being differentiable) on
   $[0,\overline{R}]$. Hence, $\exists$ an $\alpha\in[0,\overline{R})$ such that
   $g(\alpha)=0$.

 Suppose $\exists$ an ${\alpha}'>{\alpha}$ such that
 $g({\alpha}')=0$. Then by convexity of $\beta$
 (from {Lemma}~\ref{beta_1_lemma}.1),
 \begin{eqnarray*}
 	\beta({\alpha}')&\le&\frac{\overline{R}-{\alpha}'}{\overline{R}-{\alpha}}\beta(\alpha)+\frac{{\alpha}'-
	{\alpha}}{\overline{R}-{\alpha}}\beta(\overline{R})\mbox{,}
 \end{eqnarray*}
 which implies that ${\beta}(\overline{R})\ge\overline{R}$. Contradicts the fact that, $\beta(\overline{R})<\overline{R}$.\\
 
 \noindent
 \emph{Proof of \ref{beta_1_lemma}.4:}
   Again consider $g(b)=b-\beta(b)$. $g(b)$ is continuous (being
   differentiable) on $[0,\overline{R}]$.  Suppose $\exists$ $b\in(\alpha,\overline{R}]$
   such that $\beta_1(b)>b$, then $g(b)\le0$ and $g(\overline{R})>0$. This implies that
   $\exists$ $b'$ in $[b,\overline{R})$ such that $g(b')=0$. Contradicts the
   uniqueness of $\alpha$ shown in {Lemma} \ref{beta_1_lemma}.3.  Similarly
   it can be shown that $\beta(b)>b$ for $b\in[0,\alpha)$.
\hfill $\blacksquare$ 

\subsection{Proof of Lemma~\ref{Js_beta_k_lemma}}
\label{Js_beta_k_lemma_proof_appendix}
 The proof is by induction. From (\ref{Js_last_1_equn}) and (\ref{beta_1_equn}), we see 
that the result is already true for $k=\tilde{N}-1$.
Next we will prove it for $k=\tilde{N}-2$.
 Let us evaluate the value function at stage $\tilde{N}-2$ and simplify using 
the expression for $\tilde{J}_{\tilde{N}-1}$ (from (\ref{Js_last_1_equn})),
\begin{eqnarray}
	\label{Js_last_2_equn}
	\tilde{J}_{\tilde{N}-2}(b)
	&=&\min\bigg\{-\eta b,\mathbb{E}\Big[\tilde{U}_{\tilde{N}-1}+\tilde{J}_{\tilde{N}-1}(\max\{b,\tilde{R}_{\tilde{N}-1}\})\Big]\bigg\}\nonumber\\
	&=&\min\bigg\{-\eta b,\mathbb{E}\Big[\tilde{U}_{\tilde{N}-1}+\min\Big\{-\eta\max\{b,\tilde{R}_{\tilde{N}-1}\},
	-\eta\beta_1(\max\{b,\tilde{R}_{\tilde{N}-1}\})\Big\}\Big]\bigg\}\nonumber\\
  	&=&\min\bigg\{-\eta b,\frac{T}{\tilde{N}}-\eta\mathbb{E}\Big[\max\Big\{b,\tilde{R}_{\tilde{N}-1},
	\beta_1(\max\{b,\tilde{R}_{\tilde{N}-1}\})\Big\}\Big]\bigg\}\nonumber\\
	&=&\min\Big\{-\eta b,-\eta\beta_2(b)\Big\}\mbox{,}
\end{eqnarray}
where
\begin{eqnarray}
	\label{beta_2_equn}
	\beta_2(b)&=&\mathbb{E}\bigg[\max\Big\{b,R,\beta_1(\max\{b,R\})\Big\}\bigg]-\frac{T}{\eta \tilde{N}}\mbox{.}
\end{eqnarray}
$\beta_2(b)\ge\beta_1(b)$ easily follows because $\mathbb{E}\bigg[\max\Big\{b,R,\beta_1(\max\{b,R\})\Big\}\bigg]\ge\mathbb{E}\Big[\max\{b,R\}\Big]$. Next if $b\ge\alpha$ then from
Lemma \ref{beta_1_lemma}.2 and \ref{beta_1_lemma}.4  we have $\max\{b,R\}\ge\beta_1(\max\{b,R\})$ so that $\max\Big\{b,R,\beta_1(\max\{b,R\})\Big\}=\max\{b,R\}$. Therefore,
\begin{eqnarray}
 \label{beta_equal_equn}
 \beta_2(b)&=&\mathbb{E}\Big[\max\{b,R\}\Big]-\frac{T}{\eta\tilde{N}}\nonumber\\
&=&\beta_1(b)\mbox{.}
\end{eqnarray}
Hence we have shown that the Lemma  holds for $\tilde{N}-2$. 
Suppose that the Lemma (i.e., (\ref{Js_k_equn}), (\ref{beta_k_equn}) 
 and the ordering property) holds for some $k$, $1<k\le\tilde{N}-1$, then following 
the same arguments which were used to obtain (\ref{Js_last_2_equn}) and 
(\ref{beta_2_equn}) (replace $\tilde{N}-2$ by $k-1$ and $\tilde{N}-1$ by $k$) 
we can show that (\ref{Js_k_equn}) and (\ref{beta_k_equn}) hold for stage $k-1$ as well.
The ordering property can be easily shown to hold for stage $k-1$ by using the ordering property
for stage $k$.
\hfill $\blacksquare$


\section{One-Hop Performance for Different Reward Distributions ($f_R$) and Initial Beliefs ($p_0$)}
\label{simulation_cases_appendix}
In Section~\ref{one_hop_performance} we performed simulations to 
compare the one-hop performance of the various policies (recall the description 
of the implemented policies from Section~\ref{implemented_policies}).
There we had considered the context of geographical forwarding (which was the primary
motivation for our work), so that the reward associated with a relay 
is the progress it makes towards the sink (see Fig.~\ref{forwarding_set_figure}). Also the initial belief we had
considered was truncated Poisson (of mean $\lambda=10$) with $K=50$ 
(recall that $K$ is the bound on the number of relays).
From Tables~\ref{table:result_Z} and 
\ref{table:result_D} we were able to draw
the following conclusions: For large values of target $\gamma$,
\begin{itemize}
\item  The average delay of $\pi_{A-SIMPL}$ and $\pi_{INNER}$
is close to $\pi_{COMDP}$, which is the optimal policy.
\item The difference in the delays, incurred by $\pi_{INNER}$ and $\pi_{OUTER}$, is small.
\item $\pi_{A-COMPD}$ incurs a larger delay.
\end{itemize}
For smaller values of target $\gamma$, we see that all the
policies incur similar average delay.

In this appendix, to comment more on these conclusions, we have performed simulations
for few other examples, with different pairs of reward distributions ($f_R$) 
and intial beliefs ($p_0$). In each of these examples, the good performance of the
policy $\pi_{A-SIMPL}$ is observed.
We have fixed $T=1$ and normalize the rewards to take values within the interval $[0,1]$ 
for all the examples. 
The first two examples extend the scenario of geographical forwarding mentioned earlier while
in the next two we simply take $R$ to have uniform and truncated Gaussian distributions, respectively.  
As in Section~\ref{one_hop_performance} we discretize the 
state space and approximate all the pmfs with pdfs in simulations.
For each example we tabulate the results (i.e., average reward values for few values
of target $\gamma$ and the corresponding average delays) which have the same
explanation as for the Tables~\ref{table:result_Z} and \ref{table:result_D} 
(see the explanation following Tables~\ref{table:result_Z} and \ref{table:result_D}).


\subsection*{EXAMPLE 1}
\begin{itemize}
\item Reward: We consider the same scenario of geographical forwarding as in Section~\ref{one_hop_performance}.
 Here we allow the reward to be a function of the progress. 
 Let $Z_i$ be the progress made by relay $i$. Small values of $Z_i$ are not favourable because
 the packet does not make significant progress towards the sink. On the other hand when $Z_i$ is large, the 
 attenuation of the signal transmitted from the source to the relay will be large. This means a higher power is 
 required to achieve a given packet error rate. Thus, we want to penalize both small and large values of $Z_i$.
 This motivates us to choose the reward function to be $R_i=-a_1 Z_i\log(\frac{Z_i}{a_2})$. 
 $R_i$ is maximum at $Z_i=\frac{a_2}{e}$. We have choosen $a_2=0.4e$. $a_1$ is a constant used to
 normalize the maximum reward value to $1$.  Using $f_Z$ in (\ref{equn:progress_pmf}) one can obtain $f_R$.

\item Initial Belief: Bound on the number of relays is $K=40$. Initial belief is truncated Poisson with parameter $5$
{i.e.,} for $n=1,2,\cdots,K$, $p_0(n)=c\frac{5^n}{n!}e^{-5}$, where $c$ is the normalization constant. 
\end{itemize}
Results are tabulated in Tables~\ref{table:R_result_case1} and \ref{table:D_result_case1}.

\begin{table}[h!]
\centering
\begin{tabular}{|c |c |c |c |c|c|}
\hline
Target $\gamma$ & 0.7800 & 0.8200 & 0.8600 & 0.9000 & 0.9400\\
\hline
$\mathbb{E}[R_{\pi_{COMDP}}]$ & 0.7751 & 0.8164 & 0.8628 & 0.9018 & 0.9402\\
\hline
$\mathbb{E}[R_{\pi_{INNER}}]$ & 0.7755 & 0.8195 & 0.8663 & 0.8991 & 0.9397\\
\hline
$\mathbb{E}[R_{\pi_{OUTER}}]$ & 0.7826 & 0.8216 & 0.8593 & 0.9006 & 0.9407\\
\hline
$\mathbb{E}[R_{\pi_{A-COMDP}}]$ & 0.7730 & 0.8184 & 0.8651 & 0.8986 & 0.9401\\
\hline
$\mathbb{E}[R_{\pi_{A-SIMPL}}]$ & 0.7755 & 0.8166 & 0.8589 & 0.8992 & 0.9406\\
\hline
\end{tabular}
\caption{\emph{EXAMPLE 1}: Target $\gamma$ and corresponding average rewards}
\label{table:R_result_case1}
\vspace*{-8mm}
\end{table}

\begin{table}[h!]
\centering
\begin{tabular}{|c |c |c |c |c|c|}
\hline
Target $\gamma$ & 0.7800 & 0.8200 & 0.8600 & 0.9000 & 0.9400\\
\hline
$\mathbb{E}[D_{\pi_{COMDP}}]$ & 0.1950 & 0.2082 & 0.2340 & 0.2715 & 0.3649\\
\hline
$\mathbb{E}[D_{\pi_{INNER}}]$ & 0.1963 & 0.2150 & 0.2471 & 0.2839 & 0.4005\\
\hline
$\mathbb{E}[D_{\pi_{OUTER}}]$ & 0.1993 & 0.2168 & 0.2431 & 0.2865 & 0.4078\\
\hline
$\mathbb{E}[D_{\pi_{A-COMDP}}]$ & 0.1963 & 0.2164 & 0.2499 & 0.2871 & 0.4153\\
\hline
$\mathbb{E}[D_{\pi_{A-SIMPL}}]$ & 0.1963 & 0.2133 & 0.2411 & 0.2840 & 0.4056\\
\hline
\end{tabular}
\caption{\emph{EXAMPLE 1}: Average delays corresponding to average rewards in Table~\ref{table:R_result_case1}}
\label{table:D_result_case1}
\end{table}


\subsection*{EXAMPLE 2}
\begin{itemize}
 \item Reward: Again we consider the scenario of geographical forwarding. 
Let $Z_i$ be the progress made by a relay $i$ and $H_i$ be the (normalized) data rate 
from the source to the relay $i$. $H_i$ is a random variable 
which takes values from the set $\{0.2, 0.4, 0.6, 0.8, 1.0\}$. For small (large)
values of $Z_i$ there is a high (low) probability that the data rates are good.
Thus as $Z_i$ increases we want the probability of $H_i$ taking larger values to decrease.
Therefore when $Z_i=z$ we set $\mathbb{P}(H_i=h|Z_i=z)=a_zhe^{-dzh}$ 
for $h\in\{0.2,\cdots,1.0\}$. $a_zhe^{-dzh}$, as a function of $h$, attains maximum at
$\frac{1}{dz}$ so that as $Z_i$ increases $H_i$ takes lower values with high probability.   
We have choosen $d=10$.
$a_z$ is a constant to normalize the
total probability to $1$. Finally the reward associated with relay $i$ is 
$R_i=c_1 Z_i+c_2 H_i$. We choose $c_1=c_2=0.5$. 

\item Initial Belief: $K=30$ and $p_0$ is binomial with parameter $0.5$ i.e.,
 for $n=1,2,\cdots,K$, $p_0(n)= c{K\choose n}0.5^n$ where $c$ is the normalization contant. 
Such an initial belief is appropriate if initially during deployment the source had $K$ potential relays 
and at the time when the source has a packet (which happens after a significant amount of time because the events are rare),
the probability with which a relay has not failed is $0.5$ (we have ignored the case where all the relays have failed). 
\end{itemize}
Results are tabulated in Tables~\ref{table:R_result_case2} and \ref{table:D_result_case2}.

\begin{table}[h!]
\centering
\begin{tabular}{|c |c |c |c |c|c|}
\hline
Target $\gamma$ & 0.4500 & 0.5000 & 0.5500 & 0.6000 & 0.6500\\
\hline
$\mathbb{E}[R_{\pi_{COMDP}}]$ & 0.4510 & 0.5058 & 0.5489 & 0.6002 & 0.6500\\
\hline
$\mathbb{E}[R_{\pi_{INNER}}]$ & 0.4510 & 0.5050 & 0.5508 & 0.6013 & 0.6503\\
\hline
$\mathbb{E}[R_{\pi_{OUTER}}]$ & 0.4510 & 0.5056 & 0.5506 & 0.5989 & 0.6500\\
\hline
$\mathbb{E}[R_{\pi_{A-COMDP}}]$ & 0.4510 & 0.5066 & 0.5500 & 0.6002 & 0.6500\\
\hline
$\mathbb{E}[R_{\pi_{A-SIMPL}}]$ & 0.4510 & 0.5088 & 0.5518 & 0.6013 & 0.6499\\
\hline
\end{tabular}
\caption{\emph{EXAMPLE 2}: Target $\gamma$ and corresponding average rewards}
\label{table:R_result_case2}
\vspace*{-8mm}
\end{table}

\begin{table}[h!]
\centering
\begin{tabular}{|c |c |c |c |c|c|}
\hline
Target $\gamma$ &  0.4500 & 0.5000 & 0.5500 & 0.6000 & 0.6500\\
\hline
$\mathbb{E}[D_{\pi_{COMDP}}]$ & 0.0638 & 0.0830 & 0.1179 & 0.2075 & 0.4246\\
\hline
$\mathbb{E}[D_{\pi_{INNER}}]$ & 0.0638 & 0.0835 & 0.1218 & 0.2155 & 0.4456\\
\hline
$\mathbb{E}[D_{\pi_{OUTER}}]$ & 0.0638 & 0.0836 & 0.1225 & 0.2159 & 0.4582\\
\hline
$\mathbb{E}[D_{\pi_{A-COMDP}}]$ & 0.0638 & 0.0843 & 0.1213 & 0.2146 & 0.4510\\
\hline
$\mathbb{E}[D_{\pi_{A-SIMPL}}]$ & 0.0638 & 0.0858 & 0.1225 & 0.2159 & 0.4443\\
\hline
\end{tabular}
\caption{\emph{EXAMPLE 2}: Average delays corresponding to average rewards in Table~\ref{table:R_result_case2}}
\label{table:D_result_case2}
\end{table}


\subsection*{EXAMPLE 3}
\begin{itemize}
 \item Reward: $R$ is distributed uniformly on $[0,1]$.
 \item Initial Belief: $K=20$ and $p_0$ is binomial with parameter $0.5$.
\end{itemize}
Results are tabulated in Tables~\ref{table:R_result_case3} and \ref{table:D_result_case3}.

\begin{table}[h!]
\centering
\begin{tabular}{|c |c |c |c |c|c|}
\hline
Target $\gamma$ & 0.7000 & 0.7500 & 0.8000 & 0.8500 & 0.9000\\
\hline
$\mathbb{E}[R_{\pi_{COMDP}}]$ & 0.7093 & 0.7566 & 0.8030 & 0.8503 & 0.9000\\
\hline
$\mathbb{E}[R_{\pi_{INNER}}]$ & 0.7102 & 0.7588 & 0.7984 & 0.8512 & 0.9001\\
\hline
$\mathbb{E}[R_{\pi_{OUTER}}]$ & 0.7099 & 0.7523 & 0.8004 & 0.8500 & 0.8999\\
\hline
$\mathbb{E}[R_{\pi_{A-COMDP}}]$ & 0.7135 & 0.7580 & 0.8040 & 0.8488 & 0.9001\\
\hline
$\mathbb{E}[R_{\pi_{A-SIMPL}}]$ & 0.7119 & 0.7538 & 0.7968 & 0.8485 & 0.9009\\
\hline
\end{tabular}
\caption{\emph{EXAMPLE 3}: Target $\gamma$ and corresponding average rewards}
\label{table:R_result_case3}
\vspace*{-8mm}
\end{table}

\begin{table}[h!]
\centering
\begin{tabular}{|c |c |c |c |c|c|}
\hline
Target $\gamma$ & 0.7000 & 0.7500 & 0.8000 & 0.8500 & 0.9000\\
\hline
$\mathbb{E}[D_{\pi_{COMDP}}]$ & 0.1557 & 0.1846 & 0.2279 & 0.3033 & 0.5115\\
\hline
$\mathbb{E}[D_{\pi_{INNER}}]$ & 0.1588 & 0.1910 & 0.2288 & 0.3180 & 0.5443\\
\hline
$\mathbb{E}[D_{\pi_{OUTER}}]$ & 0.1594 & 0.1870 & 0.2339 & 0.3201 & 0.5515\\
\hline
$\mathbb{E}[D_{\pi_{A-COMDP}}]$ & 0.1610 & 0.1909 & 0.2367 & 0.3136 & 0.5995\\
\hline
$\mathbb{E}[D_{\pi_{A-SIMPL}}]$ & 0.1600 & 0.1872 & 0.2279 & 0.3107 & 0.5529\\
\hline
\end{tabular}
\caption{\emph{EXAMPLE 3}: Average delays corresponding to average rewards in Table~\ref{table:R_result_case3}}
\label{table:D_result_case3}
\vspace*{-8mm}
\end{table}

\subsection*{EXAMPLE 4}
\begin{itemize}
 \item Reward: Truncated Gaussian of mean $0.5$ and variance $1$ i.e., 
for $r\in[0,1]$, $f_R(r)=\frac{c}{\sqrt{2\pi}}e^{\frac{(r-0.5)^2}{2}}$ where
 $c$ is the normalization constant. 
 \item Initial Belief: $K=15$ and $p_0$ is uniform on $\{1,2,\cdots,K\}$.
\end{itemize}
Results are tabulated in Tables~\ref{table:R_result_case4} and \ref{table:D_result_case4}.

\begin{table}[h!]
\centering
\begin{tabular}{|c |c |c |c |c|c|}
\hline
Target $\gamma$ & 0.6400 & 0.6800 & 0.7200 & 0.7600 & 0.8000\\
\hline
$\mathbb{E}[R_{\pi_{COMDP}}]$ & 0.6500 & 0.6725 & 0.7208 & 0.7625 & 0.8001\\
\hline
$\mathbb{E}[R_{\pi_{INNER}}]$ & 0.6487 & 0.6728 & 0.7240 & 0.7601 & 0.7997\\
\hline
$\mathbb{E}[R_{\pi_{OUTER}}]$ & 0.6388 & 0.6807 & 0.7213 & 0.7600 & 0.7998\\
\hline
$\mathbb{E}[R_{\pi_{A-COMDP}}]$ & 0.6259 & 0.6791 & 0.7225 & 0.7618 & 0.7997\\
\hline
$\mathbb{E}[R_{\pi_{A-SIMPL}}]$ & 0.6302 & 0.6769 & 0.7146 & 0.7607 & 0.8009\\
\hline
\end{tabular}
\caption{\emph{EXAMPLE 4}: Target $\gamma$ and corresponding average rewards}
\label{table:R_result_case4}
\vspace*{-8mm}
\end{table}

\begin{table}[h!]
\centering
\begin{tabular}{|c |c |c |c |c|c|}
\hline
Target $\gamma$ & 0.6400 & 0.6800 & 0.7200 & 0.7600 & 0.8000\\
\hline
$\mathbb{E}[D_{\pi_{COMDP}}]$ & 0.2092 & 0.2222 & 0.2600 & 0.3060 & 0.3799\\
\hline
$\mathbb{E}[D_{\pi_{INNER}}]$ & 0.2274 & 0.2473 & 0.2981 & 0.3478 & 0.4386\\
\hline
$\mathbb{E}[D_{\pi_{OUTER}}]$ & 0.2307 & 0.2622 & 0.3031 & 0.3576 & 0.4460\\
\hline
$\mathbb{E}[D_{\pi_{A-COMDP}}]$ & 0.2290 & 0.2689 & 0.3122 & 0.3740 & 0.4735\\
\hline
$\mathbb{E}[D_{\pi_{A-SIMPL}}]$ & 0.2218 & 0.2577 & 0.2924 & 0.3532 & 0.4473\\
\hline
\end{tabular}
\caption{\emph{EXAMPLE 4}: Average delays corresponding to average rewards in Table~\ref{table:R_result_case4}}
\label{table:D_result_case4}
\end{table}

\end{document}